\chardef\bslash=`\\ 
\def\verbatim{\interlinepenalty\@M \@verbatim
  \leftskip\@totalleftmargin\advance\leftskip2pc
  \frenchspacing\@vobeyspaces \@xverbatim}
\newtheorem{thm}{Theorem}[section]
\newtheorem{cor}[thm]{Corollary}
\newtheorem{prop}[thm]{Proposition}
\newtheorem{ex}[thm]{Example}
\newtheorem{rem}[thm]{Remark} 
\numberwithin{equation}{section}
\newcommand{\ZZ}{{\mathbb Z}}
\newcommand{\RR}{{\mathbb R}}
\begin{document}


\title{Nonlinear frames and sparse reconstructions in Banach spaces}
\author{Qiyu Sun 
and Wai-Shing Tang}

\thanks{The project is  partially supported by
 the National Science Foundation (DMS-1412413) and
 Singapore Ministry of Education Academic Research Fund Tier 1 Grant (No. R-146-003-193-112).
 }

\address{Qiyu Sun: Department of Mathematics,  University of Central Florida,
Orlando, FL 32816, USA. Email: qiyu.sun@ucf.edu}

\address{Wai-shing Tang, Department of Mathematics, National University of Singapore,  Singapore 119076
 Republic of Singapore. Email: mattws@nus.edu.sg}


\date{\today }


\keywords{Bi-Lipschitz property,  restricted bi-Lipschitz property, nonlinear frames,  nonlinear compressive sampling,   union of closed linear subspaces, 
 differential Banach subalgebras, restricted isometry property,  sparse approximation triple, sparse Riesz property,  greedy algorithm.
}
\maketitle

\begin{abstract}
In the first part of this paper, we consider nonlinear extension of frame theory
by introducing bi-Lipschitz maps $F$ between Banach spaces.
Our linear model of bi-Lipschitz maps  is the analysis operator
associated with Hilbert frames, $p$-frames, Banach frames, g-frames and fusion
frames. In general Banach space setting, stable algorithm to reconstruct
a signal $x$ from its noisy measurement $F(x)+\epsilon$
 may not exist.  
In this paper, we establish exponential convergence of two iterative reconstruction algorithms
when $F$ is not too far from some bounded below linear operator with bounded pseudo-inverse, and
when $F$ is a well-localized map between two Banach spaces with dense Hilbert subspaces.
The crucial step to prove the later conclusion is a novel fixed point theorem   for a
well-localized map on a Banach space.

In the second part of this paper, we consider
 stable reconstruction of sparse signals
in a union ${\bf A}$ of closed linear subspaces of a Hilbert space ${\bf H}$ from their nonlinear measurements.
We create an optimization framework called  sparse
approximation triple $({\bf A}, {\bf M}, {\bf H})$, and show that
the minimizer
$$x^*={\rm argmin}_{\hat x\in {\mathbf M}\ {\rm with} \  \|F(\hat x)-F(x^0)\|\le \epsilon} \|\hat x\|_{\mathbf M}$$
 provides a suboptimal approximation to the original sparse signal $x^0\in {\bf A}$
when the  measurement map $F$
 has the
  sparse Riesz property
  and  almost linear property on ${\mathbf A}$.
The above two  new properties is  also discussed in this paper when $F$ is not far away
from a linear measurement operator $T$ having the restricted isometry property.
\end{abstract}



\section{Introduction}

For a Banach  space ${\mathbf B}$, we denote its norm by $\|\cdot\|_{\mathbf B}$. 
A  map $F$ from one Banach space ${\mathbf B}_1$ to
another Banach space ${\mathbf B}_2$  is said to have
 {\em bi-Lipschitz property}  if
there exist two positive constants $A$ and $B$ such that
\begin{equation}\label{bilipschitzmap.def}
A\|x-y\|_{{\mathbf B}_1}\le \|F(x)-F(y)\|_{{\mathbf B}_2}\le B \|x-y\|_{{\mathbf B}_1}\quad {\rm for \ all} \ x, y\in {\mathbf B}_1.
\end{equation}
Our models of bi-Lipschitz maps between Banach spaces  are analysis operators associated with Hilbert frames, $p$-frames,  Banach frames, $g$-frames and fusion frames
 \cite{ast01,  chl99, ckl08,  Christensen03,  wsun06}.
  Our study is also motivated by
 nonlinear sampling theory and phase retrieval,
which have gained  substantial attention in recent
years
 \cite{balan09, bce06, bandeira.acha14, csvappear, dem08, eldar.acha14, kst95, sunaicm13}.
  The framework developed in the first part of  this paper could be considered
 as a nonlinear extension of frame theory. 

\smallskip
 Denote by
 ${\mathcal B}({\mathbf B}_1, {\mathbf B}_2)$
 the Banach space of all bounded linear operators from one Banach space ${\mathbf B}_1$ to another Banach space ${\mathbf B}_2$.
A continuous map $F$ from ${\mathbf B}_1$ to ${\mathbf B}_2$
is said to
be  
 {\em differentiable at $x\in {\mathbf B}_1$} if there exists a  linear operator,
denoted by $F'(x)$, in ${\mathcal B}({\mathbf B}_1, {\mathbf B}_2)$
 such that
\begin{equation*}
\lim_{y\to 0} \frac{\|F(x+y)-F(x)-F'(x)y\|_{{\mathbf B}_2}}{\|y\|_{{\mathbf B}_1}}=0;
\end{equation*}
 and to be {\em differentiable  on ${\mathbf B}_1$} if it is differentiable at every $x\in {\mathbf B}_1$ \cite{dieu69}.
For a differentiable bi-Lipschitz map $F$
from  ${\mathbf B}_1$ to  ${\mathbf B}_2$,
one may easily verify that its derivatives $F'(x),x\in {\mathbf B}_1$, are {\em uniformly stable}, i.e.,
 there exist two positive constants $A$ and $B$ such that
\begin{equation}\label{localstability.thm.eq1}
A\|y\|_{{\mathbf B}_1}\le \|F'(x)y\|_{{\mathbf B}_2}\le B \|y\|_{{\mathbf B}_1}\quad {\rm for \ all} \ x, y\in {\mathbf B}_1.
\end{equation}
The converse is not true in general. 
Then we have the following natural question. 

\noindent {\bf Question 1}: \ {\em When does a differentiable map with
the uniform stability property \eqref{localstability.thm.eq1} 
have the bi-Lipschitz property \eqref{bilipschitzmap.def}?}

We say that a  linear operator
 $T\in {\mathcal B}({\mathbf B}_1, {\mathbf B}_2)$ from  one Banach space ${\mathbf B}_1$ to another Banach space ${\mathbf B}_2$
  is {\em  bounded below} if
 \begin{equation}\label{Tstable.condition}
 \inf_{0\ne y\in {\mathbf B}_1} \frac
 {\|Ty\|_{{\mathbf B}_2}}{\|y\|_{{\mathbf B}_1}}>0.\end{equation}
For a continuously differentiable map $F$  not  too nonlinear, particularly
 not far away from a bounded below linear operator $T$,  a sufficient  condition
 for \eqref{bilipschitzmap.def}  is that
 for any $0\ne y\in {\mathbf B}_1$, the set ${\mathbb B}(y)$ of
unit vectors $F'(x)y/\|F'(x)y\|_{{\mathbf B}_2}, x\in {\mathbf B}_1$,
is contained in  a ball of radius
\begin{equation}\label{betaft.condition}
\beta_{F,T}<1\end{equation}
with center at  ${Ty}/{\|Ty\|_{{\mathbf B}_2}}$, where 
\begin{equation}\label{linearapproximation.center}
\beta_{F, T} := \sup_{0\ne y\in {\mathbf B}_1} \sup_{x\in {\mathbf B}_1}
\Big\|\frac{F'(x)y}{\|F'(x)y\|_{{\mathbf B}_2}}-\frac{Ty}{\|Ty\|_{{\mathbf B}_2}}\Big\|_{{\mathbf B}_2}.
\end{equation}
The above geometric requirement on  the radius
 $\beta_{F,T}$
  is optimal in  Banach space setting, but it  could be relaxed to
 \begin{equation} \label{frame.sufficientcondition.hilbert.eq1} \beta_{F,T}<\sqrt{2}\end{equation}
  in  Hilbert space setting,
which implies that for any  $0\ne y\in {\mathbf B}_1$, the set ${\mathbb B}(y)$ is contained in
 a right circular cone  with axis $Ty/\|Ty\|_{{\mathbf B}_2}$ and  angle strictly less than $\pi/2$.
 Detailed arguments of the above conclusions on a differentiable map
 are given in Appendix \ref{bilipschitz.appendix}.

\smallskip

 Denote by $F({\mathbf B}_1)\subset {\mathbf B}_2$
 the image of a map $F$ from  one Banach space ${\mathbf B}_1$ to another Banach space ${\mathbf B}_2$.
For a bi-Lipschitz map $F: {\mathbf B}_1\to {\mathbf B}_2$, as it is one-to-one, for any $y\in F({\mathbf B}_1)$ there exists a unique $x\in {\mathbf B}_1$  such that $F(x)=y$.
 Our next question is as follows:

 \noindent {\bf Question 2}:\ {\em Given  noisy observation  $z_\epsilon=F(x^0)+\epsilon$
 of $x^0\in {\mathbf B}_1$  corrupted by 
  $\epsilon\in {\mathbf B}_2$, 
    how to construct a suboptimal approximation $x\in {\mathbf B}_1$ 
     such that
   \begin{equation}\label{suboptimal.problem}
   \|x-x^0\|_{{\mathbf B}_1}\le C \|\epsilon\|_{{\mathbf B}_2},
   \end{equation}
   where $C$ is an absolute constant independent of  $x^0\in {\mathbf B}_1$ and $\epsilon\in {\mathbf B}_2$?}

  For a differentiable bi-Lipschitz map $F$
  not far away from a bounded below linear operator $T$,
define  $x_n, n\ge 0$, iteratively with arbitrary initial $x_0\in {\mathbf B}_1$ by 
 \begin{equation} \label{banachalgorithm1.intro}
 x_{n+1}=x_n-\mu T^\dag (F(x_n)-z_\epsilon),\ n\ge 0,\end{equation}
 where
  $T^\dag$ is a bounded  left-inverse of the linear operator $T$, and
 the relaxation factor $\mu$ satisfies
 $0<\mu\le (\sup_{x\in {\mathbf B}_1}\sup_{y\ne 0} {\|F'(x) y\|_{{\mathbf B}_2}}/{\|Ty\|_{{\mathbf B}_2}})^{-1}$.
 In Theorem \ref{banachalgorithm.thm} of
 Section \ref{algorithm.section},
we show 
 that the sequence $x_n, n\ge 0$, in the iterative algorithm \eqref{banachalgorithm1.intro}
converges exponentially to a suboptimal approximation element  $x\in {\mathbf B}_1$ satisfying \eqref{suboptimal.problem},   provided that
\begin{equation}\label{banachalgorithm.thm.eq1} \beta_{F,T}< (\|T\|_{{\mathcal B}({\mathbf B}_1,
{\mathbf B}_2)} \|T^\dag\|_{{\mathcal B}({\mathbf B}_2,
{\mathbf B}_1)})^{-1}.\end{equation}
The  above requirement \eqref{banachalgorithm.thm.eq1} about $\beta_{F,T}$ to guarantee  convergence of the iterative algorithm
\eqref{banachalgorithm1.intro}
is stronger than the sufficient condition
\eqref{betaft.condition}
 for the bi-Lipschitz property of the map $F$.
In Theorem \ref{hilbertalgorithm.thm}, we close that requirement gap  on  $\beta_{F,T}$ in Hilbert space setting
by introducing an iterative algorithm of Van-Cittert type,
 \begin{equation} \label{vancittert.algorithm.intro}
 u_{n+1}=u_n-\mu T^* (F(u_n)-z_\epsilon),\ n\ge 0,\end{equation}
 where
  $T^*$ is the conjugate of the linear operator $T$
   and $\mu>0$ is a 
  small
  relaxation factor.

\smallskip

In the iterative algorithm \eqref{banachalgorithm1.intro},
a left-inverse $T^\dag$ of the bounded below linear operator $T$   is used, but its existence
is not always assured in Banach space setting and its construction is not necessarily attainable even it exists.
This limits applicability of the iterative reconstruction algorithm \eqref{banachalgorithm1.intro}.
In fact,  for general Banach space setting, a stable reconstruction algorithm 
 may not exist
 \cite{chl99, christensenstoeva03}.
 On the other hand, a stable iterative algorithm is proposed
 in \cite{sunaicm13} to find sub-optimal approximation
  for well-localized nonlinear maps on sequence spaces $\ell^p(\ZZ), 2\le p\le \infty$.
 So we have the following question. 

\noindent {\bf Question 3}:\ {\em
  For what types of Banach spaces ${\mathbf B}_1$ and ${\mathbf B}_2$ and nonlinear maps $F$
  from ${\mathbf B}_1$ to ${\mathbf B}_2$
 does  there exist  a stable
   reconstruction  of $x^0\in {\mathbf B}_1$ from its nonlinear observation $y=F(x^0)\in {\mathbf B}_2$?
   }

We say that  a Banach space ${\mathbf B}$ with norm $\|\cdot\|_{\mathbf B}$
is {\em Hilbert-dense} (respectively {\em weak-Hilbert-dense})  if there exists a Hilbert subspace ${\mathbf H}\subset {\mathbf B}$
with norm $\|\cdot\|_{\mathbf H}$
such that ${\mathbf H}$ is dense in ${\mathbf B}$ in the strong topology
(respectively in the  weak topology) of ${\mathbf B}$ and
$$\sup_{0\ne x\in {\mathbf H}} \frac{\|x\|_{\mathbf B}}{\|x\|_{\mathbf H}}<\infty.$$
Our models of the above new concepts are the sequence spaces $\ell^p, 2\le p\le \infty$, for which $\ell^p$ with  $2\le p<\infty$ are Hilbert-dense and $\ell^\infty$ is weak-Hilbert-dense. 
 For  (weak-)Hilbert-dense Banach spaces ${\mathbf B}_1$ and ${\mathbf B}_2$ and a nonlinear map $F:{\mathbf B}_1\to {\mathbf B}_2$
that has certain localization property,  a stable
   reconstruction  of $x^0\in {\mathbf B}_1$ from its nonlinear observation $y=F(x^0)\in {\mathbf B}_2$
   is proposed in  Theorem \ref{banachwiener.algorithm.thm} of Section \ref{localizedmaps.section}.
   The crucial step 
is a new fixed point theorem for a  well-localized differentiable map  whose restriction on a dense Hilbert subspace 
is a contraction, see Theorem \ref{fixedpoint.thm}.

\bigskip

Let ${\mathbf H}_1$ and ${\mathbf H}_2$ be Hilbert spaces, and
let ${\mathbf A}=\cup_{i\in I} {\mathbf A}_i$
 be union of  closed linear subspaces ${\mathbf A}_i, i\in I$, of  the  Hilbert space ${\mathbf H}_1$.
 The second topic of this paper is to study
the {\em restricted bi-Lipschitz property} of a map $F: {\mathbf H}_1\to {\mathbf H}_2$
 on
${\mathbf A}$, which means that
 there exist two positive constants $A$ and $B$ such that
\begin{equation}\label{sparsebilipschitzmap.def}
A\|x-y\|_{{\mathbf H}_1}\le \|F(x)-F(y)\|_{{\mathbf H}_2}\le B \|x-y\|_{{\mathbf H}_1}\quad {\rm for \ all} \ x, y\in {\mathbf A}.
\end{equation}
This topic is motivated by sparse recovery problems on finite-dimensional spaces
\cite{crt06, ct05, eldarbook2012, foucartbook}. 
As we use the union ${\bf A}$
 of  closed linear spaces ${\mathbf A}_i, i\in I$,
to model sparse signals, the restricted bi-Lipschitz property
 of a map $F$ could be thought as nonlinear correspondence of restricted isometric property of a measurement matrix.
So the framework developed in  the second part
is nonlinear Banach space extension of the  finite-dimensional  sparse recovery problems.

\smallskip

  In the classical sparse recovery  setting \cite{crt06, ct05, eldarbook2012, foucartbook}, 
 the set of all $s$-sparse signals for some $s\ge 1$ is used as the set ${\mathbf A}$. In this case,
 elements in ${\mathbf A}$ can be described by their $\ell^0$-quasi-norms being less than or equal to $s$,
 and the sparse recovery problem could  reduce to the $\ell^0$-minimization problem. Due to
  numerical infeasibility of the $\ell^0$-minimization, a relaxation to  (non-)convex $\ell^q$-minimization with $0<q\le 1$
 was proposed, and more importantly it was proved that the $\ell^q$-minimization recovers sparse signals
when the linear measurement operator has certain restricted isometry property in $\ell^2$ \cite{crt06, ct05, 
foucart.acha09, foucartbook,
sunacha12}. This leads to the following question.

\noindent{\bf Question 4}: {\em How to create a general optimization framework  to recover  sparse signals?}

Given a Banach space ${\mathbf M}$, 
we say that a subset $K$ of ${\bf M}$ is
{\em proximinal} (\cite{borwein89, lau79}) if  every element  $x\in {\bf M}$ has a best approximator $y\in K$, that is,
$$\|x-y\|_{\mathbf M}=\inf_{z\in K} \|x-z\|_{\mathbf M}=:\sigma_{K, {\bf M}}(x).$$
 Given Hilbert spaces ${\mathbf H}_1$ and ${\mathbf H}_2$,
 a union
${\mathbf A}=\cup_{i\in I} {\mathbf A}_i$  of  closed linear subspaces ${\mathbf A}_i, i\in I$, of ${\mathbf H}_1$,
and a continuous map $F$ from ${\mathbf H}_1$ to ${\mathbf H}_2$, consider the following minimization problem in a Banach space ${\mathbf M}$, 
\begin{equation}\label{minimization.intro}
x^*={\rm argmin}_{\hat x\in {\mathbf M} \ {\rm with} \ F(\hat x)=z} \|\hat x\|_{\mathbf M}
\end{equation}
for any given observation $z:=F(x)$ for some $x\in {\mathbf A}$.
To make the above minimization problem suitable for stable reconstruction of $x\in {\mathbf A}$ from its observation $F(x)$, we  introduce the concept of a
 {\em sparse approximation triple}
$({\mathbf A}, {\mathbf M}, {\mathbf H}_1)$:

 \begin{itemize}

 \item[{(i)}] (Continuous imbedding property) \ The Banach space ${\mathbf M}$  contains
all elements in ${\mathbf A}$ and  it is contained in the Hilbert space ${\mathbf H}_1$, that is,
\begin{equation}\label{ambimbedding.intro}
{\mathbf A}\subset {\mathbf M}\subset {\mathbf H}_1,\end{equation}
and the  imbedding operators $i_{\mathbf A}: {\mathbf A}\to {\mathbf M}$ and
 $i_{\mathbf M}: {\mathbf M}\to {\mathbf H}_1$
     are bounded.

  \item[{(ii)}] (Proximinality property) The Banach space ${\mathbf M}$
  has ${\bf A}$ as its closed subset, and  all closed  subsets of ${\bf M}$
   being proximinal.

 \item[{(iii)}] (Common-best-approximator property)\ Given  any $i\in I$, a best approximator
 $x_{{\mathbf A}_i, \mathbf M}:={\rm argmin}_{\hat x\in {\mathbf A}_i}\|\hat x-x\|_{\mathbf M}$
of $x\in {\mathbf M}$ in the norm $\|\cdot\|_{\mathbf M}$ is also a best approximator in
 the norm $\|\cdot\|_{{\mathbf H}_1}$, that is,
\begin{equation} \label{bestapproximation.hm} x_{{\mathbf A}_i, \mathbf M}={\rm argmin}_{\hat x\in {\mathbf A}_i}\|\hat x-x\|_{{\mathbf H}_1}.
\end{equation}

 \item[{(iv)}] (Norm-splitting property) \ For the best approximator
 $x_{{\mathbf A}_i, \mathbf M}$
of $x\in {\mathbf M}$ in the norm $\|\cdot\|_{\mathbf M}$, 
\begin{equation}\label{optimization.thm1.eq3}
\|x\|_{\mathbf M}= \|x_{{\mathbf A}_i, \mathbf M}\|_{\mathbf M}+\|x-x_{{\mathbf A}_i, \mathbf M}\|_{\mathbf M},
\end{equation}
and
\begin{equation}\label{bestapproximation.hm.equiv}
\|x\|_{{\mathbf H}_1}^2= \|x_{{\mathbf A}_i, \mathbf M}\|_{{\mathbf H}_1}^2+\|x-x_{{\mathbf A}_i, \mathbf M}\|_{{\mathbf H}_1}^2.
\end{equation}

\item[{(v)}] (Sparse density property)  $\cup_{k\ge 1} k{\bf A}$ is dense in ${\bf H}_1$, where
 $$k{\mathbf A}:=\underbrace{{\mathbf A}+{\mathbf A}+\cdots+ {\mathbf A}}_{k \  {\rm times}}=\Big\{\sum_{i=1}^k x_i: \ x_1, \ldots, x_k\in {\mathbf A}\Big\},\ k\ge 1. $$

 \end{itemize}

 One may easily verify that these five properties  are satisfied for the triple $({\mathbf A}, {\mathbf M}, {\mathbf H}_1)$
in the classical sparse recovery setting, 
 where ${\mathbf A}$
is the set of all $s$-sparse vectors, ${\mathbf M}$ is the set  of all summable sequences, and
${\mathbf H}_1$ is the set of all square-summable sequences \cite{crt06, ct05, eldarbook2012, foucartbook}.

In this paper, we rescale the norm $\|\cdot\|_{\mathbf M}$ in the sparse approximation triple $({\bf A}, {\bf M}, {\bf H}_1)$  so that
the imbedding operator $i_{\mathbf M}$ has norm one, 
\begin{equation} \label{optimization.thm1.eq1}
\|i_{\bf M}\|_{{\mathcal B}({\bf M}, {\bf H}_1)}= 1,
\end{equation}
otherwise replacing it by $\|\cdot\|_{\bf M} \|i_{\bf M}\|_{{\mathcal B}({\bf M}, {\bf H}_1)}$.
Next we introduce two quantities  to measure
 {\em sparsity}
 \begin{equation} \label{rationB1M}
s_{\mathbf A}:=\|i_{\bf A}\|_{{\mathcal B}({\bf A}, {\bf M})}^2  
=\Big(\sup_{0\ne x\in {\mathbf A}} \frac    {\|x\|_{\mathbf M}} {\|x\|_{{\mathbf H}_1}}\Big)^2
\end{equation}
for signals in  ${\mathbf A}$, and {\em sparse approximation ratio}
\begin{equation} \label{secondapproximationestimate}
 a_{\mathbf A}:= \sup_{0\ne x\in {\mathbf M}}
\Big(\frac{\|u_{\mathbf A, \mathbf M}\|_{{\mathbf H}_1}}{
\|x_{\mathbf A, \mathbf M}\|_{\mathbf M}}\Big)^2\le 1
\end{equation}
for elements in ${\mathbf M}$,
where $x_{{\mathbf A}, {\mathbf M}}$
and $u_{\mathbf A, \mathbf M}\in {\bf A}$ are the first and second best approximators of $x$  respectively,
\begin{equation}\label{Abestapproximator}
\|x-x_{{\mathbf A}, {\mathbf M}}\|_{\mathbf M}= \sigma_{{\bf A}, {\bf M}}(x)\ {\rm and} \
\|x-x_{{\mathbf A}, {\mathbf M}}-u_{{\mathbf A}, {\mathbf M}}\|_{\mathbf M}= \sigma_{{\bf A}, {\bf M}}(x-x_{{\bf A}, {\bf M}}).\end{equation}
The upper bound estimate  in \eqref{secondapproximationestimate} 
holds, since
 \begin{eqnarray*} \|u_{{\bf A}, {\bf M}}\|_{\bf M} & = &  \|x-x_{{\bf A}, {\bf M}}\|_{\bf M}-\|x- x_{{\bf A}, {\bf M}}-u_{{\bf A}, {\bf M}}\|_{\bf M}\\
 & \le  &  \|x-u_{{\bf A}, {\bf M}}\|_{\bf M}-\|x- x_{{\bf A}, {\bf M}}-u_{{\bf A}, {\bf M}}\|_{\bf M}
\le \| x_{{\bf A}, {\bf M}}\|_{\bf M}, \ x\in {\bf M},
\end{eqnarray*}
by the norm splitting property \eqref{optimization.thm1.eq3}.
 In the classical sparse recovery setting with ${\mathbf A}$ being the set of all $s$-sparse signals,
  one may verify that
  $s_{\mathbf A}=s$ and $a_{\mathbf A}=1/s$, see  Appendix \ref{sat.appendix} for additional properties of
sparse approximation triples.

\smallskip
Having introduced the sparse approximation triple
$({\mathbf A}, {\mathbf M}, {\mathbf H}_1)$, our next question is  on optimization approach to sparse signal recovery, see \cite{beck2013, bdieee13, bdieee09, bdieee11, bdieee13,  ehler14, emieee09, ldieee08} for  the classical setting.

\noindent {\bf Question 5}:
{\em Given a sparse approximation triple $({\bf A}, {\bf M}, {\bf H}_1)$, for what type of maps $F: {\mathbf H}_1\to {\mathbf H}_2$ does the solution $x_{\mathbf M}^0$ of  the optimization problem
\begin{equation}\label{optimizationsolution}
x_{\mathbf M}^0:={\rm argmin}_{\hat x\in {\mathbf M}\ {\rm with} \ \|F(\hat x)-F(x^0)\|\le \varepsilon} \|\hat x\|_{\mathbf M},
\end{equation}
is  a suboptimal approximation to  the sparse signal $x^0$ in ${\mathbf A}$?}

In this paper, without loss of generality, we  assume that $F(0)=0$.
We say that  $F: {\mathbf H}_1\to {\mathbf H}_2$  has  {\em sparse Riesz property} if
\begin{equation}\label{sap.def}
 \|F(x)\|_{{\mathbf H}_2}\ge D^{-1} \big(\|x\|_{{\mathbf H}_1}- \beta  \sqrt{a_{\mathbf A}}\
  \sigma_{{\mathbf A}, {\mathbf M}}(x)\big), \
x\in {\mathbf M},
\end{equation}
with  $D, \beta>0$,
and that $F$ is {\em almost linear on ${\bf A}$} if
\begin{equation}\label{almostlinearproperty}
 \|F(x)-F(y)-F(x-y)\|_{{\mathbf H}_2}\le \gamma_1 \|x-y\|_{{\mathbf H}_1}+\gamma_2 \sqrt{a_{\mathbf A}}
 (\sigma_{\mathbf A, \mathbf M}(x)+ \sigma_{\mathbf A, \mathbf M}(y)), \ x, y\in {\mathbf M},
\end{equation}
with $\gamma_1, \gamma_2\ge 0$.    Combining the sparse Riesz property and almost linear property of a map $F$ gives
 \begin{eqnarray}
 \|F(x)-F(y)\|_{{\mathbf H}_2} & \ge &  (D^{-1}-\gamma_1) \|x-y\|_{{\mathbf H}_1}\nonumber\\
 & & - (D^{-1}\beta+\gamma_2) \sqrt{a_{\mathbf A}}
  (\sigma_{\mathbf A, \mathbf M}(x)+ \sigma_{\mathbf A, \mathbf M}(y)), \ x, y\in {\mathbf M},
 \end{eqnarray}
and hence $F$ has the  restricted bi-Lipschitz property on ${\bf A}$,
 \begin{equation*}
 \|F(x)-F(y)\|_{{\mathbf H}_2}  \ge   (D^{-1}-\gamma_1) \|x-y\|_{{\mathbf H}_1}, \ \ x, y\in {\mathbf A},
 \end{equation*}
 when $\gamma_1$ and $D$ satisfy $D\gamma_1<1$.
 In  
 Section \ref{optimization.section}, 
 we show that
the solution $x_{\mathbf M}^0$ of  the optimization problem
\eqref{optimizationsolution}
is  a suboptimal approximation to  the  signal $x^0$ in ${\mathbf M}$, i.e.,
there exist positive constants $C_1$ and $C_2$ such that
\begin{equation}\label{errorestimate.last}
\|x_{\bf M}^0-x^0\|_{{\mathbf H}_1}\le C_1 \sqrt{a_{\bf A}} \sigma_{{\bf A}, {\bf M}}(x^0)+ C_2 \epsilon,
\end{equation}
provided that   $F$ has the sparse Riesz property \eqref{sap.def} and  almost linear  property \eqref{almostlinearproperty}
with
  $D, \beta, \gamma_1$ and $\gamma_2$ satisfying
$$ 1-2D\gamma_1-(D\gamma_1+D\gamma_2+\beta)\sqrt{a_{\mathbf A}s_{\mathbf A}}>0.
$$
We remark that the approximation error estimate \eqref{errorestimate.last} implies the sparse Riesz property
 \eqref{sap.def} for the map $F$,
$$\|F(x)\|_{{\mathbf H}_2}\ge C_2^{-1} \big(\|x\|_{{\mathbf H}_1}- C_1  \sqrt{a_{\bf A}} \sigma_{{\bf A}, {\bf M}}(x)\big),\  x\in {\bf M},$$
which follows from \eqref{errorestimate.last} by taking $x^0=x$ and $\epsilon=\|F(x^0)\|_{{\bf H}_2}$.

The sparse Riesz property 
 was introduced in \cite{sunieee11} with a different name, sparse approximation property,  for
   the classical sparse recovery setting; and the almost linear property was studied in \cite{gevirtz82, junpark96} for bi-Lipschitz maps between Banach spaces.
 In Section \ref{sparserieszproperty.section}, we consider the following  question.

\noindent {\bf Question 6}: {\em When does a map
  $F: {\mathbf H}_1\to {\mathbf H}_2$ have the sparse Riesz property \eqref{sap.def}
  and the almost linear property \eqref{almostlinearproperty}?
 }

We say that
 a linear operator $T\in {\mathcal B}({\mathbf H}_1, {\mathbf H}_2)$ has the {\em restricted isometry property} (RIP)
  on $2{\mathbf A}$ if
 \begin{equation} \label{sparsealgorithm.thm.eq2}
(1-\delta_{2\mathbf A}(T)) \|z\|_{{\mathbf H}_1}^2\le  \|Tz\|_{{\mathbf H}_2}^2\le (1+ \delta_{2\mathbf A}(T))\|z\|_{{\mathbf H}_1}^2\quad {\rm for \ all}\ z\in 2{\mathbf A},
 \end{equation}
where $\delta_{2\mathbf A}(T)\in [0, 1)$ \cite{crt06, ct05}.
A nonlinear map $F: {\mathbf H}_1\longmapsto {\mathbf H}_2$, not far away from a linear operator $T$
with the  restricted isometry property \eqref{sparsealgorithm.thm.eq2} in the sense that
$$
\gamma_{F,T}(2{\bf A})<\sqrt{1-\delta_{2\mathbf A}(T)} $$
 has the restricted bi-Lipschitz property \eqref{sparsebilipschitzmap.def} on ${\mathbf A}$,
 where
 \begin{equation}\label{tildegammafta}
\gamma_{F,T}(k{\bf A}):=\sup_{x\in {\mathbf M}}\sup_{z\in k{\mathbf A}} \frac{ \|F(x+z)-F(x)-Tz\|_{{\mathbf H}_2}}{\|z\|_{{\mathbf H}_1}}, \ k\ge 1.
\end{equation}
 In Section \ref{sparserieszproperty.section}, we show that
 $F$ has the sparse Riesz property \eqref{sap.def} when
 $$\gamma_{F, T}(2{\bf A})< \frac{\sqrt{2}}{2}-\sqrt{\delta_{2{\bf A}}(T)} <\sqrt{1-\delta_{2\mathbf A}(T)},   $$
 and the almost linear property \eqref{almostlinearproperty}
 when
 $\gamma_{F, T}(4{\bf A})< \infty$.  Therefore
  signals $x\in {\bf M}$
could be reconstructed  from their nonlinear measurements $F(x)$ when $F$ is not far from a linear operator $T$
with small restricted isometry  constant $\delta_{2{\bf A}}(T)$, see Theorem \ref{optimization.thm2}.

\section{Iterative Reconstruction Algorithms}\label{algorithm.section}

For a Banach/Hilbert  space ${\mathbf B}$, we also denote its norm  by $\|\cdot\|$ for brevity.
In this section, we  establish
exponential convergence of the iterative reconstruction algorithms
\eqref{banachalgorithm1.intro}  and
 \eqref{vancittert.algorithm.intro}.

\begin{thm}\label{banachalgorithm.thm}
Let  ${\mathbf B}_1$ and ${\mathbf B}_2$ be Banach spaces,
$F$ be a differentiable map from ${\mathbf B}_1$ to ${\mathbf B}_2$
with its derivative being continuous and uniformly stable,
and let
$T\in {\mathcal B}({\mathbf B}_1, {\mathbf B}_2)$
be bounded below. 
 Assume that \eqref{banachalgorithm.thm.eq1}
 holds for some  bounded left-inverse
$T^\dag: {\mathbf B}_2\to {\mathbf B}_1$  of the linear operator $T$.
 Given positive relaxation factor $\mu>0$, an initial $x_0\in {\mathbf B}_1$ and a noisy observation data $z_\epsilon:=F(x^0)+\epsilon\in {\mathbf B}_2$  for some
 $x^0\in {\mathbf B}_1$ with additive noise $\epsilon\in {\mathbf B}_2$, define $x_n, n\ge 1$, iteratively by
 \eqref{banachalgorithm1.intro}.
Then  $x_n, n\ge 0$,
converges
exponentially to some $x_\infty\in {\mathbf B}_1$ with
\begin{equation}  \label{banachalgorithm.thm.eq2}
\|x_\infty-x^0\|\le \frac{\|T^\dag\|}{1- \beta_{F,T}\|T\| \|T^\dag\| }
\Big(\inf_{x\in {\mathbf B}_1}\inf_{0\ne y\in {\mathbf B}_1}
\frac{\|F'(x)y\|}{\|Ty\|}\Big)^{-1}  \|\epsilon \|,
\end{equation}
provided that
\begin{equation}\label{alphacondition}
0<\mu\le \Big(\sup_{ x\in {\mathbf B}_1}\sup_{0\ne y\in {\mathbf B}_1}
 \frac {\|F'(x)y\|}{\|Ty\|}\Big)^{-1}.\end{equation}
 Moreover,
\begin{equation}\label{banachalgorithm.thm.eq4}
 \|x_n- x_\infty\| \le   \frac{\|T^\dag\| \|F(x_0)-z_\epsilon\|}{1-\beta_{F,T}\|T\|\|T^\dag\|}
 \Big(\inf_{x\in {\mathbf B}_1}\inf_{0\ne y\in {\mathbf B}_1}
\frac{\|F'(x)y\|}{\|Ty\|}\Big)^{-1}
r_0^n,\quad n\ge 1,
\end{equation}
where
$$r_0=1-\mu (1- \beta_{F,T}\|T\|\|T^\dag\| )
\Big(\inf_{x\in {\mathbf B}_1}\inf_{0\ne y\in {\mathbf B}_1}
\frac{\|F'(x)y\|}{\|Ty\|}\Big)
\in (0, 1).$$
 \end{thm}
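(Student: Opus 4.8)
The plan is to run the standard contraction-mapping analysis for the fixed-point iteration $x_{n+1} = G(x_n)$, where $G(x) := x - \mu T^\dag(F(x) - z_\epsilon)$, but to replace the naive Lipschitz estimate on $G$ by one that exploits the smallness of $\beta_{F,T}$ in \eqref{banachalgorithm.thm.eq1}. First I would show that $G$ is Lipschitz with constant $r_0 \in (0,1)$. Writing $F(x) - F(y) = \int_0^1 F'(y + t(x-y))(x-y)\,dt$ (valid since $F'$ is continuous), one gets
\begin{equation*}
G(x) - G(y) = (x-y) - \mu T^\dag \int_0^1 F'(y+t(x-y))(x-y)\,dt.
\end{equation*}
The key algebraic trick is to insert $T$: write $F'(\xi)(x-y) = \|F'(\xi)(x-y)\| \cdot v(\xi)$ where $v(\xi)$ is the corresponding unit vector, and compare $v(\xi)$ to $T(x-y)/\|T(x-y)\|$. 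By the definition \eqref{linearapproximation.center} of $\beta_{F,T}$, the difference has norm at most $\beta_{F,T}$, so $T^\dag F'(\xi)(x-y)$ is close (within a factor involving $\beta_{F,T}\|T^\dag\|$ times $\|F'(\xi)(x-y)\|$) to $\frac{\|F'(\xi)(x-y)\|}{\|T(x-y)\|}(x-y)$ — note $T^\dag T = I$. Substituting into the integral and using $0 < \mu \le \big(\sup_{x,y}\|F'(x)y\|/\|Ty\|\big)^{-1}$ to keep the coefficient $\mu \|F'(\xi)(x-y)\|/\|T(x-y)\|$ in $(0,1]$, the "diagonal" part $(x-y) - \mu c (x-y)$ has norm $\le (1-\mu c_{\min})\|x-y\|$ with $c_{\min} = \inf_{x,y}\|F'(x)y\|/\|Ty\|$, and the "error" part is bounded by $\mu \beta_{F,T}\|T\|\|T^\dag\| c_{\min}\|x-y\|$ after a further estimate $\|F'(\xi)(x-y)\|/\|T(x-y)\| \le \|T\| c_{\min} \cdot (\text{something})$ — more carefully, one bounds the error coefficient by $\mu\beta_{F,T}\|T\|\|T^\dag\|\sup\|F'\cdot\|/\|T\cdot\| \le \beta_{F,T}\|T\|\|T^\dag\|$, giving overall Lipschitz constant $r_0 = 1 - \mu(1 - \beta_{F,T}\|T\|\|T^\dag\|)c_{\min}$, which lies in $(0,1)$ precisely because of \eqref{banachalgorithm.thm.eq1} and \eqref{alphacondition}. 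This bookkeeping — getting the two competing coefficients to combine into exactly $r_0$ — is the step I expect to be the main obstacle, mostly because the homogenization by unit vectors has to be done carefully so that the $\beta_{F,T}$ term multiplies the right quantity.

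Once $G$ is an $r_0$-contraction on the Banach space ${\mathbf B}_1$, the Banach fixed point theorem gives a unique $x_\infty$ with $G(x_\infty) = x_\infty$, i.e. $T^\dag(F(x_\infty) - z_\epsilon) = 0$, and the iterates converge geometrically: $\|x_n - x_\infty\| \le r_0^n \|x_0 - x_\infty\|/(1) $, more usefully $\|x_n - x_\infty\| \le \frac{r_0^n}{1-r_0}\|x_1 - x_0\| = \frac{r_0^n}{1-r_0}\mu\|T^\dag(F(x_0) - z_\epsilon)\| \le \frac{r_0^n}{1-r_0}\mu\|T^\dag\|\|F(x_0) - z_\epsilon\|$. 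Plugging in $1 - r_0 = \mu(1-\beta_{F,T}\|T\|\|T^\dag\|)c_{\min}$ cancels the $\mu$ and yields exactly \eqref{banachalgorithm.thm.eq4}.

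Finally, for the error bound \eqref{banachalgorithm.thm.eq2}: from $T^\dag(F(x_\infty) - z_\epsilon) = 0$ and $z_\epsilon = F(x^0) + \epsilon$ we get $T^\dag(F(x_\infty) - F(x^0)) = T^\dag\epsilon$. Applying the integral representation of $F(x_\infty) - F(x^0)$ together with the same comparison-to-$T$ estimate as above, the left side is bounded below by $(1 - \beta_{F,T}\|T\|\|T^\dag\|)c_{\min}\|x_\infty - x^0\|$ (the same lower bound that made $G$ a contraction, now read as an injectivity-type estimate on the map $x \mapsto T^\dag F(x)$), while the right side is at most $\|T^\dag\|\|\epsilon\|$. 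Rearranging gives
\begin{equation*}
\|x_\infty - x^0\| \le \frac{\|T^\dag\|}{1 - \beta_{F,T}\|T\|\|T^\dag\|}\, c_{\min}^{-1}\,\|\epsilon\|,
\end{equation*}
which is \eqref{banachalgorithm.thm.eq2}. I would present the contraction estimate as a standalone lemma (since the same computation is reused in the error bound), state the three conclusions in order — contraction constant $r_0$, existence of $x_\infty$ and geometric rate, then the noise estimate — and note that $r_0 \in (0,1)$ follows immediately from the hypotheses without extra work.
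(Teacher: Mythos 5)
Your proposal is correct and follows essentially the same route as the paper: the paper proves exactly your contraction estimate $\|x_{n+1}-x_n\|\le r_0\|x_n-x_{n-1}\|$ by inserting $T^\dag T=I$ and splitting off the $\beta_{F,T}$-error term, then passes to the limit to obtain $T^\dag(F(x_\infty)-z_\epsilon)=0$ and reuses the same lower bound to derive \eqref{banachalgorithm.thm.eq2}, and sums the geometric series for \eqref{banachalgorithm.thm.eq4}. The bookkeeping point you flagged is resolved exactly as you suspect: keep the common factor $\big(\int_0^1\|F'(x+t(y-x))(y-x)\|\,dt\big)/\|T(y-x)\|$ in both the diagonal and the error terms, combine them into $1-\mu(1-\beta_{F,T}\|T\|\|T^\dag\|)$ times that factor, and only then replace the factor by its infimum, which is legitimate because \eqref{banachalgorithm.thm.eq1} makes the coefficient of that factor negative.
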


\begin{proof}
Set
$\alpha_n=\int_0^1 \|F'(x_{n-1}+t(x_n-x_{n-1}))(x_n-x_{n-1})\|dt, n\ge 1$. Then for $n\ge 1$,
\begin{eqnarray*}\label{banachalgorithm.thm.pf.eq1}
& & \|x_{n+1}-x_n\| \nonumber\\
& = & \|(x_n-x_{n-1})-\mu  T^\dag (F(x_n)-F(x_{n-1}))\|\nonumber\\
&\le & \Big\| x_n-x_{n-1}-\mu\alpha_n
\frac{T^\dag T(x_n-x_{n-1})}{\|T(x_n-x_{n-1})\|}\Big\| \nonumber\\
& &   +
\mu  \int_0^1 \|F'(x_{n-1}+t(x_n-x_{n-1}))(x_n-x_{n-1})\|\nonumber\\
& & \quad \times \Big\| T^\dag \Big(
\frac {F'(x_{n-1}+t(x_n-x_{n-1}))(x_n-x_{n-1})}{ \|F'(x_{n-1}+t(x_n-x_{n-1}))(x_n-x_{n-1})\|}-\frac{T(x_n-x_{n-1})}{\|T(x_n-x_{n-1})\|}\Big)\Big\| dt
\nonumber\\
&\le & \Big(1-\mu \frac{\alpha_n}{\|T(x_n-x_{n-1})\|}\Big)\|x_n-x_{n-1}\|   +
\mu \alpha_n \beta_{F, T}\|T^\dag\|
\nonumber\\
& \le &
\Big(1-\mu (1-\beta_{F,T}\|T\|\|T^\dag\| )
 \frac{\alpha_n}{\|T(x_n-x_{n-1})\|}\Big)\|x_n-x_{n-1}\|\nonumber\\
& \le & r_0 \|x_n-x_{n-1}\|
\end{eqnarray*}
by  \eqref{banachalgorithm1.intro}, \eqref{banachalgorithm.thm.eq1}  and \eqref{alphacondition}.
This proves  the exponential convergence of $x_n, n\ge 0$, to its limit  $ x_\infty\in {\mathbf B}_1$.

Taking limit in \eqref{banachalgorithm1.intro} gives 
   \begin{equation} \label{banachalgorithm.thm.pf.eq2}
 T^\dag (F( x_\infty)-F(x^0))=T^\dag \epsilon,
 \end{equation}
because  \begin{equation*}
 \|T^\dag(F(x_\infty)-z_\epsilon)\|\le  \|T^\dag(F(x_\infty)-F(x_n))\|+ \|x_{n+1}-x_n\|/\mu\to 0, \ n\to \infty.
  \end{equation*}
Then it follows from \eqref{banachalgorithm.thm.eq1}, \eqref{alphacondition} and \eqref{banachalgorithm.thm.pf.eq2} that
 \begin{eqnarray}
& &
\Big(\frac{\int_0^1 \|F'(x_\infty+t(x^0-x_\infty)) (x^0-x_\infty) \| dt}
{\|T(x^0-x_\infty)\|}\Big) 
 \|x^0-x_\infty\|\nonumber\\
& = &
\Big\| T^\dag \Big(\int_0^1 \|F'(x_\infty+t(x^0-x_\infty)) (x^0-x_\infty) \|
\frac{  T(x^0-x_\infty)}
{\|T(x^0-x_\infty)\|} dt \Big)\Big\|
\nonumber\\
& \le &\beta_{F,T} \|T^\dag\| \Big(\int_0^1 \|F'(x_\infty+t(x^0-x_\infty)) (x^0-x_\infty) \| dt\Big)\nonumber\\
& & + \Big\|T^\dag\int_0^1 F'(x_\infty+t(x^0-x_\infty)) (x^0-x_\infty)  dt\Big\|\nonumber\\
&\le  & \beta_{F,T} \|T\|\|T^\dag\|
 \Big(\frac{\int_0^1 \|F'(x_\infty+t(x^0-x_\infty) (x^0-x_\infty) \| dt}{\|T(x^0-x_\infty)\|}\Big)\nonumber\\
 & & \times
 \|x^0-x_\infty\|+\|T^\dag\| \|\epsilon\|,
 \end{eqnarray}
 which proves \eqref{banachalgorithm.thm.eq2}.

Observe that
 \begin{equation*}
 \|x_n-x_\infty\|\le \sum_{k=n}^\infty \|x_{k+1}-x_k\|\le
 \frac{\|x_1-x_0\|}{1-r_0} r_0^n\le \frac{\mu \|T^\dag\|\|F(x_0)-z_\epsilon\|}{1-r_0} r_0^n.
 \end{equation*}
Then the estimate  \eqref{banachalgorithm.thm.eq4} follows.
  \end{proof}

The iterative algorithm \eqref{banachalgorithm1.intro} in
Theorem \ref{banachalgorithm.thm}  provides a stable  reconstruction
of  $x\in {\mathbf B}_1$ from its noisy observation $F(x)+\epsilon\in {\mathbf B}_2$
when $\beta_{F,T}<(\|T\|\|T^\dag\|)^{-1}$, a requirement stronger than $\beta_{F,T}<1$
 that guarantees the bi-Lipschitz property for the map $F$, see  Theorem \ref{frame.sufficientcondition}
 in Appendix \ref{bilipschitz.appendix}.
Next we 
close that requirement gap on   $\beta_{F,T}$  in Hilbert space setting, cf. Theorem \ref{frame.sufficientcondition.hilbert}. 

\begin{thm}\label{hilbertalgorithm.thm}
Let  ${\mathbf H}_1$ and ${\mathbf H}_2$ be  Hilbert spaces,
$F$ be a differentiable map from ${\mathbf H}_1$ to ${\mathbf H}_2$
with its derivative being continuous and  satisfying
\eqref{localstability.thm.eq1}, and let
$T\in {\mathcal B}({\mathbf H}_1, {\mathbf H}_2)$ satisfy \eqref{Tstable.condition}
and \eqref{frame.sufficientcondition.hilbert.eq1}.
 Given relaxation  factor $\mu >0$, an initial $u_0\in {\mathbf H}_1$, and noisy data $z_\epsilon:=F(u^0)+\epsilon$ for some $u^0\in {\mathbf H}_1$
  with additive noise $\epsilon\in {\mathbf H}_2$, define $u_n, n\ge 1$, iteratively by
  \eqref{vancittert.algorithm.intro}.
Then  $u_n, n\ge 0$,
converges
exponentially to some $u_\infty\in {\mathbf H}_1$ with
\begin{equation}  \label{hilbertalgorithm.thm.eq2}
  \|u_\infty-u^0\|\le \frac{2\|T\|}{(2-\beta_{F,T}^2)\big(\inf_{\|u\|=1} \|Tu\|\big)\big(\inf_{v\in {\mathbf H}_1}\inf_{\|u\|=1} \|F'(v)u\|\big)}
 \|\epsilon\|,
\end{equation}
provided that
\begin{equation}\label{betahilbertcondition}
0<\mu< (2-\beta_{F,T}^2) \frac{\big(\inf_{\|u\|=1} \|Tu\|\big)\big(\inf_{v\in {\mathbf H}_1}\inf_{\|u\|=1} \|F'(v)u\|\big)}
{\|T\|^2\big(\sup_{v\in {\mathbf H}_1}\|F'(v)\|\big)^2}.\end{equation}
\end{thm}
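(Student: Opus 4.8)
The plan is to read the Van--Cittert iteration \eqref{vancittert.algorithm.intro} as fixed-point iteration $u_{n+1}=G(u_n)$ for the map $G(u):=u-\mu T^*(F(u)-z_\epsilon)$ on $\mathbf{H}_1$, to show that \eqref{betahilbertcondition} forces $G$ to be a contraction, and then to extract exponential convergence from the Banach fixed point theorem and the bound \eqref{hilbertalgorithm.thm.eq2} from the fixed-point equation. Throughout I would write $A_F:=\inf_{v\in\mathbf{H}_1}\inf_{\|u\|=1}\|F'(v)u\|$, $A_T:=\inf_{\|u\|=1}\|Tu\|$ and $B_F:=\sup_{v\in\mathbf{H}_1}\|F'(v)\|$; these are positive and finite by \eqref{localstability.thm.eq1} and \eqref{Tstable.condition}. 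I would also record at the outset that continuity of $F'$ legitimizes the mean-value representation $F(u)-F(v)=\big(\int_0^1 F'(v+t(u-v))\,dt\big)(u-v)$ used repeatedly below.

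The main step is the contraction estimate. For $u,v\in\mathbf{H}_1$ put $w=u-v$; then $G(u)-G(v)=w-\mu T^*\big(\int_0^1 F'(v+tw)\,dt\big)w$, and expanding in the inner product of $\mathbf{H}_1$ gives $\|G(u)-G(v)\|^2=\|w\|^2-2\mu\int_0^1\langle F'(v+tw)w,Tw\rangle\,dt+\mu^2\big\|T^*\big(\int_0^1 F'(v+tw)\,dt\big)w\big\|^2$. The cross term is the crucial one: from the definition \eqref{linearapproximation.center} of $\beta_{F,T}$ we have, for every $x$ and every $w\ne 0$, $\big\|\frac{F'(x)w}{\|F'(x)w\|}-\frac{Tw}{\|Tw\|}\big\|\le\beta_{F,T}$, and squaring this identity in a Hilbert space yields $\langle F'(x)w,Tw\rangle\ge\frac{2-\beta_{F,T}^2}{2}\|F'(x)w\|\,\|Tw\|\ge\frac{2-\beta_{F,T}^2}{2}A_F A_T\|w\|^2$, where positivity of $2-\beta_{F,T}^2$ is precisely \eqref{frame.sufficientcondition.hilbert.eq1}. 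For the last term the crude bound $\big\|T^*\big(\int_0^1 F'(v+tw)\,dt\big)w\big\|\le\|T\|\,B_F\,\|w\|$ suffices. Collecting terms gives $\|G(u)-G(v)\|^2\le\big(1-\mu(2-\beta_{F,T}^2)A_F A_T+\mu^2\|T\|^2 B_F^2\big)\|w\|^2$, and the coefficient here is a nonnegative number strictly below $1$ exactly when $\mu$ lies in the interval \eqref{betahilbertcondition}: the quadratic in $\mu$ equals $1$ at the endpoints $\mu=0$ and $\mu=(2-\beta_{F,T}^2)A_FA_T/(\|T\|^2B_F^2)$ of an upward parabola, hence is $<1$ strictly between them, and it is $\ge 0$ because $A_F\le B_F$ and $A_T\le\|T\|$. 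Thus $G$ is a contraction with ratio $r_0=\big(1-\mu(2-\beta_{F,T}^2)A_F A_T+\mu^2\|T\|^2 B_F^2\big)^{1/2}\in[0,1)$.

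By the Banach fixed point theorem $u_n\to u_\infty$ with $\|u_n-u_\infty\|\le r_0^n(1-r_0)^{-1}\|u_1-u_0\|$, and $u_\infty$ is the unique fixed point, so $T^*(F(u_\infty)-z_\epsilon)=0$, i.e.\ $T^*(F(u_\infty)-F(u^0))=T^*\epsilon$ since $z_\epsilon=F(u^0)+\epsilon$. To obtain \eqref{hilbertalgorithm.thm.eq2} I would set $\delta=u_\infty-u^0$, write $F(u_\infty)-F(u^0)=\big(\int_0^1 F'(u^0+t\delta)\,dt\big)\delta$, and pair the identity $T^*\big(\int_0^1 F'(u^0+t\delta)\,dt\big)\delta=T^*\epsilon$ with $\delta$. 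The same Hilbert-space computation as above gives $\langle\big(\int_0^1 F'(u^0+t\delta)\,dt\big)\delta,T\delta\rangle\ge\frac{2-\beta_{F,T}^2}{2}A_F A_T\|\delta\|^2$ on the left, while on the right $\langle T^*\epsilon,\delta\rangle=\langle\epsilon,T\delta\rangle\le\|\epsilon\|\,\|T\|\,\|\delta\|$; dividing by $\|\delta\|$ produces $\|\delta\|\le\frac{2\|T\|}{(2-\beta_{F,T}^2)A_F A_T}\|\epsilon\|$, which is \eqref{hilbertalgorithm.thm.eq2}.

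There is no deep obstacle here; the one point requiring care is the passage from the geometric quantity $\beta_{F,T}$ to the quantitative lower bound $\langle F'(x)w,Tw\rangle\ge\frac{2-\beta_{F,T}^2}{2}\|F'(x)w\|\,\|Tw\|$ — this is the step that exploits the Hilbert structure and is the reason the admissible threshold relaxes from $1$ to $\sqrt{2}$, in line with the remark preceding \eqref{frame.sufficientcondition.hilbert.eq1} — together with the bookkeeping needed to confirm that the range for $\mu$ in \eqref{betahilbertcondition} is precisely the one making the quadratic-in-$\mu$ contraction coefficient lie in $[0,1)$.
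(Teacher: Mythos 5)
Your proposal is correct and follows essentially the same route as the paper: the same Hilbert-space identity converting $\beta_{F,T}<\sqrt{2}$ into the lower bound $\langle F'(x)w,Tw\rangle\ge\tfrac{2-\beta_{F,T}^2}{2}\|F'(x)w\|\,\|Tw\|$, the same quadratic-in-$\mu$ contraction coefficient $1-\mu(2-\beta_{F,T}^2)A_FA_T+\mu^2\|T\|^2B_F^2$, and the same derivation of \eqref{hilbertalgorithm.thm.eq2} from the limit identity $T^*(F(u_\infty)-z_\epsilon)=0$. The only difference is presentational: the paper phrases the argument via strict monotonicity and Lipschitz continuity of $S:=T^*F$ and cites ``standard arguments,'' while you expand $\|G(u)-G(v)\|^2$ explicitly, which is exactly what those standard arguments amount to.
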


\begin{proof} Define $S:=T^*F$.
Observe that
\begin{eqnarray*}
\langle F'(u)v, Tv\rangle & = &  \|F'(u) v\|\|Tv\|
\Big( 1-\frac{1}{2} \Big\|\frac{F'(u)v}{\|F'(u)v\|}-\frac{Tv}{\|Tv\|}\Big\|^2\Big)\\
&\ge & \frac{2-(\beta_{F,T})^2}{2} \|F'(u) v\|\|Tv\|.  
\end{eqnarray*}
Therefore
\begin{eqnarray} \label{hilbertalgorithm.thm.pf.eq1}
 & &  \langle v_1-v_2, S(v_1)-S(v_2)\rangle
  =  \langle F(v_1)-F(v_2), T(v_1-v_2)\rangle\nonumber\\
  & = & \int_0^1 \langle F'(v_2+t(v_1-v_2)(v_1-v_2), T(v_1-v_2)\rangle  dt\nonumber \\
& \ge &  \frac{2-(\beta_{F,T})^2}{2} \Big(\int_0^1 \|F'(v_2+t(v_1-v_2) (v_1-v_2)\| dt\Big)\|T(v_1-v_2)\| \nonumber\\
& \ge &  
  \frac{2-(\beta_{F,T})^2}{2}
 \Big(\inf_{\|u\|=1} \|Tu\|\Big)\Big(\inf_{v\in {\mathbf H}_1}\inf_{\|u\|=1} \|F'(v)u\|\Big)
  \|v_1-v_2\|^2, \end{eqnarray}
  where $A$ is the lower stability bound in \eqref{localstability.thm.eq1}.
Also one may easily verify that
 \begin{equation} \label{hilbertalgorithm.thm.pf.eq2}
\|S(v_1)-S(v_2)\|\le  \|T\|\big(\sup_{v\in {\mathbf H}_1} \|F'(v)\|\big) \|v_1-v_2\|, \ v_1, v_2\in {\mathbf H}_1. \end{equation}
Therefore by standard arguments (see for instance  \cite{nonlinearbook}), we obtain from  \eqref{hilbertalgorithm.thm.pf.eq1} and \eqref{hilbertalgorithm.thm.pf.eq2} that
\begin{equation*}
\|u_{n+1}-u_n\|^2  
 \le   r_1\|u_n-u_{n-1}\|^2, \ n\ge 1,
 \end{equation*}
 where
 \begin{eqnarray*} r_1 & = & 1-\mu (2-\beta_{F,T}^2)
 \Big(\inf_{\|u\|=1} \|Tu\|\Big)\Big(\inf_{v\in {\mathbf H}_1}\inf_{\|u\|=1} \|F'(v)u\|\Big)\nonumber\\
 & & \quad
 +\mu^2  \|T\|^2\Big(\sup_{v\in {\mathbf H}_1}\|F'(v)\|\Big)^2\in (0, 1).
   \end{eqnarray*}
This proves the exponential convergence of $u_n, n\ge 0$, in  the iterative algorithm \eqref{vancittert.algorithm.intro}.

 Taking limit in the  algorithm \eqref{vancittert.algorithm.intro}  leads to
 \begin{equation}\label{hilbertalgorithm.thm.pf.eq3}
T^* (F(u_\infty)-w_\epsilon)=0, \end{equation}
 where $u_\infty$ is  the limit of  the sequence $u_n, n\ge 0$.
 Thus
 \begin{eqnarray*}
& &   \frac{2-\beta_{F,T}^2}{2}
 \big(\inf_{\|u\|=1} \|Tu\|\big)\big(\inf_{v\in {\mathbf H}_1}\inf_{\|u\|=1} \|F'(v)u\|\big)
\|u_\infty-u^0\|^2\\
& \le &
\langle  u_\infty- u^0,  T^* (F(u_\infty)-F(u^0))\rangle
=
\langle u^0-u_\infty,  T^* \epsilon\rangle\nonumber\\
& \le &  \|T\| \|u^0-u_\infty\|\|\epsilon\|
 \end{eqnarray*}
 by \eqref{hilbertalgorithm.thm.pf.eq1} and \eqref{hilbertalgorithm.thm.pf.eq3}. This proves
 \eqref{hilbertalgorithm.thm.eq2}
 and completes the proof.
\end{proof}

\section{Iterative algorithm for localized maps}
\label{localizedmaps.section}

In this section, we develop a fixed point theorem   for a
well-localized map on a Banach space
whose restriction on its dense Hilbert subspace is a contraction,
and we establish   exponential convergence of
the iterative algorithm \eqref{vancittert.algorithm.intro}
 for  certain localized maps between  (weak-)Hilbert-dense
 Banach spaces.

To state our results, we recall the concept of differential subalgebras.
Given two unital Banach algebras ${\mathcal A}_1$ and ${\mathcal A}_2$,
 ${\mathcal A}_1$ is  said to be a {\em Banach  subalgebra}  of  ${\mathcal A}_2$  if
${\mathcal A}_1\subset {\mathcal A}_2$,
 ${\mathcal A}_1$ and ${\mathcal A}_2$ share the same identity
and
$\sup_{0\ne T\in {\mathcal A}_1} \|T\|_{{\mathcal A}_2}/\|T \|_{{\mathcal A}_1}<\infty$ holds;
and  a Banach subalgebra ${\mathcal A}_1$  of ${\mathcal A}_2$  is said to be a {\em differential subalgebra} of order $\theta\in (0, 1]$ if
there exists  a positive constant $D$
such that
\begin{equation}\label{dnp.def}
\|T_1T_2\|_{{\mathcal A}_1} \le D \|T_1\|_{{\mathcal A}_1} \|T_2\|_{{\mathcal A}_1}
 \Big(\Big(\frac{\|T_1\|_{{\mathcal A}_2}}{\|T_1\|_{{\mathcal A}_1}}\Big)^{\theta}
+\Big (\frac{\|T_2\|_{{\mathcal A}_2}}{\|T_2\|_{{\mathcal A}_1}}\Big)^{\theta}\Big)\end{equation}
for all nonzero $T_1, T_2\in {\mathcal A}_1$ \cite{blackadarcuntz91,   kissin94,   rieffel10, sunaicm13}.
We remark that differential subalgebras include  many families of Banach algebras of infinite matrices with certain off-diagonal decay and
localized integral operators
\cite{
gltams06,  jaffard90, shincjfa09,  suncasp05,
suntams07,   sunacha08, sunca11, sunaicm13},
and they
 have been
  widely used in operator theory, non-commutative geometry,
  frame theory, algebra of pseudodifferential operators,  numerical analysis, signal processing, control and  optimization etc,
see \cite{blackadarcuntz91, christensen05, halljin10, kissin94,   mjieee08, Nashed10, 
rieffel10,  sunsiam06, sunaicm13},
  the  survey papers \cite{grochenigsurvey, Krishtal11} 
  and references therein.

Next we define  the conjugate $T^*$
 of a localized linear operator $T$
 between  (weak-) Hilbert-dense
 Banach spaces.
Given Banach spaces ${\mathbf B}_i$ and their dense Hilbert subspaces ${\mathbf H}_i, i=1, 2$,
 we assume that linear operators $T$ reside in a Banach subspace ${\mathcal B}$ of ${\mathcal B}({\mathbf H}_1, {\mathbf H}_2)$ and
also of ${\mathcal B}({\mathbf B}_1, {\mathbf B}_2)$. For a linear operator $T\in {\mathcal B}$, its restriction
$T|_{{\mathbf H}_1}$ to
${\mathbf H}_1$ is a  bounded operator from ${\mathbf H}_1$ to ${\mathbf H}_2$, hence its conjugate 
$(T|_{{\mathbf H}_1})^*$ is well-defined on ${\mathbf H}_2$, and the ``conjugate" of $T$ is well-defined  if the conjugate
$(T|_{{\mathbf H}_1})^*$  can be extended to a bounded operator from ${\mathbf B}_2$ to ${\mathbf B}_1$.
The above approach to define the conjugate requires certain localization  for linear operators in  ${\mathcal B}$, which
will be stated precisely
in the  next theorem, cf. \eqref{adjoint.banach.def}.

\begin{thm}\label{banachwiener.algorithm.thm}
Let  ${\mathbf B}_1$ and ${\mathbf B}_2$ be Banach spaces,
 ${\mathbf H}_1$ and ${\mathbf H}_2$ be Hilbert spaces with the property that for $i=1, 2$,
 ${\mathbf H}_i\subset {\mathbf B}_i$, ${\mathbf H}_i$ is dense in  ${\mathbf B}_i$, and
\begin{equation}
\sup_{0\ne x\in {\mathbf H}_i} \frac{\|x\|_{{\mathbf B}_i}}{\|x\|_{{\mathbf H}_i}}<\infty.
\end{equation}
Assume that Banach algebra
${\mathcal A}$ with norm $\|\cdot\|_{\mathcal A}$ is a unital Banach subalgebra of
 ${\mathcal B}({\mathbf B}_1)$  and a differential subalgebra of ${\mathcal B}({\mathbf H}_1)$ of order $\theta\in (0,1]$.
Let  ${\mathcal B}$ be a Banach subspace of both
${\mathcal B}({\mathbf H}_1, {\mathbf H}_2)$ and  ${\mathcal B}({\mathbf B}_1, {\mathbf B}_2)$,
and
${\mathcal B}^*$ be a Banach subspace of both
${\mathcal B}({\mathbf H}_2, {\mathbf H}_1)$
and  ${\mathcal B}({\mathbf B}_2, {\mathbf B}_1)$
 such that
\begin{itemize}
\item [{(i)}] $ST\in {\mathcal A}$ for all $S\in {\mathcal B}^*$ and $T\in {\mathcal B}$. Moreover,
\begin{equation}\label{compose.banachalgebra}
\sup_{0\ne T\in {\mathcal B}, 0\ne S\in {\mathcal B}^*} \frac{\|ST\|_{\mathcal A}}{\|S\|_{{\mathcal B}^*}\|T\|_{\mathcal B}}<\infty.\end{equation}
\item[{(ii)}] For any $T\in {\mathcal B}$ and $S\in {\mathcal B}^*$ there exist unique $T^*\in {\mathcal B}^*$ and $S^*\in {\mathcal B}$ with the property that
$\|T^*\|_{{\mathcal B}^*}=\|T\|_{\mathcal B}, \|S^*\|_{{\mathcal B}}=\|S\|_{{\mathcal B}^*}$ and
\begin{equation}\label{adjoint.banach.def}
\langle Tu, w\rangle=\langle u, T^*w\rangle \ {\rm and} \ \langle Sw, u\rangle=\langle w, S^*u\rangle \quad {\rm for \ all} \ u\in {\mathbf H}_1\ {\rm and} \ w\in {\mathbf H}_2.
\end{equation}
\end{itemize}
Assume that $F$ is a differentiable map from ${\mathbf B}_1$ to ${\mathbf B}_2$ such that
its derivative $F'$ is continuous and bounded from  ${\mathbf B}_1$ into
 ${\mathcal B}$,  and
\begin{equation}
\beta_{F, T}=\sup_{x\in {\mathbf B}_1} \sup_{u\in {\mathbf H}_1}
\Big\|\frac{F'(x) u}{\|F'(x)u\|_{{\mathbf H}_2}} -\frac{Tu}{\|Tu\|_{{\mathbf H}_2}}\Big\|_{{\mathbf H}_2}<\sqrt{2}
\end{equation}
for some  linear operator $T\in {\mathcal B}$.
Take an initial $x_0\in {\mathbf B}_1$ and a noisy observation data $z_\epsilon=F(x^0)+\epsilon$ for some $x^0\in {\mathbf B}_1$ and $\epsilon\in {\mathbf B}_2$,
define
$x_n, n\ge 1$, iteratively by
\begin{equation}
\label{vancittert.banachalgorithm}
x_{n+1}=x_n-\mu T^*(F(x_n)-z_\epsilon), \ n\ge 0,
\end{equation}
where the relaxation  factor $\mu$ satisfies 
\begin{equation}\label{betahilbertbanachcondition}
0<\mu< (2-\beta_{F,T}^2) \frac{\big(\inf_{\|u\|_{{\mathbf H}_1}=1} \|Tu\|_{{\mathbf H}_2}\big)\big(\inf_{x\in {\mathbf B}_1}\inf_{\|u\|_{{\mathbf H}_1}=1} \|F'(v)u\|_{{\mathbf H}_2}\big)}
{\big(\sup_{\|u\|_{{\mathbf H}_1}=1}\|Tu\|_{{\mathbf H}_2}\big)^2\big(\sup_{x\in {\mathbf B}_1}\sup_{\|u\|_{{\mathbf H}_1}=1}\|F'(v)u\|_{{\mathbf H}_2}\big)^2},\end{equation}
 and $T^*\in {\mathcal B}^*$ is the conjugate operator defined by \eqref{adjoint.banach.def}.
Then $x_n, n\ge 0$, converges  exponentially  to some $x_\infty\in {\mathbf B}_1$
with
\begin{equation}\label{banachwiener.algorithm.thm.errorestimate}
\|x_\infty-x^0\|_{{\mathbf B}_1}\le C \|\epsilon\|_{{\mathbf B}_2},
\end{equation}
where $C$ is an absolute positive constant.
 \end{thm}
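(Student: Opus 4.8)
The plan is to realize \eqref{vancittert.banachalgorithm} as the Picard iteration $x_{n+1}=\Phi(x_n)$ of the map
\[
\Phi(x):=x-\mu T^*\bigl(F(x)-z_\epsilon\bigr),\qquad x\in{\mathbf B}_1,
\]
and to obtain both the exponential convergence and the error bound \eqref{banachwiener.algorithm.thm.errorestimate} from the fixed point theorem, \thmref{fixedpoint.thm}. First note that $\Phi$ maps ${\mathbf B}_1$ into ${\mathbf B}_1$, since $T^*\in{\mathcal B}^*\subset{\mathcal B}({\mathbf B}_2,{\mathbf B}_1)$, and that $\Phi$ is differentiable with $\Phi'(x)=I-\mu T^*F'(x)$. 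So the task reduces to verifying the two hypotheses of \thmref{fixedpoint.thm}: that $\Phi$ is well-localized, and that the restriction of $\Phi$ to the dense Hilbert subspace ${\mathbf H}_1$ is a contraction.

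For localization I would argue that, because $F'$ is continuous and bounded from ${\mathbf B}_1$ into ${\mathcal B}$ and $T^*\in{\mathcal B}^*$, hypothesis (i) — in particular \eqref{compose.banachalgebra} — gives $T^*F'(x)\in{\mathcal A}$ for every $x\in{\mathbf B}_1$ with $\sup_{x\in{\mathbf B}_1}\|T^*F'(x)\|_{\mathcal A}<\infty$. Hence $\Phi'(x)-I=-\mu T^*F'(x)$ lies in a fixed bounded subset of ${\mathcal A}$, and since ${\mathcal A}$ is a unital Banach subalgebra of ${\mathcal B}({\mathbf B}_1)$ that is a differential subalgebra of ${\mathcal B}({\mathbf H}_1)$ of order $\theta$, the map $\Phi$ carries exactly the localization structure required by \thmref{fixedpoint.thm}; no further work is needed here.

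For the contraction property I would first observe that for $x\in{\mathbf B}_1$ and $u\in{\mathbf H}_1$ one has $F(x+u)-F(x)=\int_0^1F'(x+tu)u\,dt\in{\mathbf H}_2$, because $F'(x+tu)\in{\mathcal B}\subset{\mathcal B}({\mathbf H}_1,{\mathbf H}_2)$, so $T^*\bigl(F(x+u)-F(x)\bigr)\in{\mathbf H}_1$ and $\Phi$ maps each coset $x+{\mathbf H}_1$ into itself (note $F$ itself need not send ${\mathbf H}_1$ into ${\mathbf H}_2$, only cosets into cosets). Fixing $v_1,v_2$ in one coset and writing $v=v_1-v_2\in{\mathbf H}_1$, I would expand
\[
\|\Phi(v_1)-\Phi(v_2)\|_{{\mathbf H}_1}^2=\|v\|_{{\mathbf H}_1}^2-2\mu\langle Tv,F(v_1)-F(v_2)\rangle_{{\mathbf H}_2}+\mu^2\|T^*(F(v_1)-F(v_2))\|_{{\mathbf H}_1}^2,
\]
using \eqref{adjoint.banach.def} to replace $\langle v,T^*(F(v_1)-F(v_2))\rangle_{{\mathbf H}_1}$ by $\langle Tv,F(v_1)-F(v_2)\rangle_{{\mathbf H}_2}$, and then run the Hilbert-space estimate from the proof of \thmref{hilbertalgorithm.thm} verbatim: the inequality $\langle F'(x)u,Tu\rangle\ge\frac{2-\beta_{F,T}^2}{2}\|F'(x)u\|_{{\mathbf H}_2}\|Tu\|_{{\mathbf H}_2}$ (valid since $\beta_{F,T}<\sqrt2$) bounds the cross term below, while $\|T^*\|_{{\mathcal B}({\mathbf H}_2,{\mathbf H}_1)}=\|T\|_{{\mathcal B}({\mathbf H}_1,{\mathbf H}_2)}$ and boundedness of $F'$ into $\mathcal B$ bound the last term above, yielding $\|\Phi(v_1)-\Phi(v_2)\|_{{\mathbf H}_1}^2\le r_1\|v\|_{{\mathbf H}_1}^2$ with the same quadratic $r_1$ in $\mu$ as in that proof; the restriction \eqref{betahilbertbanachcondition} on $\mu$ is precisely what forces $r_1\in(0,1)$, so $\Phi|_{{\mathbf H}_1}$ is a contraction.

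With both hypotheses in hand, \thmref{fixedpoint.thm} yields a unique fixed point $x_\infty\in{\mathbf B}_1$, exponential convergence of $x_n$ to $x_\infty$ in $\|\cdot\|_{{\mathbf B}_1}$, and Lipschitz dependence of the fixed point on the right-hand side, i.e.\ $\|x-x_\infty\|_{{\mathbf B}_1}\le C'\|x-\Phi(x)\|_{{\mathbf B}_1}$ for all $x\in{\mathbf B}_1$ with $C'$ absolute. Taking $x=x^0$ and using $x^0-\Phi(x^0)=\mu T^*(F(x^0)-z_\epsilon)=-\mu T^*\epsilon$ together with the finiteness of $\|T^*\|_{{\mathcal B}({\mathbf B}_2,{\mathbf B}_1)}$ would then give $\|x_\infty-x^0\|_{{\mathbf B}_1}\le C'\mu\|T^*\|_{{\mathcal B}({\mathbf B}_2,{\mathbf B}_1)}\|\epsilon\|_{{\mathbf B}_2}=:C\|\epsilon\|_{{\mathbf B}_2}$, which is \eqref{banachwiener.algorithm.thm.errorestimate}. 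The hard part is not any of these verifications but \thmref{fixedpoint.thm} itself: the contraction estimate above holds only within a single coset of ${\mathbf H}_1$ and only in the Hilbert norm $\|\cdot\|_{{\mathbf H}_1}$, whereas convergence and the error bound are required in the strictly larger norm $\|\cdot\|_{{\mathbf B}_1}$ and for arbitrary initial data $x_0\in{\mathbf B}_1$; propagating a Hilbert-norm contraction to the Banach setting is exactly where the differential-subalgebra hypothesis on ${\mathcal A}$ (and the existence of the conjugate $T^*$ via \eqref{adjoint.banach.def}) enters, and that is handled once and for all in \thmref{fixedpoint.thm}.
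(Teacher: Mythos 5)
Your proposal is correct and follows the paper's strategy in its main structure: both reduce the theorem to \thmref{fixedpoint.thm} by checking that $G'(x)=I-\mu T^*F'(x)$ is bounded in ${\mathcal A}$ (via hypothesis (i)) and is a uniform contraction on ${\mathbf H}_1$ (via the monotonicity estimate $\langle F'(x)u,Tu\rangle\ge\frac{2-\beta_{F,T}^2}{2}\|F'(x)u\|\,\|Tu\|$ from the proof of \thmref{hilbertalgorithm.thm}, with \eqref{betahilbertbanachcondition} forcing the contraction ratio below one). Two small differences are worth flagging. First, you phrase the contraction as a metric statement on cosets $x+{\mathbf H}_1$, whereas the hypothesis \eqref{fixedpoint.lem.eq2} of \thmref{fixedpoint.thm} is the operator-norm bound $\|G'(x)\|_{{\mathcal B}({\mathbf H}_1)}\le r$; your coset estimate does yield this (let $v_1\to v_2$ along an ${\mathbf H}_1$-direction, or just expand $\|u-\mu T^*F'(x)u\|_{{\mathbf H}_1}^2$ directly), but you should state that passage explicitly, since the coset discussion is otherwise a detour the theorem does not need. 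Second, for the error bound \eqref{banachwiener.algorithm.thm.errorestimate} you take a genuinely different and somewhat cleaner route: the paper passes to the limit equation $A_\infty(x_\infty-x^0)=T^*\epsilon$ with $A_\infty=\int_0^1T^*F'(x^0+t(x_\infty-x^0))\,dt$ and inverts $A_\infty$ in ${\mathcal A}$ by a Neumann-series argument that re-runs the machinery of \thmref{fixedpoint.thm}, while you start the Picard iteration at $x^0$ and use the stability bound $\|x^0-x_\infty\|_{{\mathbf B}_1}\le\frac{C}{1-r_1}\|G(x^0)-x^0\|_{{\mathbf B}_1}=\frac{C\mu}{1-r_1}\|T^*\epsilon\|_{{\mathbf B}_1}$. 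That is valid and avoids the inversion of $A_\infty$, but note it is a consequence of the \emph{proof} of \thmref{fixedpoint.thm} (summing the geometric bound $\|x_{n+1}-x_n\|_{{\mathbf B}_1}\le Cr_1^n\|x_1-x_0\|_{{\mathbf B}_1}$), not of its bare statement, so you should either cite that quantitative estimate or add the one-line summation; with that made explicit the argument is complete.
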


Given a Banach space ${\mathbf B}$, we say that a map $G: {\mathbf B}\to {\mathbf B}$ is a {\em contraction}
if there exists $r\in [0, 1)$ such that
\begin{equation*}
\|G(x)-G(y)\|_{{\mathbf B}}\le  r \|x-y\|_{{\mathbf B}}\quad {\rm  for \ all} \ x, y\in {\mathbf B}.\end{equation*}
For a contraction $G$ on a Banach space ${\mathbf B}$, the Banach fixed point theorem states
that there is a unique fixed point $x^*$ for the contraction $G$ (i.e., $G(x^*)=x^*$), and for any initial $x_0\in {\mathbf B}$,
the sequence $x_{n+1}=G(x_n), n\ge 0$, converges  exponentially to the fixed point  $x^*$  \cite{dieu69}.
To prove Theorem \ref{banachwiener.algorithm.thm}, we need a  fixed point theorem for differentiable maps  on a Banach space
with its derivative being continuous and bounded in a differential Banach subalgebra and its restriction on a dense Hilbert subspace being a contraction.

\begin{thm}\label{fixedpoint.thm}
Let ${\mathbf B}$ be a Banach space, ${\mathbf H}$ be a Hilbert space such that ${\mathbf H}\subset {\mathbf B}$ is dense in ${\mathbf B}$
and
\begin{equation}\label {fixedpoint.lem.eq1}\sup_{0\ne x\in {\mathbf H}}\frac{\|x\|_{\mathbf B}}{\|x\|_{\mathbf H}}<\infty,
\end{equation}
and let ${\mathcal A}$ be a Banach subalgebra of  ${\mathcal B}({\mathbf B})$
and also a differential subalgebra of ${\mathcal B}({\mathbf H})$ of order $\theta\in (0, 1]$.
If $G$ is a differentiable map on ${\mathbf B}$ whose derivative  $G'$ is continuous and bounded from ${\mathbf B}$ into ${\mathcal A}$ 
and there exists $r\in [0, 1)$ such that
\begin{equation}\label{fixedpoint.lem.eq2}
 \|G'(x)\|_{{\mathcal B}({\mathbf H})}\le r\ {\rm for \ all} \ x\in {\mathbf B},
\end{equation}
then
there exists a unique fixed point $x^*$ for the map $G$. Furthermore given any initial $x_0\in {\mathbf B}$, the sequence
$x_n, n\ge 0$, defined by
\begin{equation} \label{fixedpoint.lem.eq3}
x_{n+1}=G(x_n), \ n\ge 0,
\end{equation}
converges exponentially to the fixed point $x^*$.
\end{thm}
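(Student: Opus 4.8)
The plan is to reduce \thmref{fixedpoint.thm} to a single norm–decay estimate for long operator products in $\mathcal{A}$, from which existence, uniqueness and exponential convergence all follow by a telescoping argument. Set $M:=\sup_{x\in\mathbf{B}}\|G'(x)\|_{\mathcal{A}}<\infty$, and let $c_1<\infty$ be the norm of the inclusion $\mathcal{A}\hookrightarrow\mathcal{B}(\mathbf{B})$, which exists because $\mathcal{A}$ is a Banach subalgebra of $\mathcal{B}(\mathbf{B})$; likewise $\mathcal{A}\hookrightarrow\mathcal{B}(\mathbf{H})$ is bounded. The crucial step is the following.

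\emph{Key Lemma.} For every $\rho\in(r,1)$ there is a constant $C$, depending only on $D,\theta,M,r,\rho$, such that $\|R_1R_2\cdots R_N\|_{\mathcal{A}}\le C\rho^{N}$ whenever $N\ge1$ and $R_1,\dots,R_N\in\mathcal{A}$ satisfy $\|R_j\|_{\mathcal{A}}\le M$ and $\|R_j\|_{\mathcal{B}(\mathbf{H})}\le r$ for all $j$. (If $r=0$ each such $R_j$ vanishes on the dense subspace $\mathbf{H}$, hence on $\mathbf{B}$, so the claim is trivial; assume $r>0$.)

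Granting the Key Lemma, the theorem follows quickly. Since $G$ is $C^1$ between Banach spaces, the integral mean value identity \cite{dieu69} shows that the iterates $x_n=G(x_{n-1})$ obey $x_{n+1}-x_n=R_n(x_n-x_{n-1})$, where $R_n:=\int_0^1 G'(x_{n-1}+t(x_n-x_{n-1}))\,dt$ is a Bochner integral in $\mathcal{A}$ of a continuous integrand; hence $\|R_n\|_{\mathcal{A}}\le M$, and estimating $\|R_nu\|_{\mathbf{H}}$ for $u\in\mathbf{H}$ under the integral sign together with \eqref{fixedpoint.lem.eq2} gives $\|R_n\|_{\mathcal{B}(\mathbf{H})}\le r$. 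Iterating, $x_{n+1}-x_n=R_nR_{n-1}\cdots R_1(x_1-x_0)$, so the Key Lemma yields $\|x_{n+1}-x_n\|_{\mathbf{B}}\le c_1C\rho^{n}\|x_1-x_0\|_{\mathbf{B}}$; thus $(x_n)$ is Cauchy, converges at geometric rate $\rho$ to some $x^*\in\mathbf{B}$, and continuity of $G$ forces $G(x^*)=x^*$. For uniqueness, if $G(x^*)=x^*$ and $G(y^*)=y^*$, then $x^*-y^*=R^{N}(x^*-y^*)$ for all $N$, where $R:=\int_0^1 G'(y^*+t(x^*-y^*))\,dt$ again satisfies $\|R\|_{\mathcal{A}}\le M$ and $\|R\|_{\mathcal{B}(\mathbf{H})}\le r$, so $\|x^*-y^*\|_{\mathbf{B}}\le c_1C\rho^{N}\|x^*-y^*\|_{\mathbf{B}}\to0$.

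It remains to prove the Key Lemma; this is where the differential subalgebra property \eqref{dnp.def} with $\theta>0$ is used. Let $A_m$ be the supremum of $\|R_1\cdots R_m\|_{\mathcal{A}}$ over admissible $m$-tuples, so $A_1\le M$, $A_{j+k}\le A_jA_k$, and any admissible product has $\mathcal{B}(\mathbf{H})$-norm at most $r^m$. Splitting a product of $2^k$ admissible factors into halves and applying \eqref{dnp.def} with $\mathcal{A}_1=\mathcal{A}$, $\mathcal{A}_2=\mathcal{B}(\mathbf{H})$ gives $A_{2^k}\le 2D\,A_{2^{k-1}}^{\,2-\theta}\,r^{\,\theta 2^{k-1}}$. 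Taking logarithms and unrolling this recursion yields $\log A_{2^k}\le(\log r)2^k+o(2^k)$ (the error term is $O((2-\theta)^k)$ for $\theta<1$ and $O(k)$ for $\theta=1$, both $o(2^k)$ since $2-\theta<2$), so for any $\rho\in(r,1)$ there is $k_0$ with $A_{2^k}\le\rho^{2^k}$ for all $k\ge k_0$. For general $N$, write $N=\sum_{i\in S}2^i$ in binary, group the $N$ consecutive factors into blocks of lengths $2^i$, $i\in S$, and use submultiplicativity in $\mathcal{A}$ once more: $A_N\le\prod_{i\in S}A_{2^i}\le\big(\prod_{i<k_0}\max(1,A_{2^i})\big)\,\rho^{\,N-2^{k_0}}\le C\rho^{N}$. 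This proves the Key Lemma, and with it the theorem.

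I expect the Key Lemma to be the only genuine obstacle: the contraction hypothesis \eqref{fixedpoint.lem.eq2} is visible only in the weak Hilbert norm, and it has to be promoted to honest geometric decay in the strong $\mathcal{A}$-norm — and hence in $\mathcal{B}(\mathbf{B})$ — before any Banach-space fixed point argument can be run; the dyadic recursion above is the device by which a differential subalgebra of positive order carries out this upgrade. Everything else — the non-dyadic bookkeeping, the borderline order $\theta=1$, and checking that the operator-valued integrals above are legitimate Bochner integrals that commute with evaluation at a vector — is routine.
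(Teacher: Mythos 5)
Your proposal is correct and follows essentially the same route as the paper: the mean-value form $x_{n+1}-x_n=R_n(x_n-x_{n-1})$, the upgrade of the $\mathcal{B}(\mathbf{H})$-contraction to geometric decay of $\|R_n\cdots R_1\|_{\mathcal A}$ via the differential-subalgebra inequality applied to dyadic splittings of the product, and the same uniqueness argument through powers of a single averaged operator. Your only deviation is organizational — isolating the product bound as a standalone lemma and handling powers of two first, then general $N$ by submultiplicativity, whereas the paper unrolls the binary expansion directly inside the recursion — but the estimates are the same.
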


\begin{proof}
Let $x_n, n\ge 0$, be as in \eqref{fixedpoint.lem.eq3}. It follows from  the continuity of $G'$
in the Banach subalgebra ${\mathcal A}$ of ${\mathcal B}({\mathbf B})$ that
\begin{eqnarray}\label{fixedpoint.lem.pf.eq1}
x_{n+1}-x_n&= &G(x_n)-G(x_{n-1})\nonumber\\
&= & \Big(\int_0^1 G'(x_{n-1}+t(x_n-x_{n-1}))dt\Big) (x_n-x_{n-1})\nonumber\\
&=: & T_n(x_n-x_{n-1}), \ n\ge 1.
\end{eqnarray}
Observe that
\begin{equation} \label{fixedpoint.lem.pf.eq2}
\|T_n\|_{{\mathcal B}({\mathbf H})}\le
\int_0^1 \| G'(x_{n-1}+t(x_n-x_{n-1}))\|_{{\mathcal B}({\mathbf H})} dt\le r
\end{equation}
and
\begin{equation} \label{fixedpoint.lem.pf.eq3}
\|T_n\|_{\mathcal A}\le \int_0^1 \| G'(x_{n-1}+t(x_n-x_{n-1}))\|_{\mathcal A} dt\le M
\end{equation}
where $M=\sup_{x\in {\mathbf B}} \|G'(x)\|_{\mathcal A}<\infty$ by the assumption on the map $G$.
Set
$$b_n=\sup_{l\ge 1} \|T_{l+n-1} T_{l+n-2} \cdots T_l\|_{\mathcal A},\ n\ge 1.$$
Then we obtain from   \eqref{dnp.def}, \eqref{fixedpoint.lem.pf.eq2} and
\eqref{fixedpoint.lem.pf.eq3}
that
\begin{equation*} 
b_{2n+1}\le  \big(\sup_{m\ge 1} \|T_m\|_{\mathcal A}\big)  b_{2n}\le M b_{2n}\end{equation*}
and
 \begin{equation*} 
  b_{2n} \le  2D
\big(\sup_{l\ge 1} \|T_{l+n-1} \cdots T_l\|_{{\mathcal B}({\mathbf H})}
\big)^{\theta}
 (b_n)^{2-\theta}\le  2D r^{n\theta }
 (b_n)^{2-\theta}
\end{equation*}
for all $n\ge 1$. Thus
\begin{eqnarray*}
b_n & \le  &  M^{\epsilon_0} b_{n-\epsilon_0}\le
M^{\epsilon_0} (2D)  r^{\theta(n-\epsilon_0)/2}
(b_{(n-\epsilon_0)/2})^{2-\theta}\\
& \le &
M^{\epsilon_0+(2-\theta) \epsilon_1} (2D)^{1+(2-\theta)}
 r^{\frac{\theta}{2} ((n-\epsilon_0)+(n-\epsilon_0-2\epsilon_1)\frac{2-\theta}{2})}
(b_{(n-\epsilon_0-2\epsilon_1)/4})^{2-\theta}\nonumber\\
&\le &
  \cdots\\
  &\le &   M^{\sum_{i=0}^{l} \epsilon_i (2-\theta)^i}
  (2D)^{\sum_{i=0}^{l-1} (2-\theta)^i}
r^{\frac{\theta}{2} \sum_{i=0}^{l-1}\sum_{j=i+1}^l \epsilon_j 2^{j-i} (2-\theta)^i},
\end{eqnarray*}
where $n=\sum_{i=0}^l \epsilon_i 2^i$  with $\epsilon_i\in \{0, 1\}$ and $\epsilon_l=1$.
Therefore
\begin{equation}\label{fixedpoint.lem.pf.eq4}
b_n\le \big((2D)^{1/(1-\theta)} (M/r_0)^{(2-\theta)/(1-\theta)}\big)^{n^{\log_2(2-\theta)}}
r^n, \ n\ge 1
\end{equation}
if $\theta\in (0,1)$, and
\begin{equation}\label{fixedpoint.lem.pf.eq5}
b_n\le \frac{M}{r}
 \big(2D M/r\big)^{\log_2 n}
r^n, \ n\ge 1
\end{equation}
if $\theta=1$.  By \eqref{fixedpoint.lem.pf.eq4} and \eqref{fixedpoint.lem.pf.eq5},
for any $r_1\in (r, 1)$ there exists a positive constant $C$
such that
\begin{equation} \label{fixedpoint.lem.pf.eq6}
\|T_nT_{n_1}\cdots T_1\|_{\mathcal A}\le C r_1^n,\  n\ge 1.
\end{equation}
Recall that ${\mathcal A}$ is a Banach subalgebra of ${\mathcal B}({\mathbf B})$.
We then obtain from \eqref{fixedpoint.lem.pf.eq1} and \eqref{fixedpoint.lem.pf.eq6}
that
\begin{equation*}
\|x_{n+1}-x_n\|_{\mathbf B} \le C r_1^n \|x_1-x_0\|_{\mathbf B},\ n\ge 1,
\end{equation*}
which proves the exponential convergence of the sequence $x_n, n\ge 0$.

Finally we prove the uniqueness of the fixed point for the map $G$. Let $x^*$ and $\tilde x^*$ be fixed points of the map $G$. Then
$x^*-\tilde x^*$ is a fixed point of the linear operator $T:=\int_0^1 G'(\tilde x^*+t(x^*-\tilde x^*)) dt\in {\mathcal A}$, because
\begin{equation} \label{fixedpoint.lem.pf.eq7}
x^*-\tilde x^*= G(x^*)-G(\tilde x^*)= T (x^*-\tilde x^*).
\end{equation}
Following the argument to prove
\eqref{fixedpoint.lem.pf.eq6}, we obtain that
$\lim_{n\to \infty} \|T^n\|_{{\mathcal B}({\mathbf B})}=0$. This together with \eqref{fixedpoint.lem.pf.eq7}  implies that
$x^*=\tilde x^*$, the uniqueness of  fixed points for the map $G$.
\end{proof}

Now we apply Theorem \ref{fixedpoint.thm} to prove Theorem \ref{banachwiener.algorithm.thm}.

\begin{proof} [Proof of Theorem \ref{banachwiener.algorithm.thm}]
Define $G:{\mathbf B}_1\to {\mathbf B}_1$ by
\begin{equation}
G(x)= x-\mu T^*(F(x)-z_\epsilon),\  x\in {\mathbf B}_1.
\end{equation}
Then
$G$ is differentiable on ${\mathbf B}_1$ and its derivative
$G'(x)=I-\mu T^* F'(x), x\in {\mathbf B}_1$,
is continuous and bounded in ${\mathcal A}$ by the assumption on $F$ and the Banach spaces ${\mathcal B}$ and ${\mathcal B}^*$.
Set
$$m_0= \frac{2-\beta_{F,T}^2}{2}
  \Big(\inf_{0\ne u\in {\mathbf H}_1}\frac{\|Tu\|_{{\mathbf H}_2}}{\|u\|_{{\mathbf H}_1}}\Big) \Big(\inf_{x\in {\mathbf B}_1}\inf_{0\ne u\in {\mathbf H}_1}
  \frac{\|F'(x)u\|_{{\mathbf H}_2}}{\|u\|_{{\mathbf H}_1}}\Big)$$
  and
 $$ M_0=
  \Big(\sup_{0\ne u\in {\mathbf H}_1}\frac{\|Tu\|_{{\mathbf H}_2}}{\|u\|_{{\mathbf H}_1}}\Big) \Big(\sup_{x\in {\mathbf B}_1}\sup_{0\ne u\in {\mathbf H}_1}
  \frac{\|F'(x)u\|_{{\mathbf H}_2}}{\|u\|_{{\mathbf H}_1}}\Big).
 $$
Observe that
\begin{eqnarray} \label{banachwiener.algorithm.thm.pf.eq1}
 \|G'(x)\|_{\mathcal A} & \le &
\|I\|_{\mathcal A}+\mu  \|T\|_{{\mathcal B}}\big(\sup_{x\in {\mathbf B}_1} \|F'(x)\|_{\mathcal B}\big)\nonumber\\
& & \quad \times
\Big (\sup_{0\ne S_1,S_2\in {\mathcal B}} \frac{\|S_1^*S_2\|_{{\mathcal A}}}{\|S_1\|_{\mathcal B}\|S_2\|_{\mathcal B}}\Big)<\infty,
\end{eqnarray}
and
\begin{eqnarray} \label{banachwiener.algorithm.thm.pf.eq2}
\|G'(x)\|_{{\mathcal B}({\mathbf H}_1)} & \le &
\|(I+\mu  T^* F'(x))^{-1}\|_{{\mathcal B}({\mathbf H}_1)}
\|1-\mu^2 (T^*F'(x))^2 \|_{{\mathcal B}({\mathbf H}_1)}\nonumber\\
&\le & (1+M_0^2\mu^2)
\sup_{0\ne u\in {\mathbf H}_1} \frac{\|u\|_{ {\mathbf H}_1}^2}{\langle u, (1+\mu T^* F'(x)) u\rangle_{ {\mathbf H}_1}}\nonumber\\
 &\le &   \frac{1+M_0^2\mu^2}{1+ m_0\mu}<1, \ x\in {\mathbf B}_1,
\end{eqnarray}
where the second inequality holds as
$$\|(I+\mu T^* F'(x))^{-1}\|_{{\mathcal B}({\mathbf H}_1)}=\sup_{0\ne u\in {\mathbf H}_1} \frac{\|u\|_{{\mathbf H}_1}}{\|(I+\mu T^* F'(x))u\|_{{\mathbf H}_1}}$$
 and
the third inequality follows from
 \eqref{hilbertalgorithm.thm.pf.eq1}. 
Combining the above two estimates about $G'(x), x\in {\mathbf B}_1$, with Theorem \ref{fixedpoint.thm}
proves the exponential convergence of $x_n, n\ge 0$,
in ${\mathbf B}_1$.

Denote by $x_\infty$ the limit of $x_n, n\ge 0$, in ${\mathbf B}_1$.
Then taking limit in the iterative algorithm \eqref{vancittert.banachalgorithm} yields
\begin{equation*} 
T^* F(x_\infty)-T^* F(x^0)=T^* \epsilon.
\end{equation*}
Thus
\begin{equation}\label {banachwiener.algorithm.thm.pf.eq10}
A_\infty (x_\infty-x^0)=T^*\epsilon,
\end{equation}
where $A_\infty=\int_0^1 T^* F'(x^0+t(x_\infty-x^0)) dt$. Following the argument
to prove Theorem \ref{fixedpoint.thm} and applying \eqref{banachwiener.algorithm.thm.pf.eq1}
and \eqref{banachwiener.algorithm.thm.pf.eq2},
there exists a positive constant $C_r$ for any
 $r\in ( (1+M_0^2\mu^2/(1+m_0\mu), 1)$ such that
 \begin{equation*}
 \|(I-\mu A_\infty)^n\|_{\mathcal A}\le C_r r^n, \ n\ge 1.
 \end{equation*}
 Thus $A_\infty$ is invertible in ${\mathcal A}$ and
 \begin{equation}\label{banachwiener.algorithm.thm.pf.eq11}
 \|(A_\infty)^{-1}\|_{\mathcal A}\le \mu \sum_{n=0}^\infty \|(I-\mu A_\infty)^n\|_{\mathcal A}\le \mu
 (\|I\|_{\mathcal A}+ C_r/(1-r)).
 \end{equation}
 Combining \eqref{banachwiener.algorithm.thm.pf.eq10} and
 \eqref{banachwiener.algorithm.thm.pf.eq11} leads to
 \begin{eqnarray*}
  & & \|x_\infty-x^0\|_{{\mathbf B}_1} \le   \|(A_\infty)^{-1}\|_{{\mathcal B}({\mathbf B}_1)}\|T^* \epsilon\|_{{\mathbf B}_1}\\
 & \le & \|(A_\infty)^{-1}\|_{\mathcal A} \|T\|_{{\mathcal B}}
 \Big (\sup_{0\ne S\in {\mathcal A}} \frac{\|S\|_{{\mathcal B}({\mathbf B}_1)}
}{\|S\|_{\mathcal A}}\Big)
\Big (\sup_{0\ne U\in {\mathcal B}^*} \frac{\|U\|_{{\mathcal B}({\mathbf B}_2, {\mathbf B}_1))}
}{\|U\|_{\mathcal B^*}}\Big)\|\epsilon\|_{{\mathbf B}_2}.
 \end{eqnarray*} This proves the error estimate
 \eqref{banachwiener.algorithm.thm.errorestimate}.
\end{proof}

\begin{rem} {\rm Our model of Hilbert-dense
 Banach spaces in Theorem \ref{banachwiener.algorithm.thm} is $\ell^p(\Lambda)$, the space of $p$-summable sequences
  $\ell^p(\Lambda)$, with $2\le p<\infty$.
For that case, exponential convergence of the Van-Cittert algorithm,
which is similar to the iterative algorithm  \eqref{vancittert.banachalgorithm}
 in Theorem \ref{banachwiener.algorithm.thm}, is established in \cite{sunaicm13}
 under slightly different restriction on
the relaxation factor $\mu$.
For  weak-Hilbert-dense Banach spaces,
the  iterative algorithm  \eqref{vancittert.banachalgorithm}
 in Theorem \ref{banachwiener.algorithm.thm} still has exponential convergence
 if   operators in ${\mathcal B}$ and ${\mathcal B}^*$ are assumed additionally to be uniformly continuous
 in the weak topologies of Banach spaces, that is,
$\sup_{\|T\|_{\mathcal B}\le 1} |f(Tx_n)-f(Tx_\infty)|\to 0$
for any bounded linear functional $f$ on ${\mathbf B}_2$ if
 $x_n$ tends to $x_\infty$ in the weak topology of ${\mathbf B}_1$; and
$\sup_{\|S\|_{{\mathcal B}^*}\le 1} |g(Sy_n)-g(Sy_\infty)|\to 0$
for any bounded linear functional $g$ on ${\mathbf B}_1$ if
 $y_n$ tends to $y_\infty$ in the weak topology of ${\mathbf B}_2$.
 We  leave the detailed arguments to  interested readers. }
\end{rem}

\section{Sparse reconstruction and optimization}
\label{optimization.section}

In this section, we show that   sparse signals $x\in {\bf A}$ could be reconstructed from their nonlinear measurements $F(x)$ via
the optimization approach \eqref{optimizationsolution}.

\begin{thm}\label{optimization.thm1}
Let ${\mathbf H}_1$ and ${\mathbf H}_2$ be Hilbert spaces, ${\mathbf M}$ be a Banach space,
 ${\mathbf A}=\cup_{i\in I} {\mathbf A}_i$  be union of closed linear subspaces
 of ${\mathbf H}_1$, $s_{\mathbf A}$ and  $a_{\mathbf A}$  be in  \eqref{rationB1M} and
  \eqref{secondapproximationestimate} respectively, and
 let $F$ be a continuous map from ${\mathbf H}_1$ to ${\mathbf H}_2$ normalized so that $F(0)=0$.
 If $({\mathbf A}, {\mathbf M}, {\mathbf H}_1)$
forms a sparse approximation triple,
and
 if $F$ has the sparse Riesz property
\eqref{sap.def}  
and the almost linear  property \eqref{almostlinearproperty}
 with
  $D, \beta, \gamma_1, \gamma_2\ge 0$ satisfying
\begin{equation} \label{dbetagamma.condition}
\gamma_3:= 1-2D\gamma_1-(D\gamma_1+D\gamma_2+\beta)\sqrt{a_{\mathbf A}s_{\mathbf A}}>0,
\end{equation}
then given $x^0\in {\mathbf M}$ and $\varepsilon>0$,
the solution $x_{\mathbf M}^0$ of the optimization problem
\eqref{optimizationsolution}
provides a suboptimal  approximation to $x^0$,
\begin{equation}\label{optimization.thm1.error}
\|x_{\mathbf M}^0-x^0\|_{{\mathbf H}_1}  \le
\Big(   \frac{2+8D\gamma_2+4\beta}
{\gamma_3} \Big)
\sqrt{a_{\mathbf A}}
\sigma_{\mathbf A, \mathbf M}(x^0)\  +\
\frac{(2+\sqrt{a_{\mathbf A}s_{\mathbf A}})D}
{\gamma_3}
\varepsilon
\end{equation}
and
\begin{equation}\label{optimization.thm1.error2}
\|x_{\mathbf M}^0-x^0\|_{\mathbf M} \le
\Big( \frac{2-4D\gamma_1+2 (D\gamma_1+2D\gamma_2+\beta)\sqrt{a_{\mathbf A}s_{\mathbf A}}}
{\gamma_3}\Big)
\sigma_{\mathbf A, \mathbf M}(x^0) +\
\frac{2D}
{\gamma_3}
\sqrt{s_{\mathbf A}}
 \varepsilon.
\end{equation}
\end{thm}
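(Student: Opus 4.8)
The plan is to adapt the classical Cand\`es--Romberg--Tao argument for $\ell^1$-minimization to the abstract union-of-subspaces geometry supplied by the sparse approximation triple. Write $x^\ast:=x_{\mathbf M}^0$ and $h:=x^\ast-x^0$. Two facts are immediate from the optimization problem \eqref{optimizationsolution}: since $x^0$ is feasible (because $\|F(x^0)-F(x^0)\|=0\le\varepsilon$), minimality gives $\|x^\ast\|_{\mathbf M}\le\|x^0\|_{\mathbf M}$; and both $x^\ast$ and $x^0$ meet the constraint, so $\|F(x^\ast)-F(x^0)\|_{{\mathbf H}_2}\le\varepsilon$. Everything will be derived from these two facts together with the structural axioms of the triple and the sparse Riesz / almost linear properties.

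First I would extract a ``cone (or tube) condition'' on $h$ from minimality. By the proximinality axiom there is a best approximator $\bar x=x^0_{{\mathbf A},{\mathbf M}}\in{\mathbf A}_{i_0}$ of $x^0$; set $\sigma:=\sigma_{{\mathbf A},{\mathbf M}}(x^0)=\|x^0-\bar x\|_{\mathbf M}$, and let $P_{i_0}$ be the ${\mathbf H}_1$-orthogonal projection onto the closed subspace ${\mathbf A}_{i_0}$. The common-best-approximator axiom \eqref{bestapproximation.hm} forces the ${\mathbf M}$-best approximator onto ${\mathbf A}_{i_0}$ of any $x\in{\mathbf M}$ to coincide with the ${\mathbf H}_1$-best approximator, i.e.\ with $P_{i_0}x$; in particular $\bar x=P_{i_0}x^0$, and the ${\mathbf M}$-best approximators of $h$ and of $x^0+h$ onto ${\mathbf A}_{i_0}$ are $P_{i_0}h$ and $\bar x+P_{i_0}h$. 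Plugging these into the ${\mathbf M}$-norm-splitting identity \eqref{optimization.thm1.eq3} for $\|x^0\|_{\mathbf M}$, $\|x^0+h\|_{\mathbf M}$ and $\|h\|_{\mathbf M}$, and using $\|x^\ast\|_{\mathbf M}\le\|x^0\|_{\mathbf M}$ together with two triangle inequalities, I expect to obtain $\|(I-P_{i_0})h\|_{\mathbf M}\le\|P_{i_0}h\|_{\mathbf M}+2\sigma$, hence $\|h\|_{\mathbf M}\le 2\|P_{i_0}h\|_{\mathbf M}+2\sigma$. Since $P_{i_0}h\in{\mathbf A}$, the definition \eqref{rationB1M} of $s_{\mathbf A}$, the ${\mathbf H}_1$-norm-splitting \eqref{bestapproximation.hm.equiv} (which yields $\|P_{i_0}h\|_{{\mathbf H}_1}\le\|h\|_{{\mathbf H}_1}$), and the normalization $\|i_{\mathbf M}\|_{{\mathcal B}({\mathbf M},{\mathbf H}_1)}=1$ turn this into
\[
\|h\|_{\mathbf M}\le 2\sqrt{s_{\mathbf A}}\,\|h\|_{{\mathbf H}_1}+2\sigma,
\]
and similarly bound $\sigma_{{\mathbf A},{\mathbf M}}(h)$ and $\sigma_{{\mathbf A},{\mathbf M}}(x^\ast)$ each by a constant times $\sqrt{s_{\mathbf A}}\,\|h\|_{{\mathbf H}_1}+\sigma$.

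Next I would produce a measurement-side lower bound on $\|h\|_{{\mathbf H}_1}$. Writing $F(x^\ast)-F(x^0)=F(h)+\bigl(F(x^\ast)-F(x^0)-F(h)\bigr)$, applying the sparse Riesz property \eqref{sap.def} to $h\in{\mathbf M}$ and the almost linear property \eqref{almostlinearproperty} to the pair $(x^\ast,x^0)$, and inserting $\|F(x^\ast)-F(x^0)\|_{{\mathbf H}_2}\le\varepsilon$, gives an inequality of the shape
\[
(D^{-1}-\gamma_1)\,\|h\|_{{\mathbf H}_1}\ \le\ \varepsilon+(D^{-1}\beta+\gamma_2)\sqrt{a_{\mathbf A}}\,\bigl(\sigma_{{\mathbf A},{\mathbf M}}(h)+\sigma_{{\mathbf A},{\mathbf M}}(x^\ast)\bigr)+\gamma_2\sqrt{a_{\mathbf A}}\,\sigma .
\]
Substituting the cone-condition bounds for $\sigma_{{\mathbf A},{\mathbf M}}(h)$ and $\sigma_{{\mathbf A},{\mathbf M}}(x^\ast)$ and collecting the terms proportional to $\|h\|_{{\mathbf H}_1}$ on the left, the surviving coefficient equals, after a careful accounting of the $\sigma_{{\mathbf A},{\mathbf M}}(\cdot)$-contributions, the quantity $D^{-1}\gamma_3$ with $\gamma_3=1-2D\gamma_1-(D\gamma_1+D\gamma_2+\beta)\sqrt{a_{\mathbf A}s_{\mathbf A}}$, which is positive by the hypothesis \eqref{dbetagamma.condition}; dividing through yields the ${\mathbf H}_1$-estimate \eqref{optimization.thm1.error}. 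Feeding this back into $\|h\|_{\mathbf M}\le 2\sqrt{s_{\mathbf A}}\,\|h\|_{{\mathbf H}_1}+2\sigma$ then yields the ${\mathbf M}$-estimate \eqref{optimization.thm1.error2}.

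The main obstacle is the cone condition. In the classical $\ell^1$ setting ``restriction to the support'' is a linear coordinate projection, which makes the Cand\`es--Tao inequality $\|h_{S^c}\|_1\le\|h_S\|_1+2\sigma$ immediate; here the analogue, best approximation onto a member ${\mathbf A}_{i_0}$ of the union, is genuinely nonlinear, and the argument must be routed entirely through the triple's axioms --- the common-best-approximator property (so that ${\mathbf M}$-best approximation onto ${\mathbf A}_{i_0}$ is the linear orthogonal projection $P_{i_0}$, whence $x^0-\bar x\perp_{{\mathbf H}_1}{\mathbf A}_{i_0}$ and best approximators behave additively under the shift by $\bar x\in{\mathbf A}_{i_0}$) and the two norm-splitting identities. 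A secondary, purely computational difficulty is tracking the several $\sigma_{{\mathbf A},{\mathbf M}}(\cdot)$ terms in the final step precisely enough that the threshold comes out to be exactly $\gamma_3>0$ and the multiplicative constants match those displayed in \eqref{optimization.thm1.error}--\eqref{optimization.thm1.error2}.
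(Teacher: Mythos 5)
Your argument is correct, and its first half (the ``cone condition'' $\|h-P_{i_0}h\|_{\mathbf M}\le\|P_{i_0}h\|_{\mathbf M}+2\sigma_{{\mathbf A},{\mathbf M}}(x^0)$, obtained from minimality, the common-best-approximator property and the two norm-splitting identities) is exactly the paper's inequality \eqref{optimization.thm1.pf.eq4}, with your $P_{i_0}h$ equal to the paper's $h_0$. The second half, however, is a genuinely different and leaner route. The paper never applies the sparse Riesz property to $h$ in the direct self-bounding way you do; instead it introduces a \emph{second-level} best approximator $h_1$ of $h-h_0$, derives a coupled pair of inequalities for $\|h_0\|_{{\mathbf H}_1}$ and $\|h_1\|_{{\mathbf H}_1}$, and controls the tail $\|h-h_0-h_1\|_{{\mathbf H}_1}$ by $\sqrt{a_{\mathbf A}}\,\|h-h_0\|_{\mathbf M}$ via Proposition \ref{aA.prop} (whose proof rests on convergence of the greedy algorithm and the sparse density axiom). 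Your route bypasses $h_1$, the coupled system, and Proposition \ref{aA.prop} entirely: combining \eqref{sap.def} applied to $h$ with \eqref{almostlinearproperty} applied to $(x^0_{\mathbf M},x^0)$ and the bounds $\sigma_{{\mathbf A},{\mathbf M}}(h)\le\sqrt{s_{\mathbf A}}\|h\|_{{\mathbf H}_1}+2\sigma$, $\sigma_{{\mathbf A},{\mathbf M}}(x^0_{\mathbf M})\le\sqrt{s_{\mathbf A}}\|h\|_{{\mathbf H}_1}+3\sigma$ closes a single inequality in $\|h\|_{{\mathbf H}_1}$. Two small corrections to your bookkeeping: the surviving coefficient is not $D^{-1}\gamma_3$ but $D^{-1}\bigl(1-D\gamma_1-(\beta+D\gamma_2)\sqrt{a_{\mathbf A}s_{\mathbf A}}\bigr)$, which is \emph{larger} than $D^{-1}\gamma_3$ under \eqref{dbetagamma.condition}, so your hypothesis is actually weaker and your ${\mathbf H}_1$-estimate sharper than \eqref{optimization.thm1.error} (your numerator $D\varepsilon+(2\beta+4D\gamma_2)\sqrt{a_{\mathbf A}}\,\sigma$ is dominated by the paper's); and, after weakening your denominator to $\gamma_3$, the ${\mathbf M}$-estimate you obtain from $\|h\|_{\mathbf M}\le 2\sqrt{s_{\mathbf A}}\|h\|_{{\mathbf H}_1}+2\sigma$ has a $\sigma$-coefficient that can slightly exceed the one displayed in \eqref{optimization.thm1.error2} when $\gamma_2>2\gamma_1$. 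So you prove the theorem as a qualitative statement --- and improve the ${\mathbf H}_1$-bound --- but reproducing the exact displayed constants requires the paper's $h_0,h_1$ decomposition. What the paper's heavier machinery buys is precisely those constants; what yours buys is a shorter proof with a weaker sufficient condition and no reliance on the greedy algorithm.
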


To prove Theorem \ref{optimization.thm1}, we need the following approximation property for sparse approximation triples.

\begin{prop}\label{aA.prop} Let $({\bf A}, {\bf M}, {\bf H}_1)$ be a sparse approximation triple
and $a_{\bf A}$ be as in   \eqref{secondapproximationestimate}.
Then
\begin{equation}\label{b1Mtransfer}
\|x-x_{\mathbf A, \mathbf M}\|_{{\mathbf H}_1}
\le a_{\mathbf A}  \|x\|_{\mathbf M}, \ x\in {\mathbf M},
\end{equation}
where $x_{\mathbf A, \mathbf M}$ is a best approximator of $x\in {\bf M}$.
\end{prop}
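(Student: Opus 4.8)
The plan is to reduce the inequality to a telescoping computation built on the greedy decomposition of $x$ with respect to ${\mathbf A}$. Using the proximinality property of the triple I would set $x^{(0)}=x$ and, recursively for $k\ge 1$, choose a best ${\mathbf A}$-approximator $g_k$ of $x^{(k-1)}$ in the norm $\|\cdot\|_{\mathbf M}$ and put $x^{(k)}=x^{(k-1)}-g_k$; thus $g_1=x_{{\mathbf A},{\mathbf M}}$ and $x^{(1)}=x-x_{{\mathbf A},{\mathbf M}}$. Since $g_k$ lies in some subspace ${\mathbf A}_{i_k}$ and minimizes the ${\mathbf M}$-distance to the larger set ${\mathbf A}\supseteq{\mathbf A}_{i_k}$, it is automatically a best ${\mathbf A}_{i_k}$-approximator as well, so both parts of the norm-splitting property \eqref{optimization.thm1.eq3}--\eqref{bestapproximation.hm.equiv} are available at every stage: $\|x^{(k-1)}\|_{\mathbf M}=\|g_k\|_{\mathbf M}+\|x^{(k)}\|_{\mathbf M}$ and $\|x^{(k-1)}\|_{{\mathbf H}_1}^2=\|g_k\|_{{\mathbf H}_1}^2+\|x^{(k)}\|_{{\mathbf H}_1}^2$. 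Telescoping the first identity gives $\sum_{k\ge1}\|g_k\|_{\mathbf M}\le\|x\|_{\mathbf M}$, and telescoping the second from $x^{(1)}$ gives $\|x-x_{{\mathbf A},{\mathbf M}}\|_{{\mathbf H}_1}^2=\sum_{k\ge2}\|g_k\|_{{\mathbf H}_1}^2+\lim_{n\to\infty}\|x^{(n)}\|_{{\mathbf H}_1}^2$, the limit existing because $\|x^{(n)}\|_{{\mathbf H}_1}$ is nonincreasing.

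Next I would feed in the definition of $a_{\mathbf A}$. Applying \eqref{secondapproximationestimate} to the element $x^{(k-1)}$, whose first best approximator is $g_k$ and whose second best approximator may be taken to be $g_{k+1}$ (a best ${\mathbf A}$-approximator of $x^{(k)}=x^{(k-1)}-g_k$), yields $\|g_{k+1}\|_{{\mathbf H}_1}^2\le a_{\mathbf A}\|g_k\|_{\mathbf M}^2$ for every $k\ge1$. Substituting this into the telescoped ${\mathbf H}_1$-identity, using $\sum_j b_j^2\le\big(\sum_j b_j\big)^2$ for $b_j\ge0$, and then $\sum_k\|g_k\|_{\mathbf M}\le\|x\|_{\mathbf M}$, I arrive at $\|x-x_{{\mathbf A},{\mathbf M}}\|_{{\mathbf H}_1}^2\le a_{\mathbf A}\|x\|_{\mathbf M}^2+\lim_{n\to\infty}\|x^{(n)}\|_{{\mathbf H}_1}^2$, which is the conclusion of Proposition \ref{aA.prop} as soon as the residual term $\lim_{n\to\infty}\|x^{(n)}\|_{{\mathbf H}_1}$ is shown to vanish.

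The hard part will be exactly that: proving the greedy residuals vanish in ${\mathbf H}_1$. Here is the route I would take. Since $\sum_k\|g_k\|_{\mathbf M}<\infty$, the series $\sum_k g_k$ converges absolutely in ${\mathbf M}$, so $x^{(n)}$ converges in ${\mathbf M}$ — hence, by the continuous imbedding, in ${\mathbf H}_1$ — to some $r_\infty\in{\mathbf M}$, and the residual term equals $\|r_\infty\|_{{\mathbf H}_1}^2$; thus it suffices to prove $r_\infty=0$. Because $\sum_k\|g_k\|_{\mathbf M}<\infty$ forces $g_k\to0$ in ${\mathbf M}$ and $\|x^{(n)}\|_{\mathbf M}=\sigma_{{\mathbf A},{\mathbf M}}(x^{(n-1)})$ by construction, the $1$-Lipschitz continuity of $\sigma_{{\mathbf A},{\mathbf M}}=\operatorname{dist}(\cdot,{\mathbf A})$ gives $\sigma_{{\mathbf A},{\mathbf M}}(r_\infty)=\lim_n\|x^{(n)}\|_{\mathbf M}=\|r_\infty\|_{\mathbf M}$; that is, $0$ is a best ${\mathbf A}$-approximator of $r_\infty$ in $\|\cdot\|_{\mathbf M}$, hence (as $0\in{\mathbf A}_i$ for every $i$) a best ${\mathbf A}_i$-approximator as well. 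By the common-best-approximator property $0$ is then a best ${\mathbf A}_i$-approximator of $r_\infty$ in $\|\cdot\|_{{\mathbf H}_1}$ for every $i$; since each ${\mathbf A}_i$ is a closed subspace of the Hilbert space ${\mathbf H}_1$, this forces $r_\infty\perp{\mathbf A}_i$ for every $i$, hence $r_\infty\perp\bigcup_{k\ge1}k{\mathbf A}$, which is dense in ${\mathbf H}_1$ by the sparse density property; therefore $r_\infty=0$. This last paragraph is the step I expect to need the most care — it is where proximinality, the common-best-approximator property and the sparse density property all enter — whereas the telescoping and the use of $a_{\mathbf A}$ are routine.
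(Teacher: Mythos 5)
Your argument is correct and follows essentially the same route as the paper: both proofs run the greedy algorithm \eqref{greedyalgorithm.m}, apply the definition \eqref{secondapproximationestimate} of $a_{\mathbf A}$ to consecutive greedy increments, and telescope the norm-splitting identity in $\|\cdot\|_{\mathbf M}$; your only variations are to telescope the Hilbert-space identity \eqref{bestapproximation.hm.equiv} and use $\sum_j b_j^2\le\big(\sum_j b_j\big)^2$ where the paper simply applies the triangle inequality to $x-x_{{\mathbf A},{\mathbf M}}=\sum_{k\ge 1}u_k$, and to re-derive the vanishing of the greedy residual in place of citing Proposition \ref{greedy.prop}, whose proof in the paper is precisely your orthogonality-plus-sparse-density argument. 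Note that both your computation and the paper's actually yield the bound $\sqrt{a_{\mathbf A}}\,\|x\|_{\mathbf M}$ rather than the $a_{\mathbf A}\,\|x\|_{\mathbf M}$ displayed in \eqref{b1Mtransfer}; since $a_{\mathbf A}\le 1$ this is the weaker constant, but it is the one the paper itself uses whenever it invokes the proposition (e.g.\ in \eqref{optimization.thm1.pf.eq2c}), so the displayed constant appears to be a typo and your proof matches the intended statement.
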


We postpone the proof of the above proposition to Appendix \ref{sat.appendix} and start the proof of Theorem \ref{optimization.thm1}.

\begin{proof}[Proof of Theorem \ref{optimization.thm1}]
Let $x_{\mathbf A, \mathbf M}^0:={\rm argmin}_{\hat x\in {\mathbf A}} \|x^0-\hat x\|_{\mathbf M}$ be a best approximator  in ${\mathbf A}$ to $x^0$, where the existence follows from the proximinality property of the triple $({\bf A}, {\bf M}, {\bf H}_1)$.
Denote by ${\mathbf A}(x_{\mathbf A, \mathbf M}^0)$ the linear space in ${\mathbf A}$ containing $x_{\mathbf A, \mathbf M}^0$.
Then
\begin{equation}x_{\mathbf A, \mathbf M}^0={\rm argmin}_{\hat x\in {\mathbf A}(x_{\mathbf A, \mathbf M}^0)}\|x^0-\hat x\|_{\mathbf M}=
{\rm argmin}_{\hat x\in {\mathbf A}(x_{\mathbf A, \mathbf M}^0)}\|x^0-\hat x\|_{{\mathbf H}_1}
\end{equation}
by the common best approximator property \eqref{bestapproximation.hm};
and
\begin{equation}\label{optimization.thm1.pf.eq1}
 \|x^0\|_{\mathbf M}=\|x_{\mathbf A, \mathbf M}^0\|_{\mathbf M}+ \|x^0-x_{\mathbf A, \mathbf M}^0\|_{\mathbf M}
 =\|x_{\mathbf A, \mathbf M}^0\|_{\mathbf M}+\sigma_{\mathbf A, \mathbf M}(x^0)
\end{equation}
by  the norm splitting properties \eqref{optimization.thm1.eq3}  and \eqref{bestapproximation.hm.equiv}.

Let
 $x_{\mathbf A,\mathbf M}^0+h_0:={\rm argmin}_{\hat x\in {\mathbf A}(x_{\mathbf A, \mathbf M}^0)} \|x_{\mathbf M}^0-\hat x\|_{\mathbf M}
 \in {\mathbf A}(x_{\mathbf A, \mathbf M}^0)$
  be a best approximator to $x_{\mathbf M}^0$ in ${\mathbf A}(x_{\mathbf A, \mathbf M}^0)$.
Then
\begin{equation}
\label{optimization.thm1.pf.eq0}
x_{\mathbf A,\mathbf M}^0+h_0={\rm argmin}_{\hat x\in {\mathbf A}(x_{\mathbf A, \mathbf M}^0)} \|x_{\mathbf M}^0-\hat x\|_{{\mathbf H}_1}\end{equation}
and
\begin{equation} \label{optimization.thm1.pf.eq2}
 \|x_{\mathbf M}^0\|_{\mathbf M}=\|x_{\mathbf A, \mathbf M}^0+h_0\|_{\mathbf M}+ \|x^0_{\mathbf M}-x_{\mathbf A, \mathbf M}^0-h_0\|_{\mathbf M}
\end{equation}
by the common best approximator property and norm splitting property of the triple $({\bf A}, {\bf M}, {\bf H}_1)$. 

Set  $h:=x_{\mathbf M}^0-x^0$. We then obtain from \eqref{optimizationsolution}, 
\eqref{optimization.thm1.pf.eq1}  and \eqref{optimization.thm1.pf.eq2} that
\begin{eqnarray} \label{optimization.thm1.pf.eq4}
\|h-h_0\|_{\mathbf M} &\le & \|(x^0-x_{\mathbf A,\mathbf M}^0)+(h-h_0)\|_{\mathbf M}+\sigma_{\mathbf A, \mathbf M}(x^0)\nonumber\\
& = & \|x_{\mathbf M}^0\|_{\mathbf M}-\|x_{\mathbf A,\mathbf M}^0+h_0\|_{\mathbf M}+\sigma_{\mathbf A, \mathbf M}(x^0)
 \nonumber\\
 &\le &  \|x^0\|_{\mathbf M}-\|x_{\mathbf A,\mathbf M}^0+h_0\|_{\mathbf M}+\sigma_{\mathbf A, \mathbf M}(x^0)  
 \nonumber\\
&\le & \|h_0\|_{\mathbf M}+ 2\sigma_{\mathbf A, \mathbf M}(x^0).  
\end{eqnarray}

Let $h_1:={\rm argmin}_{\hat h\in {\mathbf A}} \|h-h_0-\hat h\|_{\mathbf M}$ be a best approximator of $h-h_0$.
Then
\begin{equation}
 \label{optimization.thm1.pf.eq2b-}
\|h-h_0\|_{{\mathbf H}_1}^2=\|h_1\|_{{\mathbf H}_1}^2+ \|h-h_0-h_1\|_{{\mathbf H}_1}^2,
\end{equation}
\begin{equation} \label{optimization.thm1.pf.eq2b}
 \|h-h_0\|_{\mathbf M}=\|h_1\|_{\mathbf M}+ \|h-h_0-h_1\|_{\mathbf M},
\end{equation}
and
\begin{equation} \label{optimization.thm1.pf.eq2c}
\|h-h_0-h_1\|_{{\mathbf H}_1}\le  \sqrt{a_{\mathbf A}} \|h-h_0\|_{\mathbf M}
\end{equation}
by  \eqref{bestapproximation.hm}, \eqref{optimization.thm1.eq3} and  \eqref{b1Mtransfer}.

From  
     \eqref{optimizationsolution}, \eqref{almostlinearproperty}, \eqref{b1Mtransfer} and \eqref{optimization.thm1.pf.eq4},
it follows that
\begin{eqnarray} \label{optimization.thm1.pf.eq5}
\|F(h)\| & \le & \|F(x_{\mathbf M}^0)-F(x^0)-F(h)\|+\varepsilon\nonumber\\
&\le &   \gamma_1 \|h\|_{{\mathbf H}_1}+ \gamma_2 \sqrt{a_{\mathbf A}}
(\sigma_{\mathbf A, \mathbf M}(x^0)+\sigma_{\mathbf A, \mathbf M}(x_{\mathbf M}^0))+\varepsilon\nonumber\\
& \le &
\gamma_1 \|h_0\|_{{\mathbf H}_1}+ \gamma_1 \|h_1\|_{{\mathbf H}_1}+ \gamma_1 \|h-h_0-h_1\|_{{\mathbf H}_1}\nonumber\\
& & +
\gamma_2 \sqrt{a_{\mathbf A}}
(\sigma_{\mathbf A, \mathbf M}(x^0)+ \|(x^0- x_{\mathbf A, \mathbf M}^0)+(h-h_0)\|_{\mathbf M})+\varepsilon\nonumber\\
& \le &  \gamma_1 \|h_0\|_{{\mathbf H}_1}+
\gamma_1 \|h_1\|_{{\mathbf H}_1}+(\gamma_1+\gamma_2)  \sqrt{a_{\mathbf A}} \|h-h_0\|_{\mathbf M}\nonumber\\
& & + 2 \gamma_2  \sqrt{a_{\mathbf A}}\sigma_{\mathbf A, \mathbf M}(x^0)+\varepsilon\nonumber\\
& \le & \gamma_1  \|h_0\|_{{\mathbf H}_1}+
\gamma_1 \|h_1\|_{{\mathbf H}_1}+ (\gamma_1+\gamma_2) \sqrt{a_{\mathbf A}} \|h_0\|_{\mathbf M}\nonumber\\
& & +
 2(\gamma_1+2\gamma_2) \sqrt{a_{\mathbf A}} \sigma_{\mathbf A, \mathbf M}(x^0)+\varepsilon.
\end{eqnarray}

By the definition of $x_{{\bf A}, {\bf M}}^0$, we have that
$$
x_{{\bf A}, {\bf M}}^0= P_{{\bf A}(x_{{\bf A}, {\bf M}}^0)}(x^0) \ {\rm and}
\ x_{{\bf A}, {\bf M}}^0+h_0=P_{{\bf A}(x_{{\bf A}, {\bf M}}^0)}(x^0_{\bf M}),$$
where $P_{\mathbf V}$ is the projection operator from ${\mathbf H}_1$ to its closed subspace ${\mathbf V}$.
Therefore 
\begin{equation}\label{h0projection}
h_0=P_{{\bf A}(x_{{\bf A}, {\bf M}}^0)}(h).\end{equation}
By \eqref{sap.def}, \eqref{optimization.thm1.pf.eq0}, 
\eqref{optimization.thm1.pf.eq2b-} and \eqref{h0projection} we get
\begin{eqnarray} \label{optimization.thm1.pf.eq6}
\|h_0\|_{{\mathbf H}_1} & = & \|P_{{\mathbf A}(x_{\mathbf A, \mathbf M}^0)}(h)\|_{{\mathbf H}_1}\le \|h\|_{{\mathbf H}_1}\nonumber\\
&\le & D \|F(h)\|
+\beta \sqrt{a_{\mathbf A}} \sigma_{\mathbf A,\mathbf M}(h)\nonumber\\
& \le &
D \|F(h)\|+\beta \sqrt{a_{\mathbf A}} \|h-h_0\|_{\mathbf M}
\end{eqnarray}
and
\begin{eqnarray}  \label{optimization.thm1.pf.eq7}
\|h_1\|_{{\mathbf H}_1} & = & \|P_{{\mathbf A}(h_1)}(I-P_{{\mathbf A}(x_{\mathbf A, \mathbf M}^0)})(h)\|_{{\mathbf H}_1}\le
\|h\|_{{\mathbf H}_1} 
\nonumber\\
 & \le &  D \|F(h)\| +\beta \sqrt{a_{\mathbf A}} \|h-h_0\|_{\mathbf M}.
\end{eqnarray}
 Hence by \eqref{rationB1M}, \eqref{optimization.thm1.pf.eq4}, \eqref{optimization.thm1.pf.eq5}, \eqref{optimization.thm1.pf.eq6}
and  \eqref{optimization.thm1.pf.eq7}, we have
\begin{eqnarray} \label{optimization.thm1.pf.eq8}
\|h_0\|_{{\mathbf  H}_1} & \le &
D\gamma_1 \|h_0\|_{{\mathbf H}_1}+(D\gamma_1+D\gamma_2+\beta) \sqrt{a_{\mathbf A}}\|h_0\|_{\mathbf M}+
D \gamma_1 \|h_1\|_{{\mathbf H}_1}\nonumber\\
& &
+ 2(D\gamma_1+2D\gamma_2+\beta) \sqrt{a_{\mathbf A}}\sigma_{\mathbf A, \mathbf M}(x^0)
+ D\varepsilon; 
\end{eqnarray}
and
\begin{eqnarray} \label{optimization.thm1.pf.eq9}
\|h_1\|_{{\mathbf H}_1} & \le &
D\gamma_1\|h_0\|_{{\mathbf H}_1}+(D\gamma_1+D\gamma_2+\beta) \sqrt{a_{\mathbf A}}\|h_0\|_{\mathbf M}+
D \gamma_1 \|h_1\|_{{\mathbf H}_1}\nonumber\\
& &
+ 2(D\gamma_1+2D\gamma_2+\beta) \sqrt{a_{\mathbf A}}\sigma_{\mathbf A, \mathbf M}(x^0)
+ D\varepsilon.
\end{eqnarray}

Combining \eqref{optimization.thm1.pf.eq8} and \eqref{optimization.thm1.pf.eq9} and using $\|h_0\|_{\mathbf M}\le \sqrt{s_{\mathbf A}}\|h_0\|_{{\mathbf H}_1}$ lead to
\begin{equation}  \label{optimization.thm1.pf.eq10}
\|h_0\|_{{\mathbf H}_1}\le
\frac{2(D\gamma_1+2D\gamma_2+\beta) \sqrt{a_{\mathbf A}}\sigma_{\mathbf A, \mathbf M}(x^0) + D\varepsilon}
{1-2D\gamma_1-(D\gamma_1+D\gamma_2+\beta)\sqrt{a_{\mathbf A}s_{\mathbf A}}}
\end{equation}
and
\begin{equation}  \label{optimization.thm1.pf.eq11}
\|h_1\|_{{\mathbf H}_1}\le
\frac{2(D\gamma_1+2D\gamma_2+\beta) \sqrt{a_{\mathbf A}}\sigma_{\mathbf A, \mathbf M}(x^0) + D\varepsilon}
{1-2D\gamma_1-(D\gamma_1+D\gamma_2+\beta)\sqrt{a_{\mathbf A}s_{\mathbf A}}}.
\end{equation}
On the other hand,
\begin{eqnarray} \label{optimization.thm1.pf.eq12}
\|h-h_0-h_1\|_{{\mathbf H}_1}
 & \le &  \sqrt{a_{\mathbf A}} \|h-h_0\|_{\mathbf M}\le
\sqrt{a_{\mathbf A}} \|h_0\|_{\mathbf M}+ 2 \sqrt{a_{\mathbf A}}\sigma_{\mathbf A, \mathbf M}(x^0)\nonumber\\
& \le & \sqrt{a_{\mathbf A}s_{\mathbf A}}\|h_0\|_{{\mathbf H}_1}+ 2\sqrt{a_{\mathbf A}}\sigma_{\mathbf A, \mathbf M}(x^0)
\end{eqnarray}
by \eqref{rationB1M},
\eqref{optimization.thm1.pf.eq4} and
\eqref{optimization.thm1.pf.eq2c}.
Therefore
the  error estimates \eqref{optimization.thm1.error} and \eqref{optimization.thm1.error2}  follow from
 \eqref{rationB1M}, \eqref{optimization.thm1.pf.eq4},
 \eqref{optimization.thm1.pf.eq10}, \eqref{optimization.thm1.pf.eq11}
and \eqref{optimization.thm1.pf.eq12}.
\end{proof}

As a corollary, we have the following result for linear mapping $F$, cf. \cite[Theorem 1.1]{sunieee11}
in the classical sparse recovery setting.

\begin{cor}\label{optimization.cor1}
Let $ {\mathbf M},
{\mathbf A}, {\mathbf H}_1, {\mathbf H}_2$ be as in Theorem \ref{optimization.thm1}, and
let  $F: {\mathbf H}_1\longmapsto {\mathbf H}_2$ be  linear
 and have the sparse Riesz property
\eqref{sap.def}  with $D>0$ and $\beta\in (0,1/\sqrt{a_{\mathbf A}s_{\mathbf A}})$.
Given $x^0\in {\mathbf M}$ and $\varepsilon>0$,
the optimization solution of \eqref{optimizationsolution}
 satisfies
$$
\|x_{\mathbf M}^0-x^0\|_{{\mathbf H}_1} \le
\Big(   \frac{2+4\beta}
{1-\beta\sqrt{a_{\mathbf A}s_{\mathbf A}}} \Big)
\sqrt{a_{\mathbf A}}\sigma_{\mathbf A, \mathbf M}(x^0) +\
\frac{(2+\sqrt{a_{\mathbf A}s_{\mathbf A}})D}
{1-\beta\sqrt{a_{\mathbf A}s_{\mathbf A}}}
 \varepsilon
$$
and
$$
\|x_{\mathbf M}^0-x^0\|_{\mathbf M} \le
\Big( \frac{2+2 \beta\sqrt{a_{\mathbf A}s_{\mathbf A}}}
{1-\beta\sqrt{a_{\mathbf A}s_{\mathbf A}}}\Big)
\sigma_{\mathbf A, \mathbf M}(x^0) +\
\frac{2D}
{1-\beta\sqrt{a_{\mathbf A}s_{\mathbf A}}}
\sqrt{s_{\mathbf A}}
 \varepsilon,
$$
where $\sigma_{\mathbf A, \mathbf M}(x^0)=\inf_{\hat x\in {\mathbf A}}\|\hat x-x^0\|_{\mathbf M}$.
\end{cor}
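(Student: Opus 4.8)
The plan is to deduce Corollary \ref{optimization.cor1} directly from Theorem \ref{optimization.thm1} by observing that a linear map is the extreme case of an almost linear map. First I would note that if $F:{\mathbf H}_1\to{\mathbf H}_2$ is linear, then the normalization $F(0)=0$ required in Theorem \ref{optimization.thm1} holds automatically, and moreover $F(x)-F(y)-F(x-y)=0$ for all $x,y\in{\mathbf M}$, so the almost linear property \eqref{almostlinearproperty} holds with $\gamma_1=\gamma_2=0$. Since $({\mathbf A},{\mathbf M},{\mathbf H}_1)$ is a sparse approximation triple and $F$ is assumed to have the sparse Riesz property \eqref{sap.def} with constants $D>0$ and $\beta\in(0,1/\sqrt{a_{\mathbf A}s_{\mathbf A}})$, the quantity $\gamma_3$ in \eqref{dbetagamma.condition} reduces to $\gamma_3=1-\beta\sqrt{a_{\mathbf A}s_{\mathbf A}}$, which is positive precisely by the hypothesis on $\beta$. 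Hence all hypotheses of Theorem \ref{optimization.thm1} are met with $\gamma_1=\gamma_2=0$.

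Second, I would invoke Theorem \ref{optimization.thm1} to obtain the error estimates \eqref{optimization.thm1.error} and \eqref{optimization.thm1.error2} for the optimization solution $x_{\mathbf M}^0$ of \eqref{optimizationsolution}, and then substitute $\gamma_1=\gamma_2=0$ and $\gamma_3=1-\beta\sqrt{a_{\mathbf A}s_{\mathbf A}}$. In \eqref{optimization.thm1.error} the coefficient $2+8D\gamma_2+4\beta$ collapses to $2+4\beta$ while $(2+\sqrt{a_{\mathbf A}s_{\mathbf A}})D$ is unchanged, yielding the first displayed bound of the corollary; in \eqref{optimization.thm1.error2} the coefficient $2-4D\gamma_1+2(D\gamma_1+2D\gamma_2+\beta)\sqrt{a_{\mathbf A}s_{\mathbf A}}$ collapses to $2+2\beta\sqrt{a_{\mathbf A}s_{\mathbf A}}$ and the $\varepsilon$-coefficient $2D\sqrt{s_{\mathbf A}}$ is unchanged, yielding the second displayed bound.

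Since the argument is a specialization, there is essentially no analytic obstacle; the only points requiring care are the bookkeeping ones: confirming that linearity forces $\gamma_1=\gamma_2=0$ (not merely that they are small), that $F(0)=0$ is automatic, and that $\beta<1/\sqrt{a_{\mathbf A}s_{\mathbf A}}$ is exactly the inequality $\gamma_3>0$ in this regime. As an alternative one could give a short self-contained proof by tracing the proof of Theorem \ref{optimization.thm1} with $\gamma_1=\gamma_2=0$, where the estimates \eqref{optimization.thm1.pf.eq5}--\eqref{optimization.thm1.pf.eq12} simplify considerably (the terms involving $\gamma_1\|h_0\|_{{\mathbf H}_1}$, $\gamma_1\|h_1\|_{{\mathbf H}_1}$ and all $\gamma_2$-contributions drop out), but invoking Theorem \ref{optimization.thm1} as a black box is \emph{cleaner} and I would present it that way.
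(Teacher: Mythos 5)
Your proposal is correct and is exactly the paper's (implicit) argument: the paper states Corollary~\ref{optimization.cor1} without a separate proof, as the direct specialization of Theorem~\ref{optimization.thm1} to $\gamma_1=\gamma_2=0$, with $\gamma_3=1-\beta\sqrt{a_{\mathbf A}s_{\mathbf A}}>0$ being precisely the hypothesis $\beta<1/\sqrt{a_{\mathbf A}s_{\mathbf A}}$, and your substitution into \eqref{optimization.thm1.error} and \eqref{optimization.thm1.error2} reproduces both displayed bounds. The bookkeeping (linearity forces $F(0)=0$ and makes \eqref{almostlinearproperty} hold exactly with zero constants) is all that is needed, and you have it.
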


\section{Sparse Riesz property and almost linear property}
\label{sparserieszproperty.section}

In this section, we consider  the sparse Riesz property
\eqref{sap.def} and almost linear property \eqref{almostlinearproperty} for nonlinear maps
 not far from a linear operator with the restricted isometry property \eqref{sparsealgorithm.thm.eq2}.

\begin{thm}\label{sparsereconstruction.optimization.thm}
Let ${\mathbf H}_1$  and ${\mathbf H}_2$ be Hilbert spaces, ${\mathbf M}$ be a Banach space,
 and ${\mathbf A}=\cup_{i\in I} {\mathbf A}_i$ be a union of closed linear subspaces
 of ${\mathbf H}_1$.
Assume that
 $({\mathbf A}, {\mathbf M}, {\mathbf H}_1)$ is a sparse approximation triple,
and
 $T\in {\mathcal B}({\mathbf H}_1, {\mathbf H}_2)$ has
 the
 restricted isometry property  \eqref{sparsealgorithm.thm.eq2} on $2{\mathbf A}$ with
 $\delta_{2{\mathbf A}}(T)<\sqrt{2}/2$.
If  $F$ is a continuous map from ${\mathbf H}_1$ to ${\mathbf H}_2$
with $F(0)=0$ and
\begin{equation} \gamma_{F, T}(2{\bf A})< \frac{\sqrt{2}}{2}-\sqrt{\delta_{2{\bf A}}(T)},
\end{equation}
then
 $F$ has the sparse Riesz property \eqref{sap.def},
\begin{eqnarray}\label{Dbeta}
\|F(x)\|_{{\mathbf H}_2} & \ge &  \big(1-\sqrt{2} (\sqrt{\delta_{2{\mathbf A}}(T)}+ \gamma_{F,T}(2{\mathbf A}))\big)
\|x\|_{{\mathbf H}_1}\nonumber\\
& & - \big(\sqrt{\delta_{2{\mathbf A}}(T)}+ \gamma_{F,T}(2{\mathbf A})\big)
\sqrt{a_{\bf A}} \sigma_{{\bf A}, {\bf M}}(x),\  x\in M.
\end{eqnarray}
\end{thm}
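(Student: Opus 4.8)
The plan is to decompose an arbitrary $x\in{\mathbf M}$ into its iterated best ${\mathbf A}$-approximants and thereby reduce the claim to the linear restricted–isometry estimate for $T$ plus a controllable nonlinear perturbation.

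First I would fix $x\in{\mathbf M}$, put $x^{(0)}:=x$, and set recursively $x_{j+1}:=\big(x^{(j)}\big)_{{\mathbf A},{\mathbf M}}\in{\mathbf A}_{i_{j+1}}$ and $x^{(j+1)}:=x^{(j)}-x_{j+1}$, where $(\cdot)_{{\mathbf A},{\mathbf M}}$ denotes a best ${\mathbf M}$-approximant in ${\mathbf A}$ (which exists by the proximinality property). For $k\ge 2$ the vector $x_k$ is a second best ${\mathbf M}$-approximant of $x^{(k-2)}$ whose first best approximant is $x_{k-1}$, so the definition \eqref{secondapproximationestimate} of $a_{\mathbf A}$ gives $\|x_k\|_{{\mathbf H}_1}\le\sqrt{a_{\mathbf A}}\,\|x_{k-1}\|_{{\mathbf M}}$; telescoping the norm-splitting identity \eqref{optimization.thm1.eq3} gives $\sum_{m\ge2}\|x_m\|_{{\mathbf M}}\le\|x-x_1\|_{{\mathbf M}}=\sigma_{{\mathbf A},{\mathbf M}}(x)$, whence $\sum_{k\ge3}\|x_k\|_{{\mathbf H}_1}\le\sqrt{a_{\mathbf A}}\,\sigma_{{\mathbf A},{\mathbf M}}(x)<\infty$; and $\|x_1\|_{{\mathbf H}_1}^2+\|x_2\|_{{\mathbf H}_1}^2\le\|x\|_{{\mathbf H}_1}^2$ by \eqref{bestapproximation.hm.equiv}. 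Hence $\sum_{j\ge1}\|x_j\|_{{\mathbf H}_1}<\infty$, so $\sum_{j\le n}x_j$ converges in ${\mathbf H}_1$; invoking the sparse density property together with the auxiliary properties of sparse approximation triples from Appendix \ref{sat.appendix}, the limit equals $x$, i.e. $x=\sum_{j\ge1}x_j$ in ${\mathbf H}_1$ and $x^{(n)}\to 0$ in ${\mathbf H}_1$. Establishing this last convergence is the main technical point of the proof.

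Then I would record two estimates. For the \emph{nonlinear part}, the telescoping $F(x)=F(x^{(n)})+\sum_{j=1}^n\big(Tx_j+e_j\big)$ with $e_j:=F(x^{(j-1)})-F(x^{(j)})-Tx_j$, combined with the definition \eqref{tildegammafta} of $\gamma_{F,T}(2{\mathbf A})$ (base point $x^{(j)}\in{\mathbf M}$, perturbation $x_j\in{\mathbf A}\subset 2{\mathbf A}$, so $\|e_j\|_{{\mathbf H}_2}\le\gamma_{F,T}(2{\mathbf A})\|x_j\|_{{\mathbf H}_1}$), and letting $n\to\infty$ (using $F(x^{(n)})\to F(0)=0$ by continuity, boundedness of $T$, and $\sum_j\|x_j\|_{{\mathbf H}_1}<\infty$), yields
\begin{equation*}
F(x)=Tx+E,\qquad \|E\|_{{\mathbf H}_2}\le\gamma_{F,T}(2{\mathbf A})\sum_{j\ge1}\|x_j\|_{{\mathbf H}_1}.
\end{equation*}
For the \emph{linear part}, since $x_i\pm x_j\in{\mathbf A}_{i_i}+{\mathbf A}_{i_j}\subset 2{\mathbf A}$ for all $i,j$, the polarization identity and the restricted isometry property \eqref{sparsealgorithm.thm.eq2} give $|\langle Tx_i,Tx_j\rangle-\langle x_i,x_j\rangle|\le\tfrac12\delta_{2{\mathbf A}}(T)(\|x_i\|_{{\mathbf H}_1}^2+\|x_j\|_{{\mathbf H}_1}^2)$, which homogenizes (normalize $x_i,x_j$ to unit ${\mathbf H}_1$-norm, legitimate since the ${\mathbf A}_i$ are subspaces) to $|\langle Tx_i,Tx_j\rangle-\langle x_i,x_j\rangle|\le\delta_{2{\mathbf A}}(T)\|x_i\|_{{\mathbf H}_1}\|x_j\|_{{\mathbf H}_1}$; summing over $i,j$ and using $\|Tx\|_{{\mathbf H}_2}^2=\sum_{i,j}\langle Tx_i,Tx_j\rangle$ and $\|x\|_{{\mathbf H}_1}^2=\sum_{i,j}\langle x_i,x_j\rangle$ yields $\|Tx\|_{{\mathbf H}_2}^2\ge\|x\|_{{\mathbf H}_1}^2-\delta_{2{\mathbf A}}(T)\big(\sum_j\|x_j\|_{{\mathbf H}_1}\big)^2$, hence $\|Tx\|_{{\mathbf H}_2}\ge\|x\|_{{\mathbf H}_1}-\sqrt{\delta_{2{\mathbf A}}(T)}\sum_j\|x_j\|_{{\mathbf H}_1}$ (trivially true when the right side is negative). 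Combining the two estimates,
\begin{equation*}
\|F(x)\|_{{\mathbf H}_2}\ge\|Tx\|_{{\mathbf H}_2}-\|E\|_{{\mathbf H}_2}\ge\|x\|_{{\mathbf H}_1}-\big(\sqrt{\delta_{2{\mathbf A}}(T)}+\gamma_{F,T}(2{\mathbf A})\big)\sum_{j\ge1}\|x_j\|_{{\mathbf H}_1}.
\end{equation*}

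Finally I would bound the tail sum: by Cauchy–Schwarz and \eqref{bestapproximation.hm.equiv}, $\|x_1\|_{{\mathbf H}_1}+\|x_2\|_{{\mathbf H}_1}\le\sqrt2\,\sqrt{\|x_1\|_{{\mathbf H}_1}^2+\|x_2\|_{{\mathbf H}_1}^2}\le\sqrt2\,\|x\|_{{\mathbf H}_1}$, while $\sum_{j\ge3}\|x_j\|_{{\mathbf H}_1}\le\sqrt{a_{\mathbf A}}\,\sigma_{{\mathbf A},{\mathbf M}}(x)$ from the first paragraph, so $\sum_{j\ge1}\|x_j\|_{{\mathbf H}_1}\le\sqrt2\,\|x\|_{{\mathbf H}_1}+\sqrt{a_{\mathbf A}}\,\sigma_{{\mathbf A},{\mathbf M}}(x)$. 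Substituting this into the last display produces exactly \eqref{Dbeta}, which is the sparse Riesz property \eqref{sap.def} with $D^{-1}=1-\sqrt2\big(\sqrt{\delta_{2{\mathbf A}}(T)}+\gamma_{F,T}(2{\mathbf A})\big)$ and $\beta=D\big(\sqrt{\delta_{2{\mathbf A}}(T)}+\gamma_{F,T}(2{\mathbf A})\big)$; the hypotheses $\delta_{2{\mathbf A}}(T)<\sqrt2/2$ and $\gamma_{F,T}(2{\mathbf A})<\sqrt2/2-\sqrt{\delta_{2{\mathbf A}}(T)}$ guarantee $D^{-1}>0$. The step I expect to be the real obstacle is the convergence claim in the first paragraph: showing that iterating best ${\mathbf M}$-approximation on an arbitrary element of ${\mathbf M}$ produces a series summing in ${\mathbf H}_1$ to that element — this is precisely where the sparse density property and the structure theory of sparse approximation triples enter.
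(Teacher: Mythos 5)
Your proposal is correct and follows essentially the same route as the paper's proof: it decomposes $x$ via the iterated best-$\mathbf M$-approximation (the greedy algorithm \eqref{greedyalgorithm.m}, with your $x_j$ equal to the paper's $u_{j-1}$), telescopes $F$ along this decomposition to reduce to $Tx$ plus an error controlled by $\gamma_{F,T}(2\mathbf A)\sum_j\|x_j\|_{{\mathbf H}_1}$, uses polarization and the RIP to compare $\|Tx\|_{{\mathbf H}_2}$ with $\|x\|_{{\mathbf H}_1}$ up to $\sqrt{\delta_{2\mathbf A}(T)}\sum_j\|x_j\|_{{\mathbf H}_1}$, and closes with the same tail bound $\sum_j\|x_j\|_{{\mathbf H}_1}\le\sqrt2\,\|x\|_{{\mathbf H}_1}+\sqrt{a_{\mathbf A}}\,\sigma_{{\mathbf A},{\mathbf M}}(x)$. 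You also correctly isolate the convergence of the greedy decomposition as the genuine technical point, which the paper supplies as Proposition \ref{greedy.prop} via the consistency condition and the sparse density property.
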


For any $x\in {\mathbf M}$, define
\begin{equation}\label{greedyalgorithm.m}
x_{\mathbf A, \mathbf M}^{k+1}=x_{\mathbf A, \mathbf M}^k+ {\rm argmin}_{\hat x\in {\mathbf A}}\| x-x_{\mathbf A, \mathbf M}^k-\hat x\|_{\mathbf M},\  k\ge 0, \end{equation}
with initial $x_{\mathbf A, \mathbf M}^0=0$. 
To prove Theorem \ref{sparsereconstruction.optimization.thm}, we need
convergence of the above greedy algorithm.

\begin{prop}\label{greedy.prop}
Let  $({\mathbf A}, {\mathbf M}, {\mathbf H}_1)$ be a sparse approximation triple.
Then $x_{\mathbf A, \mathbf M}^k, k\ge 0$, in the greedy algorithm \eqref{greedyalgorithm.m}
converges to $x\in {\mathbf M}$,
 \begin{equation}\label{greedylimit}
\lim_{k\to \infty} \|x_{\mathbf A, \mathbf M}^k-x\|_{\mathbf M}=0.\end{equation}
\end{prop}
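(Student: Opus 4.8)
The plan is to track the residuals $r_k := x - x_{\mathbf A, \mathbf M}^k$ and the increments $y_k := x_{\mathbf A, \mathbf M}^{k+1}-x_{\mathbf A, \mathbf M}^k = {\rm argmin}_{\hat x\in {\mathbf A}}\|r_k-\hat x\|_{\mathbf M}\in {\mathbf A}$, so that $r_{k+1}=r_k-y_k$ and $\|r_{k+1}\|_{\mathbf M}=\sigma_{{\mathbf A},{\mathbf M}}(r_k)$. Since $y_k$ lies in some subspace ${\mathbf A}_i$ and realizes the infimum over all of ${\mathbf A}=\cup_{i\in I}{\mathbf A}_i$, it is also a best $\mathbf M$-approximator of $r_k$ in ${\mathbf A}_i$, so the norm-splitting property \eqref{optimization.thm1.eq3} gives $\|r_k\|_{\mathbf M}=\|y_k\|_{\mathbf M}+\|r_{k+1}\|_{\mathbf M}$. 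Hence $\|r_k\|_{\mathbf M}$ is nonincreasing with some limit $L\ge 0$, and summing the telescoping identity yields $\sum_{k\ge 0}\|y_k\|_{\mathbf M}=\|r_0\|_{\mathbf M}-L=\|x\|_{\mathbf M}-L<\infty$.

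Next I would pass to the limit in $\mathbf M$. Since $\mathbf M$ is a Banach space and $\sum_k\|y_k\|_{\mathbf M}<\infty$, the series $\sum_k y_k$ converges absolutely in $\mathbf M$; call its sum $S$. Then $x_{\mathbf A, \mathbf M}^k=\sum_{j<k}y_j\to S$ in $\mathbf M$, so $r_k\to r_\infty:=x-S$ in $\mathbf M$ (and, by the continuous imbedding ${\mathbf M}\subset {\mathbf H}_1$, also in ${\mathbf H}_1$), with $\|r_\infty\|_{\mathbf M}=L$. It then remains only to prove $r_\infty=0$, equivalently $L=0$, which is exactly \eqref{greedylimit}.

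To show $r_\infty=0$ I would combine the two norms on $\mathbf M$ using the axioms of a sparse approximation triple. The map $\sigma_{{\mathbf A},{\mathbf M}}$ is $1$-Lipschitz on $\mathbf M$, so $\sigma_{{\mathbf A},{\mathbf M}}(r_k)\to\sigma_{{\mathbf A},{\mathbf M}}(r_\infty)$; on the other hand $\sigma_{{\mathbf A},{\mathbf M}}(r_k)=\|r_{k+1}\|_{\mathbf M}\to L=\|r_\infty\|_{\mathbf M}$. Hence $\sigma_{{\mathbf A},{\mathbf M}}(r_\infty)=\|r_\infty\|_{\mathbf M}$, i.e.\ $0\in{\mathbf A}$ is a best $\mathbf M$-approximator of $r_\infty$ in ${\mathbf A}$. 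Fix $i\in I$; since ${\mathbf A}_i\subset{\mathbf A}$ and $0\in{\mathbf A}_i$, $0$ is also a best $\mathbf M$-approximator of $r_\infty$ in ${\mathbf A}_i$, so by the common-best-approximator property \eqref{bestapproximation.hm} it is a best $\mathbf{H}_1$-approximator of $r_\infty$ in ${\mathbf A}_i$ as well. As ${\mathbf A}_i$ is a closed subspace of the Hilbert space $\mathbf{H}_1$, the latter forces $P_{{\mathbf A}_i}r_\infty=0$, i.e.\ $\langle r_\infty,a\rangle=0$ for every $a\in{\mathbf A}_i$. Since $i$ is arbitrary, $r_\infty\perp{\mathbf A}$, hence $r_\infty\perp k{\mathbf A}$ for every $k\ge 1$, hence $r_\infty\perp\cup_{k\ge 1}k{\mathbf A}$, which is dense in $\mathbf{H}_1$ by the sparse density property. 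Therefore $r_\infty=0$ in $\mathbf{H}_1$, so also in $\mathbf M$; thus $L=0$ and $\|x-x_{\mathbf A, \mathbf M}^k\|_{\mathbf M}=\|r_k\|_{\mathbf M}\to 0$. (One could also reach $P_{{\mathbf A}_i}r_\infty=0$ by applying the norm-splitting property \eqref{optimization.thm1.eq3} on ${\mathbf A}_i$ directly to get $\|(r_\infty)_{{\mathbf A}_i,{\mathbf M}}\|_{\mathbf M}\le 0$.)

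The only genuinely delicate point is this last paragraph: monotonicity and summability of $\|y_k\|_{\mathbf M}$ are immediate from the norm-splitting axiom, and absolute convergence of $\sum_k y_k$ in $\mathbf M$ is routine, but identifying the $\mathbf M$-limit $r_\infty$ as $0$ seems to require leaving $\mathbf M$ and exploiting the Hilbert structure of $\mathbf{H}_1$, via the common-best-approximator property together with the density of $\cup_{k\ge 1}k{\mathbf A}$ in $\mathbf{H}_1$; without those two axioms the argument would not close.
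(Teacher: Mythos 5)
Your proof is correct and follows essentially the same route as the paper's: norm-splitting gives summability of the increments and hence convergence in $\mathbf M$, and then the common-best-approximator property plus the sparse density of $\cup_k k{\mathbf A}$ in ${\mathbf H}_1$ force the limit residual to vanish. The only difference is that you justify more carefully (via the $1$-Lipschitz continuity of $\sigma_{{\mathbf A},{\mathbf M}}$) the step that $0$ is a best ${\mathbf M}$-approximator of the limit residual, which the paper asserts without elaboration.
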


We postpone the proof of the above proposition to Appendix B and start the
proof of Theorem \ref{sparsereconstruction.optimization.thm}.

\begin{proof}
Take $x\in {\mathbf M}$, let $x_{\mathbf A, \mathbf M}^k, k\ge 0$,
be as in the greedy algorithm \eqref{greedyalgorithm.m}. Then
from Proposition \ref{greedy.prop}, 
 the continuity of $F$ on ${\mathbf H}_1$, and the continuous imbedding
of  ${\mathbf M}$ into ${\mathbf H}_1$ it follows that
\begin{equation} \label{sparsereconstruction.optimization.thm2.pf.eq0+}
\lim_{k\to \infty} \|F(x_{\mathbf A, \mathbf M}^k)-F(x)\|_{{\bf H}_2}=0.
\end{equation}
Write
$u_k=x_{\mathbf A, \mathbf M}^{k+1}-x_{\mathbf A, \mathbf M}^{k}, k\ge 0$.
Then
 $u_k\in {\mathbf A}$, and
\begin{eqnarray} \label{sparsereconstruction.optimization.thm2.pf.eq1}
\|F(x)-Tx\|_{{\bf H}_2}
& \le &  \sum_{k=0}^\infty
\big\|F(x_{\mathbf A, \mathbf M}^{k+1})-F(x_{\mathbf A, \mathbf M}^k)-T u_k\big\|_{{\bf H}_2}\nonumber\\
&\le &  \gamma_{F,T}(2{\mathbf A}) \sum_{k=0}^\infty \|u_k\|_{{\mathbf H}_1}
\end{eqnarray}
by \eqref{sparsealgorithm.thm.eq2}, \eqref{tildegammafta},
\eqref{sparsereconstruction.optimization.thm2.pf.eq0+} and the assumption $F(0)=0$.

Observe that
$4\langle T\tilde u_k, T \tilde u_{k'}\rangle = \|T(\tilde u_k+\tilde u_{k'})\|_{{\mathbf H}_2}^2-\|T(\tilde u_k-\tilde u_{k'})\|_{{\mathbf H}_2}^2$,
where $\tilde u_k= u_k/\|u_k\|_{{\mathbf H}_1}, k\ge 0$.
Then
\begin{equation}\label{sparsereconstruction.optimization.thm2.pf.eq2}
\big|\langle Tu_k, T u_{k'}\rangle-\langle u_k, u_{k'}\rangle|
\le \delta_{2{\mathbf A}}(T) \|u_k\|_{{\mathbf H}_1}\|u_{k'}\|_{{\mathbf H}_1}, \ k, k'\ge 0,
\end{equation}
by the restricted isometry property \eqref{sparsealgorithm.thm.eq2}.
We remark that in the classical sparse recovery setting,
the inner product $\langle u_k, u_{k'}\rangle$ between different $u_{k}$ and $u_{k'}$ is always zero, but it may be nonzero in our  setting.
Hence for $K\ge 1$,
\begin{eqnarray} \label{sparsereconstruction.optimization.thm2.pf.eq3}
\big\|Tx_{\mathbf A, \mathbf M}^{K+1}\big\|_{{\bf H}_2}^2 & =
& \Big\|T\Big(\sum_{k=0}^K u_k\Big)\Big\|_{{\bf H}_2}^2
=  \sum_{k=0}^K \|Tu_k\|_{{\bf H}_2}^2+
\sum_{0\le k\ne k'\le K} \langle Tu_k, Tu_{k'}\rangle\nonumber\\
& \le &  (1+\delta_{2{\mathbf A}}(T))
\sum_{k=0}^K \|u_k\|_{{\mathbf H}_1}^2
+
\sum_{0\le k\ne k'\le K} \langle u_k, u_{k'}\rangle\nonumber\\
& & +\delta_{2{\mathbf A}}(T) \sum_{0\le k\ne k'\le K} \|u_k\|_{{\mathbf H}_1}\|u_{k'}\|_{{\mathbf H}_1}\nonumber\\
& = & 
\big\|x_{\mathbf A, \mathbf M}^{K+1}\big\|_{{\mathbf H}_1}^2+
\delta_{2{\mathbf A}}(T) \Big(\sum_{k=0}^K \|u_{k}\|_{{\mathbf H}_1}\Big)^2,
\end{eqnarray}
and similarly
\begin{equation} \label{sparsereconstruction.optimization.thm2.pf.eq4}
\|Tx_{\mathbf A, \mathbf M}^{K+1}\|_{{\bf H}_2}^2 
 \ge  
 \big\|x_{\mathbf A, \mathbf M}^{K+1}\big\|_{{\mathbf H}_1}^2-
\delta_{2{\mathbf A}}(T) \Big(\sum_{k=0}^K \|u_{k}\|_{{\mathbf H}_1}\Big)^2.
\end{equation}
Therefore combining
\eqref{sparsereconstruction.optimization.thm2.pf.eq3}
and
\eqref{sparsereconstruction.optimization.thm2.pf.eq4},
and then applying  \eqref{optimization.thm1.eq1} and \eqref{greedylimit}
  when  taking limit  as $K\to \infty$, we obtain
\begin{equation*}
 -\delta_{2{\mathbf A}}(T)
 \Big(\sum_{k\ge 0} \|u_{k}\|_{{\mathbf H}_1}\Big)^2\le
\|Tx\|_{{\mathbf H}_2}^2-\|x\|_{{\mathbf H}_1}^2 \le \delta_{2{\mathbf A}}(T)
\Big(\sum_{k\ge 0} \|u_{k}\|_{{\mathbf H}_1}\Big)^2,
\end{equation*}
which implies that
\begin{equation}
 \label{sparsereconstruction.optimization.thm2.pf.eq6}
 \|Tx\|_{{\mathbf H}_2} -\sqrt{\delta_{2{\mathbf A}}(T)}
\sum_{k\ge 0} \|u_{k}\|_{{\mathbf H}_1}\le
\|x\|_{{\mathbf H}_1} \le
\|Tx\|_{{\mathbf H}_2} +\sqrt{\delta_{2{\mathbf A}}(T)}
\sum_{k\ge 0} \|u_{k}\|_{{\mathbf H}_1}.
\end{equation}

By \eqref{secondapproximationestimate},  
\begin{equation}\label{ukmh.estimate}
\|u_k\|_{{\mathbf H}_1}\le \sqrt{ a_{\mathbf A}} \|u_{k-1}\|_{\mathbf M},\  k\ge 1.
\end{equation}
This together with
\eqref{bestapproximation.hm} and \eqref{optimization.thm1.eq3}
implies that
\begin{equation}  \label{sparsereconstruction.optimization.thm2.pf.eq7}
\sum_{k\ge 0} \|u_{k}\|_{{\mathbf H}_1}  \le  \|u_0\|_{{\mathbf H}_1}+ \|u_1\|_{{\mathbf H}_1}+ \sqrt{ a_{\mathbf A}}
\sum_{k\ge 2} \|u_{k-1}\|_{\mathbf M}\le \sqrt{2} \|x\|_{{\mathbf H}_1}+ \sqrt{ a_{\mathbf A}} \sigma_{{\mathbf A}, {\mathbf M}}(x).
\end{equation}
Combining
 \eqref{sparsereconstruction.optimization.thm2.pf.eq1},
  \eqref{sparsereconstruction.optimization.thm2.pf.eq6}
and
  \eqref{sparsereconstruction.optimization.thm2.pf.eq7}
  gives
  \begin{equation*}
\big|\|F(x)\|_{{\mathbf H}_2}-\|x\|_{{\mathbf H}_1}\big|\le
 \big (\sqrt{\delta_{2{\mathbf A}}(T)}+
  \gamma_{F,T}(2{\mathbf A})\big)
 \big(\sqrt{2}\|x\|_{{\mathbf H}_1}+ \sqrt{a_{\mathbf A}}  \sigma_{{\mathbf A}, {\mathbf M}}(x)\big).
\end{equation*}
Reformulating the above estimates completes the proof of the estimate
\eqref{Dbeta} for the sparse Riesz property of $F$.
\end{proof}

\begin{thm}\label{almostlinear.thm}
Let ${\mathbf H}_1, {\mathbf H}_2, {\mathbf M}, {\mathbf A}, T, F$
be as in Theorem \ref{sparsereconstruction.optimization.thm} with additional assumption that $\gamma_{F,T}(4{\mathbf A})<\infty$.
Then
$F$ has the almost linear property on ${\bf A}$,
  \begin{eqnarray}\label{gamma1gamma2} 
 & & \|F(x)-F(y)-F(x-y)\|_{{\mathbf H}_2}  \le   2 \gamma_{F,T}(4{\mathbf A})
 \|x-y\|_{{\mathbf H}_1}\nonumber\\
 & &\qquad\qquad  +
 2\big( \gamma_{F,T}(2{\mathbf A})+\gamma_{F,T}(4{\mathbf A})\big)
\sqrt{a_{\bf A}}( \sigma_{\mathbf A, \mathbf M}(x)+\sigma_{\mathbf A, \mathbf M}(y)).
\end{eqnarray}
 \end{thm}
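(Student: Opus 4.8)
The plan is to reduce everything to the ``almost linear'' defect $\gamma_{F,T}(k\mathbf{A})$ by splitting $x-y$ into a $4$-sparse part on which $\gamma_{F,T}$ applies directly and a tail that is small in $\|\cdot\|_{\mathbf{H}_1}$. Fix $x,y\in\mathbf{M}$. Using the proximinality property of the triple, pick $\mathbf{M}$-best approximators $x_{\mathbf{A},\mathbf{M}},y_{\mathbf{A},\mathbf{M}}\in\mathbf{A}$ of $x,y$, set $q:=(x-x_{\mathbf{A},\mathbf{M}})-(y-y_{\mathbf{A},\mathbf{M}})\in\mathbf{M}$, so $\|q\|_{\mathbf{M}}\le\sigma_{\mathbf{A},\mathbf{M}}(x)+\sigma_{\mathbf{A},\mathbf{M}}(y)$, and run the greedy algorithm \eqref{greedyalgorithm.m} on $q$ to write $q=\sum_{m\ge0}v_m$ with $v_m\in\mathbf{A}$ and $\sum_{k<m}v_k\to q$ in $\mathbf{M}$ (Proposition~\ref{greedy.prop}). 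Then decompose
\[
x-y=p+q',\qquad p:=(x_{\mathbf{A},\mathbf{M}}-y_{\mathbf{A},\mathbf{M}})+v_0\in 3\mathbf{A}\subset 4\mathbf{A},\qquad q':=\sum_{m\ge1}v_m.
\]
The reason for folding the leading greedy term $v_0$ into $p$ is that, by \eqref{secondapproximationestimate} (exactly as in \eqref{ukmh.estimate}), $\|v_m\|_{\mathbf{H}_1}\le\sqrt{a_{\mathbf{A}}}\,\|v_{m-1}\|_{\mathbf{M}}$ for $m\ge1$, while the norm-splitting property telescopes $\sum_{m\ge0}\|v_m\|_{\mathbf{M}}$ to $\|q\|_{\mathbf{M}}$; hence $\|q'\|_{\mathbf{H}_1}\le\sum_{m\ge1}\|v_m\|_{\mathbf{H}_1}\le\sqrt{a_{\mathbf{A}}}\,\|q\|_{\mathbf{M}}\le\sqrt{a_{\mathbf{A}}}\,(\sigma_{\mathbf{A},\mathbf{M}}(x)+\sigma_{\mathbf{A},\mathbf{M}}(y))$, and therefore $\|p\|_{\mathbf{H}_1}\le\|x-y\|_{\mathbf{H}_1}+\|q'\|_{\mathbf{H}_1}\le\|x-y\|_{\mathbf{H}_1}+\sqrt{a_{\mathbf{A}}}\,(\sigma_{\mathbf{A},\mathbf{M}}(x)+\sigma_{\mathbf{A},\mathbf{M}}(y))$.

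Next I expand $F(x)-F(y)$ and $F(x-y)$ along the same two-stage path $0\rightsquigarrow p\rightsquigarrow p+q'$ (shifted by $y$ in the first case), using $F(0)=0$. Since $p\in4\mathbf{A}$ and the base points $y,0$ lie in $\mathbf{M}$, the first legs give $F(y+p)-F(y)=Tp+e_1$ and $F(p)-F(0)=Tp+e_1'$ with $\|e_1\|,\|e_1'\|\le\gamma_{F,T}(4\mathbf{A})\|p\|_{\mathbf{H}_1}$. For the second legs I telescope over the increments $v_m\in\mathbf{A}$, whose base points $y+p+\sum_{k<m}v_k$ (resp.\ $p+\sum_{k<m}v_k$) all lie in $\mathbf{M}$; passing to the limit via continuity of $F$ and of $T:\mathbf{H}_1\to\mathbf{H}_2$ together with Proposition~\ref{greedy.prop}, and noting $\sum_m\|e_{\cdot,m}\|<\infty$, I obtain $F(y+p+q')-F(y+p)=Tq'+e_2$ and $F(p+q')-F(p)=Tq'+e_2'$ with $\|e_2\|,\|e_2'\|\le\gamma_{F,T}(\mathbf{A})\sum_{m\ge1}\|v_m\|_{\mathbf{H}_1}\le\gamma_{F,T}(2\mathbf{A})\sqrt{a_{\mathbf{A}}}\,(\sigma_{\mathbf{A},\mathbf{M}}(x)+\sigma_{\mathbf{A},\mathbf{M}}(y))$. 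Adding up and using linearity of $T$, $F(x)-F(y)=T(x-y)+e_1+e_2$ and $F(x-y)=T(x-y)+e_1'+e_2'$, so $F(x)-F(y)-F(x-y)=(e_1-e_1')+(e_2-e_2')$, and the triangle inequality combined with the bounds above gives precisely \eqref{gamma1gamma2}.

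The one genuinely nontrivial step is choosing the decomposition correctly: the obvious candidate $p=x_{\mathbf{A},\mathbf{M}}-y_{\mathbf{A},\mathbf{M}}\in2\mathbf{A}$ does not suffice, because the leading greedy term $v_0$ of the residual $q$ is only controlled in $\mathbf{M}$-norm (it does \emph{not} gain the factor $\sqrt{a_{\mathbf{A}}}$), so it must be absorbed into the sparse part, which is what pushes the sparsity level from $2$ to $4$ and explains the appearance of $\gamma_{F,T}(4\mathbf{A})$ on $\|x-y\|_{\mathbf{H}_1}$; the remaining greedy tail $q'$ does gain $\sqrt{a_{\mathbf{A}}}$ through \eqref{secondapproximationestimate}, and its increments being single elements of $\mathbf{A}$ is why only $\gamma_{F,T}(2\mathbf{A})$ (indeed even $\gamma_{F,T}(\mathbf{A})$) multiplies the $\sigma$-terms it produces. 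Everything else — the telescoping identities, convergence of the series $\sum_m\|v_m\|_{\mathbf{H}_1}$ and $\sum_m\|e_{\cdot,m}\|$, and the interchange of limits with $F$ and $T$ — is routine given the hypotheses and Proposition~\ref{greedy.prop}.
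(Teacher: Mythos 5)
Your proof is correct, and it follows the same basic mechanism as the paper's --- triangulate through $T(x-y)$, use the greedy algorithm \eqref{greedyalgorithm.m} with the $\sqrt{a_{\mathbf A}}$ tail gain \eqref{ukmh.estimate}, and absorb the first ``ungained'' greedy increment into the sparse block so that $\gamma_{F,T}(4{\mathbf A})$ only multiplies the $\|x-y\|_{{\mathbf H}_1}$ term --- but the decomposition is genuinely different. The paper runs the greedy algorithm on $x$ and $y$ \emph{separately}, truncates each at depth two, and splits each of $\|F(x)-F(y)-T(x-y)\|$ and $\|F(x-y)-T(x-y)\|$ into three pieces $I_1+I_2+I_3$, with the middle comparison made on $x^2_{{\mathbf A},{\mathbf M}}-y^2_{{\mathbf A},{\mathbf M}}\in 2{\mathbf A}+2{\mathbf A}=4{\mathbf A}$; you instead run the greedy algorithm once on the \emph{combined} residual $q=(x-x_{{\mathbf A},{\mathbf M}})-(y-y_{{\mathbf A},{\mathbf M}})$ and expand both $F(x)-F(y)$ and $F(x-y)$ along the single two-leg path $0\rightsquigarrow p\rightsquigarrow p+q'$. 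Your version buys slightly cleaner bookkeeping (four error terms $e_1,e_1',e_2,e_2'$ rather than six) and, since $p\in 3{\mathbf A}$, it actually proves the marginally sharper bound with $\gamma_{F,T}(3{\mathbf A})$ in place of $\gamma_{F,T}(4{\mathbf A})$ on the $\|x-y\|_{{\mathbf H}_1}$ term; the paper's version is symmetric in $x$ and $y$ and reuses the estimate for $I_1$ verbatim for $I_2$. Both arguments rest on the same two ingredients, Proposition~\ref{greedy.prop} for convergence of the telescoping sums and the observation that $\|q\|_{\mathbf M}\le\sigma_{{\mathbf A},{\mathbf M}}(x)+\sigma_{{\mathbf A},{\mathbf M}}(y)$ feeds the $\sqrt{a_{\mathbf A}}$ gain, and your identification of \emph{why} the leading greedy term must be folded into the sparse part (it is controlled only in $\|\cdot\|_{\mathbf M}$, not gaining $\sqrt{a_{\mathbf A}}$) is exactly the point that forces the sparsity level above $2{\mathbf A}$ in the paper as well.
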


\begin{proof} Take $x, y\in {\bf M}$,  and let $x^k_{{\bf A}, {\bf M}}$ and $y^k_{{\bf A}, {\bf M}}, k\ge 0$, be as
in the greedy algorithm \eqref{greedyalgorithm.m} to approximate $x$ and $y\in {\bf M}$ respectively.
Write
\begin{eqnarray}\label{alpeq.pf.eq1}
 \|F(x)-F(y)-T(x-y)\|_{{\mathbf H}_2}
 & \le &
\|F(x)-F(x_{\mathbf A, \mathbf M}^2)-T(x-x_{\mathbf A, \mathbf M}^2)\|_{{\mathbf H}_2}\nonumber\\
& & +
\|F(y)-F(y_{\mathbf A, \mathbf M}^2)-T(y-y_{\mathbf A, \mathbf M}^2)\|_{{\mathbf H}_2}\nonumber\\
& & + \|F(x_{\mathbf A, \mathbf M}^2)-F(y_{\mathbf A, \mathbf M}^2)-T(x_{\mathbf A, \mathbf M}^2-y_{\mathbf A, \mathbf M}^2)\|_{{\mathbf H}_2}\nonumber\\
& =: & I_1+I_2+I_3.
\end{eqnarray}
By \eqref{secondapproximationestimate}, \eqref{sparsealgorithm.thm.eq2}, \eqref{tildegammafta},
 \eqref{greedylimit} 
  and the continuity of $F$  and $T$ on ${\mathbf H}_1$, we get
\begin{eqnarray}\label{alpeq.pf.eq2}
 I_1
& \le &
\sum_{k\ge 2}
\big\|F(x_{\mathbf A, \mathbf M}^{k+1})-F(x_{\mathbf A, \mathbf M}^k)-T ( x_{\mathbf A, \mathbf M}^{k+1}-x_{\mathbf A, \mathbf M}^k)\big\|_{{\mathbf H}_2}\nonumber\\
& \le &
 \gamma_{F,T}({\mathbf A})
\sum_{k\ge 2} \| x_{\mathbf A, \mathbf M}^{k+1}-x_{\mathbf A, \mathbf M}^{k}\|_{{\mathbf H}_1}\nonumber\\
&\le &
 \gamma_{F,T}({\mathbf A}) \sqrt{a_{\mathbf A}}
\sigma_{\mathbf A, \mathbf M}(x)\le  \gamma_{F,T}(2{\mathbf A}) \sqrt{a_{\mathbf A}}
\sigma_{\mathbf A, \mathbf M}(x)
\end{eqnarray}
and similarly
\begin{equation} \label{alpeq.pf.eq3}
 I_2
\le
 \gamma_{F,T}(2{\mathbf A}) \sqrt{a_{\mathbf A}} \sigma_{\mathbf A, \mathbf M}(y).
\end{equation}
For the term $I_3$, we obtain from \eqref{secondapproximationestimate}
and \eqref{tildegammafta}
that
\begin{eqnarray} \label{alpeq.pf.eq4}
 I_3 &\le  &
 \gamma_{F,T}(4{\mathbf A})  \|x_{\mathbf A, \mathbf M}^2- y_{\mathbf A, \mathbf M}^2\|_{{\mathbf H}_1}\nonumber\\
& \le &
 \gamma_{F,T}(4{\mathbf A})  (\|x-y\|_{{\mathbf H}_1} +\|x-x_{\mathbf A, \mathbf M}^2\|_{{\mathbf H}_1}+
 \|y-y_{\mathbf A, \mathbf M}^2\|_{{\mathbf H}_1}\big)\nonumber\\
 & \le &  \gamma_{F,T}(4{\mathbf A}) \big ( \|x-y\|_{{\mathbf H}_1}+
 \sqrt{a_{\mathbf A}} \big(\sigma_{\mathbf A, \mathbf M}(x)+\sigma_{\mathbf A, \mathbf M}(y)\big)\big).
\end{eqnarray}
Combining estimates in \eqref{alpeq.pf.eq1}--\eqref{alpeq.pf.eq4} gives
\begin{eqnarray}\label {alpeq.pf.eq5}
\|F(x)-F(y)-T(x-y)\| & \le &
\gamma_{F,T}(4{\mathbf A}) \|x-y\|_{{\mathbf H}_1} \nonumber\\
& &
+ \big( \gamma_{F,T}(2{\mathbf A})+\gamma_{F,T}(4{\mathbf A})\big) \sqrt{a_{\bf A}}\big(\sigma_{\mathbf A, \mathbf M}(x)+\sigma_{\mathbf A, \mathbf M}(y)
 \big).
 \end{eqnarray}

 Write
 \begin{eqnarray*} \label{alpeq.pf.eq6}
 & & \|F(x-y)-T(x-y)\|\nonumber\\
 & \le &
 \|F(x-y)-F(x^2_{\mathbf A, \mathbf M}-y)-T(x-x^2_{\mathbf A, \mathbf M})\|
\nonumber\\
& & +  \|F(x^2_{\mathbf A, \mathbf M}-y)-F(x^2_{\mathbf A, \mathbf M}-y^2_{\mathbf A, \mathbf M})-T(y^2_{\mathbf A, \mathbf M}-y)\|\nonumber\\
& &+ \|F(x^2_{\mathbf A, \mathbf M}-y^2_{\mathbf A, \mathbf M})-F(0)- T(x^2_{\mathbf A, \mathbf M}-y^2_{\mathbf A, \mathbf M})\|.
 \end{eqnarray*}
 Following the arguments used to establish \eqref{alpeq.pf.eq5}, we have
\begin{eqnarray}\label {alpeq.pf.eq7}
  \|F(x-y)-T(x-y)\| &\le &
  \gamma_{F,T}(4{\mathbf A}) \|x-y\|_{{\mathbf H}_1} \nonumber\\
& &
+ \big( \gamma_{F,T}(2{\mathbf A})+\gamma_{F,T}(4{\mathbf A})\big) \sqrt{a_{\bf A}}\big(\sigma_{\mathbf A, \mathbf M}(x)+\sigma_{\mathbf A, \mathbf M}(y)
 \big).
 \end{eqnarray}
 Combining \eqref{alpeq.pf.eq5} and \eqref{alpeq.pf.eq7} proves
 the estimate \eqref{gamma1gamma2} for
  the almost linear property of the map $F$.
  \end{proof}

\smallskip
Combining Theorems \ref{optimization.thm1}, \ref{sparsereconstruction.optimization.thm} and
\ref{almostlinear.thm} leads to the following result on the stable reconstruction of
sparse signals  $x$ from their nonlinear measurements $F(x)$ when $F$ is
not far away from a measurement matrix $T$ with the restricted isometry property \eqref{sparsealgorithm.thm.eq2}.

\begin{thm}\label{optimization.thm2}
Let ${\mathbf H}_1, {\mathbf M}, {\mathbf A}, T, F$
be as in Theorem \ref{sparsereconstruction.optimization.thm}
with
 \begin{eqnarray*}
 & & \sqrt{2} \big(\sqrt{\delta_{2{\mathbf A}}(T)}+ \gamma_{F,T}(2{\mathbf A})\big)+ 4
\gamma_{F,T}(4{\mathbf A})\\
& & \qquad + (\sqrt{\delta_{2{\mathbf A}}(T)}+ 3 \gamma_{F,T}(2{\mathbf A})+ 4\sqrt{\delta_{4{\mathbf A}}(T)}\big) \sqrt{a_{\bf A} s_{\bf A}}<1.
\end{eqnarray*}
Then  for any given $x^0\in {\mathbf M}$
and $\varepsilon>0$, the solution $x_{\mathbf M}^0$ of the minimization problem \eqref{optimizationsolution}
has the following error estimates:
\begin{equation}\label{optimization.thm2.error1}
\|x_{\mathbf M}^0-x^0\|_{{\mathbf H}_1}\le C_1 \sqrt{a_{\mathbf A}} \sigma_{\mathbf A, \mathbf M}(x^0)+C_2\epsilon\end{equation}
and
\begin{equation}
\label{optimization.thm2.error2}\|x_{\mathbf M}^0-x^0\|_{\mathbf M}\le C_1 \sigma_{\mathbf A, \mathbf M}(x^0)+C_2\sqrt{s_{\mathbf A}}\epsilon\end{equation}
where $C_1$ and $C_2$ are  absolute constants  independent on $x^0\in {\bf M}$ and $\epsilon\ge 0$.
\end{thm}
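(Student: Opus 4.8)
The plan is to derive Theorem~\ref{optimization.thm2} by feeding the conclusions of Theorems~\ref{sparsereconstruction.optimization.thm} and~\ref{almostlinear.thm} into Theorem~\ref{optimization.thm1}: the first two supply explicit sparse Riesz and almost linear constants for $F$, and the third converts them into the two error estimates. First I would read off the constants. The displayed smallness hypothesis has all summands nonnegative, so in particular $\sqrt2\,(\sqrt{\delta_{2{\mathbf A}}(T)}+\gamma_{F,T}(2{\mathbf A}))<1$, whence $\gamma_{F,T}(2{\mathbf A})<\frac{\sqrt2}{2}-\sqrt{\delta_{2{\mathbf A}}(T)}$ and Theorem~\ref{sparsereconstruction.optimization.thm} applies; its estimate~\eqref{Dbeta} shows that $F$ has the sparse Riesz property~\eqref{sap.def} with
$$D=\big(1-\sqrt2\,(\sqrt{\delta_{2{\mathbf A}}(T)}+\gamma_{F,T}(2{\mathbf A}))\big)^{-1}\ge 1,\qquad \beta=D\,\big(\sqrt{\delta_{2{\mathbf A}}(T)}+\gamma_{F,T}(2{\mathbf A})\big).$$
Similarly $4\gamma_{F,T}(4{\mathbf A})<1$ gives $\gamma_{F,T}(4{\mathbf A})<\infty$, so Theorem~\ref{almostlinear.thm} shows that $F$ is almost linear on ${\mathbf A}$ as in~\eqref{almostlinearproperty} with $\gamma_1=2\gamma_{F,T}(4{\mathbf A})$ and $\gamma_2=2\big(\gamma_{F,T}(2{\mathbf A})+\gamma_{F,T}(4{\mathbf A})\big)$. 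Since $D\gamma_1=2D\gamma_{F,T}(4{\mathbf A})<\infty$ and, as will be checked, $D\gamma_1<1$, the map $F$ in particular has the restricted bi-Lipschitz property on ${\mathbf A}$.

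Next I would verify the hypothesis~\eqref{dbetagamma.condition} of Theorem~\ref{optimization.thm1}, namely $\gamma_3=1-2D\gamma_1-(D\gamma_1+D\gamma_2+\beta)\sqrt{a_{\mathbf A}s_{\mathbf A}}>0$. Substituting the constants above and multiplying through by $D^{-1}$ collapses $\gamma_3>0$ to an inequality among $\delta_{2{\mathbf A}}(T)$, $\gamma_{F,T}(2{\mathbf A})$, $\gamma_{F,T}(4{\mathbf A})$, $a_{\mathbf A}$ and $s_{\mathbf A}$, which is the displayed hypothesis of Theorem~\ref{optimization.thm2} except that the factor $\sqrt{\delta_{4{\mathbf A}}(T)}$ there appears in place of a cruder $\gamma_{F,T}(4{\mathbf A})$. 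To produce exactly $\sqrt{\delta_{4{\mathbf A}}(T)}$ I would sharpen the treatment of the $4{\mathbf A}$-scale contributions: in the greedy decomposition used in the proofs of Theorems~\ref{sparsereconstruction.optimization.thm} and~\ref{almostlinear.thm}, sums and differences of two greedy increments lie in $4{\mathbf A}$, and since $2{\mathbf A}$ and $4{\mathbf A}$ are again unions of closed linear subspaces with the restricted isometry property~\eqref{sparsealgorithm.thm.eq2} available on $4{\mathbf A}$, the corresponding cross terms can be controlled by $\sqrt{\delta_{4{\mathbf A}}(T)}$ rather than by $\gamma_{F,T}(4{\mathbf A})$ (this mirrors how $\delta_{2{\mathbf A}}(T)$ entered via the identity $4\langle T\tilde u_k,T\tilde u_{k'}\rangle=\|T(\tilde u_k+\tilde u_{k'})\|^2-\|T(\tilde u_k-\tilde u_{k'})\|^2$ in the proof of Theorem~\ref{sparsereconstruction.optimization.thm}). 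With this refinement the reduced inequality is precisely the stated hypothesis, so $\gamma_3>0$; I expect this refined $4{\mathbf A}$-scale bookkeeping to be the main obstacle, the rest being routine substitution.

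Finally, with $\gamma_3>0$ in hand Theorem~\ref{optimization.thm1} applies verbatim and yields~\eqref{optimization.thm1.error} and~\eqref{optimization.thm1.error2}. Their coefficients $\frac{2+8D\gamma_2+4\beta}{\gamma_3}$, $\frac{(2+\sqrt{a_{\mathbf A}s_{\mathbf A}})D}{\gamma_3}$, $\frac{2-4D\gamma_1+2(D\gamma_1+2D\gamma_2+\beta)\sqrt{a_{\mathbf A}s_{\mathbf A}}}{\gamma_3}$ and $\frac{2D}{\gamma_3}$ are finite rational expressions of $D,\beta,\gamma_1,\gamma_2,a_{\mathbf A},s_{\mathbf A}$, hence absolute constants independent of $x^0$ and $\varepsilon$ (finite because $\gamma_3>0$); taking $C_1$ to be the larger of the two $\sigma_{{\mathbf A},{\mathbf M}}(x^0)$-coefficients and $C_2$ the larger of the two $\varepsilon$-coefficients yields the asserted estimates~\eqref{optimization.thm2.error1} and~\eqref{optimization.thm2.error2}.
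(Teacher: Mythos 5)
Your route is the paper's route: the paper offers no argument for Theorem~\ref{optimization.thm2} beyond ``combining Theorems~\ref{optimization.thm1}, \ref{sparsereconstruction.optimization.thm} and \ref{almostlinear.thm}'', and your extraction of the constants is exactly right --- $D^{-1}=1-\sqrt2(\sqrt{\delta_{2\mathbf A}(T)}+\gamma_{F,T}(2\mathbf A))$, $\beta=D(\sqrt{\delta_{2\mathbf A}(T)}+\gamma_{F,T}(2\mathbf A))$, $\gamma_1=2\gamma_{F,T}(4\mathbf A)$, $\gamma_2=2(\gamma_{F,T}(2\mathbf A)+\gamma_{F,T}(4\mathbf A))$, after which multiplying $\gamma_3>0$ by $D^{-1}$ gives
$$\sqrt2\big(\sqrt{\delta_{2\mathbf A}(T)}+\gamma_{F,T}(2\mathbf A)\big)+4\gamma_{F,T}(4\mathbf A)+\big(\sqrt{\delta_{2\mathbf A}(T)}+3\gamma_{F,T}(2\mathbf A)+4\gamma_{F,T}(4\mathbf A)\big)\sqrt{a_{\mathbf A}s_{\mathbf A}}<1.$$
The only place you go wrong is in treating the residual mismatch --- $4\gamma_{F,T}(4\mathbf A)$ in your computation versus $4\sqrt{\delta_{4\mathbf A}(T)}$ in the stated hypothesis --- as a gap to be closed by ``refined $4\mathbf A$-scale bookkeeping''. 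That refinement is not available: the term in question comes from $I_3$ in the proof of Theorem~\ref{almostlinear.thm}, which is bounded by $\gamma_{F,T}(4\mathbf A)\|x^2_{\mathbf A,\mathbf M}-y^2_{\mathbf A,\mathbf M}\|_{{\mathbf H}_1}$ precisely because it measures how far $F$ deviates from the linear operator $T$ on increments in $4\mathbf A$; no polarization identity for $T$ can convert a nonlinearity constant of $F$ into an RIP constant of $T$, and $\delta_{4\mathbf A}(T)$ can be small while $\gamma_{F,T}(4\mathbf A)$ is large. The resolution is that the $4\sqrt{\delta_{4\mathbf A}(T)}$ in the statement is evidently a misprint for $4\gamma_{F,T}(4\mathbf A)$: Corollary~\ref{optimization.thm2.cor}, obtained by taking $F=T$ linear so that every $\gamma_{F,T}(k\mathbf A)$ vanishes, reduces the hypothesis to $\delta_{2\mathbf A}(T)<(\sqrt2+\sqrt{a_{\mathbf A}s_{\mathbf A}})^{-2}$ with no $\delta_{4\mathbf A}(T)$ term, which is consistent only with the $\gamma_{F,T}(4\mathbf A)$ version. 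So your proof is complete and correct once you state the hypothesis in the form your computation actually delivers, rather than deferring to a sharpening that cannot be carried out.
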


Applying Theorem \ref{optimization.thm2} to linear maps,
 we have the following corollary.

 \begin{cor} \label{optimization.thm2.cor}
 Let ${\mathbf H}_1, {\mathbf M}, {\mathbf A}, {\mathbf H}_2$ and
  $T$ be as in Theorem \ref{optimization.thm2}.
  If
$$  \delta_{2{\mathbf A}}(T)< (
 \sqrt{2} +
 \sqrt{a_{\bf A} s_{\bf A}})^{-2},
$$
then  for any given $x^0\in {\mathbf M}$
and $\varepsilon>0$, the solution $x_{\mathbf M}^0$ of the minimization problem \eqref{optimizationsolution}
with $F=T$
has the error  estimates
\eqref{optimization.thm2.error1}
and \eqref{optimization.thm2.error2}.
\end{cor}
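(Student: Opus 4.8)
The plan is to apply the machinery for nonlinear maps to the linear map $F=T$, for which the two structural hypotheses used in Theorems~\ref{sparsereconstruction.optimization.thm} and~\ref{optimization.thm1} become essentially free. First I would note that for $F=T$ one has $F(x+z)-F(x)-Tz=0$ identically, so the localization quantity \eqref{tildegammafta} vanishes at every scale: $\gamma_{F,T}(k\mathbf A)=0$ for all $k\ge 1$. In particular $\gamma_{F,T}(4\mathbf A)<\infty$, and $F(0)=0$. Since $a_{\mathbf A}s_{\mathbf A}\ge 0$, the standing hypothesis yields $\delta_{2\mathbf A}(T)<(\sqrt 2+\sqrt{a_{\mathbf A}s_{\mathbf A}})^{-2}\le 1/2<\sqrt 2/2$, so $T=F$ meets all the hypotheses of Theorem~\ref{sparsereconstruction.optimization.thm}, the requirement $\gamma_{F,T}(2\mathbf A)<\sqrt 2/2-\sqrt{\delta_{2\mathbf A}(T)}$ being automatic because its left side is $0$ and its right side positive.

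Next, invoking Theorem~\ref{sparsereconstruction.optimization.thm} and reading off \eqref{Dbeta} with $\gamma_{F,T}(2\mathbf A)=0$, the map $F=T$ has the sparse Riesz property \eqref{sap.def} with
\[
D=\frac{1}{1-\sqrt 2\,\sqrt{\delta_{2\mathbf A}(T)}},\qquad \beta=\frac{\sqrt{\delta_{2\mathbf A}(T)}}{1-\sqrt 2\,\sqrt{\delta_{2\mathbf A}(T)}}
\]
(if $\delta_{2\mathbf A}(T)=0$ one simply takes any small $\beta>0$, which is legitimate since the bound only improves). Because $F=T$ is linear, $F(x)-F(y)-F(x-y)=0$, so $F$ also satisfies the almost linear property \eqref{almostlinearproperty} on $\mathbf A$ with $\gamma_1=\gamma_2=0$; equivalently this follows from Theorem~\ref{almostlinear.thm} since $\gamma_{F,T}(2\mathbf A)=\gamma_{F,T}(4\mathbf A)=0$.

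It then remains to check the smallness condition \eqref{dbetagamma.condition} of Theorem~\ref{optimization.thm1}. With $\gamma_1=\gamma_2=0$ it reduces to $\gamma_3=1-\beta\sqrt{a_{\mathbf A}s_{\mathbf A}}>0$; clearing the positive denominator $1-\sqrt 2\,\sqrt{\delta_{2\mathbf A}(T)}$ in the expression for $\beta$, this is equivalent to $\sqrt{\delta_{2\mathbf A}(T)}\,(\sqrt 2+\sqrt{a_{\mathbf A}s_{\mathbf A}})<1$, i.e.\ to $\delta_{2\mathbf A}(T)<(\sqrt 2+\sqrt{a_{\mathbf A}s_{\mathbf A}})^{-2}$, which is precisely our hypothesis. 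Applying Theorem~\ref{optimization.thm1} with these $D,\beta$ and $\gamma_1=\gamma_2=0$ then produces exactly the estimates \eqref{optimization.thm1.error}--\eqref{optimization.thm1.error2}, which have the claimed form \eqref{optimization.thm2.error1}--\eqref{optimization.thm2.error2} with $C_1,C_2$ depending only on $\delta_{2\mathbf A}(T)$, $a_{\mathbf A}$ and $s_{\mathbf A}$ (for instance $C_2=(2+\sqrt{a_{\mathbf A}s_{\mathbf A}})D/\gamma_3$ serves both estimates, and $C_1$ may be taken as the larger of the two resulting $\sigma$-coefficients).

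There is no deep obstacle here, as the corollary is a clean specialization, but one point deserves care: I would route the argument through Theorem~\ref{optimization.thm1} rather than quote Theorem~\ref{optimization.thm2} verbatim. The hypothesis of Theorem~\ref{optimization.thm2} carries an extra $4\sqrt{\delta_{4\mathbf A}(T)}\,\sqrt{a_{\mathbf A}s_{\mathbf A}}$ term inherited from the general almost-linearity estimate of Theorem~\ref{almostlinear.thm}, and that term is both superfluous when $F$ is already linear and not implied by the sharper assumption $\delta_{2\mathbf A}(T)<(\sqrt 2+\sqrt{a_{\mathbf A}s_{\mathbf A}})^{-2}$; using raw linearity to kill $\gamma_1,\gamma_2$ before invoking Theorem~\ref{optimization.thm1} is exactly what makes the stated hypothesis suffice.
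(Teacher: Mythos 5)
Your proof is correct, and the specialization $F=T$, $\gamma_{F,T}(k\mathbf A)=0$, $\gamma_1=\gamma_2=0$ is exactly the intended mechanism; the paper itself offers no written proof beyond the phrase ``applying Theorem~\ref{optimization.thm2} to linear maps.'' Where you genuinely diverge is in refusing to quote Theorem~\ref{optimization.thm2} verbatim, and that divergence is well judged: with the $\gamma$'s set to zero, the hypothesis of Theorem~\ref{optimization.thm2} still reads $\sqrt{2}\sqrt{\delta_{2\mathbf A}(T)}+(\sqrt{\delta_{2\mathbf A}(T)}+4\sqrt{\delta_{4\mathbf A}(T)})\sqrt{a_{\mathbf A}s_{\mathbf A}}<1$, and the surviving $4\sqrt{\delta_{4\mathbf A}(T)}\,\sqrt{a_{\mathbf A}s_{\mathbf A}}$ term is neither zero for a linear map nor controlled by the corollary's assumption $\delta_{2\mathbf A}(T)<(\sqrt2+\sqrt{a_{\mathbf A}s_{\mathbf A}})^{-2}$, so the literal deduction has a gap that your route closes. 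Feeding the constants $D=(1-\sqrt2\sqrt{\delta_{2\mathbf A}(T)})^{-1}$ and $\beta=\sqrt{\delta_{2\mathbf A}(T)}\,(1-\sqrt2\sqrt{\delta_{2\mathbf A}(T)})^{-1}$ from \eqref{Dbeta} into \eqref{dbetagamma.condition} reproduces the threshold $(\sqrt2+\sqrt{a_{\mathbf A}s_{\mathbf A}})^{-2}$ exactly, which explains where that constant comes from; in effect you are combining Theorem~\ref{sparsereconstruction.optimization.thm} with Corollary~\ref{optimization.cor1} (the linear case of Theorem~\ref{optimization.thm1}), and your handling of the degenerate case $\delta_{2\mathbf A}(T)=0$ by taking any small $\beta>0$ is also fine. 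The only cost of your route is that it bypasses the stated chain Theorem~\ref{optimization.thm2} $\Rightarrow$ Corollary~\ref{optimization.thm2.cor}; the benefit is that the corollary's hypothesis actually suffices as written.
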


For classical sparse recovery problems, the conclusions in Corollary \ref{optimization.thm2.cor}
have been established under weaker assumptions on the restricted isometry constant $\delta_{2{\mathbf A}}(T)$,
see \cite{czacha13} 
 and references therein.

\bigskip
{\bf Acknowledgement}\quad
The first author thanks Professor Yuesheng Xu for his invitation to visit Guangdong Province Key Laboratory of Computational Science at
 Sun Yat-sen University, China, where part of this work was done.

\begin{appendix}
\section{
Bi-Lipschitz map and uniform stability} 
\label{bilipschitz.appendix}

In this appendix, we 
 provide some  sufficient conditions, mostly optimal,  for a differentiable map to have
the bi-Lipschitz property \eqref{bilipschitzmap.def}, see  Theorems \ref{frame.sufficientcondition} and \ref{frame.sufficientcondition.linear}
  in Banach space setting,  and Theorems \ref{frame.sufficientcondition.hilbert} and \ref{frame.sufficientcondition.hilberts2}  in Hilbert space setting.

\smallskip

For a differentiable   map $F$ from one Banach space ${\mathbf B}_1$ to another Banach space ${\mathbf B}_2$
that has the bi-Lipschitz property \eqref{bilipschitzmap.def},
we have
\begin{equation*}
A   \|y\|\le \frac{\|F(x+ty)-F(x)\|}{t} \le B  \|y\|\quad {\rm for \ all} \ \  x, y\in {\mathbf B}_1\ {\rm and} \ t>0,
\end{equation*}
where $A,B$ are the constants in the bi-Lipschitz property \eqref{bilipschitzmap.def}.
Then taking limit as $t\to 0$ 
leads to a necessary condition for a differentiable bi-Lipschitz  map.

\begin{thm} \label{bilipschitz.thm} Let  ${\mathbf B}_1$ and ${\mathbf B}_2$ be Banach spaces. If
$F:{\mathbf B}_1\to {\mathbf B}_2$ is a differentiable  map that has the bi-Lipschitz property \eqref{bilipschitzmap.def}, then its derivative
  $F'(x), x\in {\mathbf B}_1$, has the uniform stability property \eqref{localstability.thm.eq1}.
\end{thm}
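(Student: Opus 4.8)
The statement is the elementary converse direction, and the computation displayed just above the theorem already contains the heart of it; I would simply make that argument precise. Fix $x\in {\mathbf B}_1$ and a nonzero $y\in {\mathbf B}_1$ (the case $y=0$ being trivial for \eqref{localstability.thm.eq1}). The plan is to apply the bi-Lipschitz inequality \eqref{bilipschitzmap.def} to the pair of points $x+ty$ and $x$ for $t>0$, obtaining
\begin{equation*}
A\,t\,\|y\|_{{\mathbf B}_1}\le \|F(x+ty)-F(x)\|_{{\mathbf B}_2}\le B\,t\,\|y\|_{{\mathbf B}_1},
\end{equation*}
and then to divide through by $t$ and let $t\to 0^+$.

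The one point that needs care is that the difference quotient $\big(F(x+ty)-F(x)\big)/t$ converges to $F'(x)y$ in ${\mathbf B}_2$ as $t\to 0^+$. This follows from the definition of differentiability at $x$: taking the increment $h=ty$, which tends to $0$ as $t\to 0^+$, gives
\begin{equation*}
\frac{\big\|F(x+ty)-F(x)-t\,F'(x)y\big\|_{{\mathbf B}_2}}{t\,\|y\|_{{\mathbf B}_1}}\longrightarrow 0,\qquad t\to 0^+,
\end{equation*}
hence $t^{-1}\big(F(x+ty)-F(x)\big)\to F'(x)y$ in ${\mathbf B}_2$. Since the norm $\|\cdot\|_{{\mathbf B}_2}$ is continuous, we may pass to the limit in the displayed two-sided inequality (after division by $t$) to conclude
\begin{equation*}
A\,\|y\|_{{\mathbf B}_1}\le \|F'(x)y\|_{{\mathbf B}_2}\le B\,\|y\|_{{\mathbf B}_1}.
\end{equation*}

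As $x$ and $y$ were arbitrary, this is exactly the uniform stability property \eqref{localstability.thm.eq1}, with the very same constants $A$ and $B$ as in \eqref{bilipschitzmap.def}. I do not anticipate any real obstacle here: the argument is a one-line limit computation, and the only thing worth spelling out is the substitution $h=ty$ into the Fréchet differentiability quotient together with the continuity of the target norm; there is no need for completeness of ${\mathbf B}_1$ or ${\mathbf B}_2$, nor for any uniformity in $x$ beyond what \eqref{bilipschitzmap.def} already provides.
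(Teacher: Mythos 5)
Your proposal is correct and coincides with the paper's own argument, which appears in the displayed inequality immediately preceding the theorem statement: apply \eqref{bilipschitzmap.def} to the pair $x+ty$, $x$, divide by $t$, and let $t\to 0$. The only addition is your explicit justification that the difference quotient converges to $F'(x)y$, which the paper leaves implicit.
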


 For ${\bf B}_1={\bf B}_2=\RR$, a differentiable map $F$ with the uniform stability property \eqref{localstability.thm.eq1} for its derivative
has the bi-Lipschitz  property \eqref{bilipschitzmap.def}, but it is not true in general Banach space setting.
Maps $E_{p, \epsilon}, 1\le p\le \infty, \epsilon \in [0, \pi/4)$, from $\RR$ to $\RR^2$
in the example below are such examples.

\begin{ex}\label{epepsilon.example}
 {\rm  For $1\le p\le \infty$ and $\epsilon\in [0, \pi/4)$, define 
 $E_{p, \epsilon}: \RR\longmapsto \RR^2$  by
\begin{equation}\label{exponentialmap.def}
E_{p,, \epsilon}(t)=\left\{\begin{array}{l} (-\cos \epsilon, \sin \epsilon)- (\sin \epsilon, \cos \epsilon)(t+\pi/2+\epsilon)\\
\hfill {\rm if} \ t\in (-\infty, -\pi/2-\epsilon),\\
(\sin t, -\cos t) \hfill {\rm if} \ t\in [-\pi/2-\epsilon, \pi/2+\epsilon],\\
(\cos \epsilon, \sin \epsilon)+ (-\sin \epsilon, \cos \epsilon)(t-\pi/2-\epsilon)\\
\hfill {\rm if} \ t\in (\pi/2+\epsilon, \infty),
\end{array}\right.
\end{equation}
see Figure \ref{functionepepsilon.fig}.
\begin{figure}[hbt]
\centering
\begin{tabular}{cc}
  \includegraphics [width=58mm, height=58mm]{./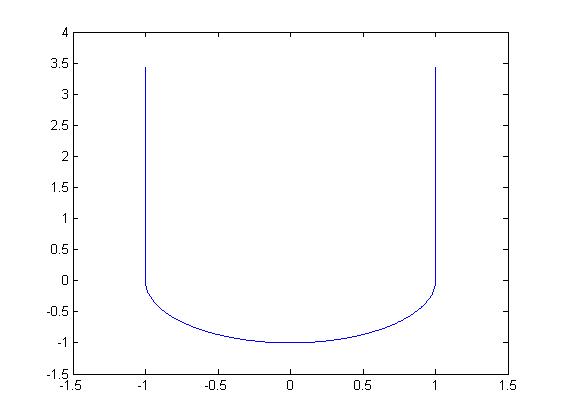} & \includegraphics [width=58mm, height=58mm]{./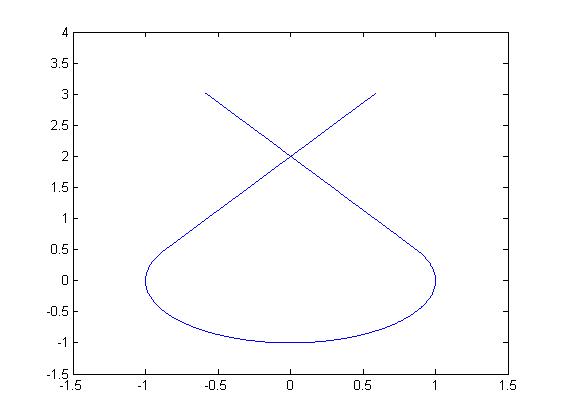}
   \end{tabular}
   \caption{\small Maps $E_{p, \epsilon}$ from $\RR$ to $\RR^2$ with $\epsilon=0$ (left) and $\epsilon=\pi/6$ (right).}
\label{functionepepsilon.fig}
\end{figure}
The maps $E_{p, \epsilon}$ just defined do not have the bi-Lipschitz property \eqref{bilipschitzmap.def}, but their derivatives
$E_{p, \epsilon}^\prime$ 
have the uniform stability property
\eqref{localstability.thm.eq1},
\begin{eqnarray*} \frac{\sqrt{2}}{2} |\tilde t| & \le  & \|E_{p, \epsilon}^\prime(t) \tilde t\|_p=\left\{\begin{array}{ll}
\|(\tilde t\sin \epsilon, \tilde t\cos \epsilon)\|_p & {\rm if} \ t< -\pi/2-\epsilon\\
 \|(\tilde t\cos t, \tilde t\sin t)\|_p & {\rm if} \ |t|\le \pi/2+\epsilon\\
\|(-\tilde t \sin \epsilon, \tilde t \cos \epsilon)\|_p & {\rm if} \ t>\pi/2+\epsilon\end{array}\right.\nonumber\\
&\le &  2|\tilde t|\ \  {\rm for \ all} \  t,\tilde t\in \RR,
\end{eqnarray*}
where $\|\cdot\|_p, 1\le p\le \infty$, is the $p$-norm on the Euclidean space $\RR^2$.
}\end{ex}


 Given a differentiable bi-Lipschitz map $F$ from one Banach space ${\mathbf B}_1$ to another
Banach space  ${\mathbf B}_2$ such that  its derivative $F'(x)$ is uniformly stable, define
\begin{equation}\label{alphaF.def}
\alpha_F:=\sup_{\|y\|=1}\inf_{\|z\|=1}\sup_{x\in {\mathbf B}_1}\Big\|\frac{F'(x)y}{\|F'(x)y\|}-z\Big\|.
\end{equation}
The quantity $\alpha_F$ is the minimal radius such that
  for any $0\ne y\in {\mathbf B}_1$, the set ${\mathbb B}(y)$ of
unit vectors $F'(x)y/\|F'(x)y\|, x\in {\mathbf B}_1$,
is contained in  a ball of radius $\alpha_F<1$ centered at a unit vector.
Our next theorem shows that
 a differentiable bi-Lipschitz map $F$ with its derivative $F'(x)$ being uniformly stable and continuous
and with  $\alpha_F$ in \eqref{alphaF.def} satisfying $\alpha_F<1$ has the bi-Lipschitz property \eqref{bilipschitzmap.def}.

\begin{thm} \label{frame.sufficientcondition}
Let  ${\mathbf B}_1$ and ${\mathbf B}_2$ be Banach spaces, and
$F$ be a continuously differentiable map from ${\mathbf B}_1$ to ${\mathbf B}_2$
 with the property that  its derivative has the uniform stability property
\eqref {localstability.thm.eq1}.
If $\alpha_F$ in \eqref{alphaF.def} satisfies
\begin{equation}\label{ballcondition}
\alpha_F<1,\end{equation}
then $F$ is a bi-Lipschitz map.
\end{thm}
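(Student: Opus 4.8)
The plan is to establish the two inequalities in \eqref{bilipschitzmap.def} separately, the upper one being routine and the lower one carrying the content of the theorem. The common starting point is the fundamental theorem of calculus for a continuously differentiable map: for $x,y\in\mathbf{B}_1$,
$$F(x)-F(y)=\int_0^1 F'(y+t(x-y))(x-y)\,dt,$$
where the integrand is a continuous $\mathbf{B}_2$-valued function of $t\in[0,1]$ because $F'$ is continuous, so the (Bochner) integral makes sense. From this the upper Lipschitz bound is immediate: $\|F(x)-F(y)\|\le\int_0^1\|F'(y+t(x-y))(x-y)\|\,dt\le B\|x-y\|$ by the right-hand inequality in \eqref{localstability.thm.eq1}.

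For the lower bound, fix $x\ne y$ and put $u=(x-y)/\|x-y\|$, a unit vector. Here I would invoke \eqref{ballcondition}: since $\alpha_F<1$, for any small $\eta>0$ with $\alpha:=\alpha_F+\eta<1$ the definition \eqref{alphaF.def} supplies a single unit vector $z_u\in\mathbf{B}_2$, depending only on the direction $u$, such that the normalized vector $F'(v)u/\|F'(v)u\|$ lies within distance $\alpha$ of $z_u$ for every $v\in\mathbf{B}_1$; note that $F'(v)u\ne0$ since $\|F'(v)u\|\ge A>0$ by \eqref{localstability.thm.eq1}. Applying this along the segment from $y$ to $x$ and writing $\phi(t)=\|F'(y+t(x-y))u\|\in[A,B]$, I would divide the integral identity by $\|x-y\|$ and split off the component along $z_u$:
$$\frac{F(x)-F(y)}{\|x-y\|}=\Big(\int_0^1\phi(t)\,dt\Big)z_u+\int_0^1\phi(t)\Big(\frac{F'(y+t(x-y))u}{\|F'(y+t(x-y))u\|}-z_u\Big)\,dt.$$
Taking norms, using $\|z_u\|=1$ and the perturbation bound $\alpha$ on the second term, gives $\|F(x)-F(y)\|\ge\big(\int_0^1\phi(t)\,dt\big)(1-\alpha)\|x-y\|\ge A(1-\alpha_F-\eta)\|x-y\|$, and letting $\eta\downarrow0$ yields the left-hand inequality in \eqref{bilipschitzmap.def} with constant $A(1-\alpha_F)>0$ (the case $x=y$ being trivial).

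The step I expect to be the crux is the geometric one in the previous paragraph: the unit vectors $F'(v)u/\|F'(v)u\|$ range over all $v$ on the whole segment joining $y$ to $x$, yet they all sit in a ball of radius $\alpha<1$ about the fixed direction $z_u$, and it is exactly this clustering that prevents cancellation in the $\phi$-weighted integral and forces $\|F(x)-F(y)\|$ to stay comparable to $\|x-y\|$. This is where $\alpha_F<1$ enters essentially; Example~\ref{epepsilon.example} shows that some geometric hypothesis beyond uniform stability of $F'$ is genuinely needed, and the discussion around \eqref{betaft.condition} indicates that the bound $1$ on the radius is optimal in the Banach-space setting. A minor point to keep track of is that the infimum in \eqref{alphaF.def} need not be attained, which is handled by the auxiliary parameter $\eta$ above; everything else is a routine estimate.
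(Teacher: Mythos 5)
Your proposal is correct and follows essentially the same route as the paper's proof: write $F(x)-F(y)$ as the integral of $F'$ along the segment, split off the component along a fixed unit vector $z$ in $\mathbf{B}_2$, and bound the remainder by $\alpha_F$ times the weight $\int_0^1\|F'(y+t(x-y))(x-y)\|\,dt$, which is itself at least $A\|x-y\|$ by \eqref{localstability.thm.eq1}. The only cosmetic difference is that you select a near-optimal $z_u$ up to $\eta$ and let $\eta\downarrow 0$, whereas the paper keeps the infimum over $z$ inside the estimate; both yield the same lower Lipschitz constant $A(1-\alpha_F)$.
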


\begin{proof} Given $x, y\in {\mathbf B}_1$ with $y\ne 0$,
\begin{eqnarray*}
F(x+y)-F(x) & = &\int_0^1 F'(x+ty) y dt
=  \Big (\int_0^1 \|F'(x+ty)y\|dt\Big) z\\
& & +
\int_0^1 \|F'(x+ty)y\| \Big(\frac{F'(x+ty)y}{\|F'(x+ty)y\|}-z\Big)dt,\end{eqnarray*}
where $z\in {\mathbf B}_2$ with $\|z\|=1$.
Thus
\begin{eqnarray*}
\|F(x+y)-F(x)\| & \ge &  \Big(\int_0^1 \|F'(x+ty)y\|dt\Big) \\
& & \quad \times \Big(1-\inf_{\|z\|=1}
\sup_{0\le t\le 1}\Big\|\frac{F'(x+ty)y}{\|F'(x+ty)y\|}-z\Big\|\Big)\\
& \ge &  (1-\alpha_F)
\Big(\int_0^1 \|F'(x+ty)y\|dt\Big)\ge (1-\alpha_F) A\|y\|,
\end{eqnarray*}
and
\begin{equation*}
\|F(x+y)-F(x)\|  \le
\int_0^1 \|F'(x+ty)y\|dt\le  
B\|y\|,
\end{equation*}
where $A, B$ are lower and upper stability bounds in the uniform stability property
\eqref{localstability.thm.eq1}. Combining the above two estimates completes the proof.
\end{proof}

\begin{rem}\label{alphaF.rem} {\rm
The U-shaped map $E_{p, \epsilon}$ in Example \ref{epepsilon.example}  with $p=\infty$ and $\epsilon=0$
is not a bi-Lipschitz map and     
\begin{eqnarray*} \alpha_{E_{\infty, 0}} 
& = & \inf_{\|z\|_\infty=1} \sup_{|t|\le \pi/2}\Big\|\frac{(\cos t, \sin t)}{\max(|\cos t|, |\sin t|)}-z\Big\|_\infty\\
& = & \sup_{|t|\le \pi/2}\Big\|\frac{(\cos t, \sin t)}{\max(|\cos t|, |\sin t|)}-(1, 0)\Big\|_\infty=1.\end{eqnarray*}
This indicates that the geometric condition \eqref{ballcondition} about $\alpha_F$
is optimal.
}\end{rem}

\smallskip

For  a
differentiable map $F$
not far away from a  bounded below linear operator $T$,  we  suggest
  using $Ty/\|Ty\|$ as the center of the ball containing
  the set 
  of unit vectors $F'(x) y/\|F'(x) y\|, x\in {\bf B}_1$,
 and define the minimal radius  of that ball by $\beta_{F,T}$ in \eqref{linearapproximation.center}.
Then obviously
 \begin{equation}\label{tildebeta.beta}
\alpha_F\le \beta_{F,T}.
\end{equation}
This together with Theorem \ref{frame.sufficientcondition} implies that
 a differentiable map $F$ satisfying $\beta_{F, T}<1$ is a bi-Lipschitz map.

\begin{thm} \label{frame.sufficientcondition.linear}
Let  ${\mathbf B}_1$ and ${\mathbf B}_2$ be Banach spaces, and
$F$ be a continuously differentiable map  from ${\mathbf B}_1$ to ${\mathbf B}_2$
with its derivative   having the uniform stability property
\eqref {localstability.thm.eq1}.
If $T\in {\mathcal B}({\mathbf B}_1, {\mathbf B}_2)$ is bounded below
and  satisfies \eqref{betaft.condition},
then $F$ is a bi-Lipschitz map.
\end{thm}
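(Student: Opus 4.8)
The plan is to deduce the statement immediately from Theorem~\ref{frame.sufficientcondition} via the elementary inequality \eqref{tildebeta.beta}, namely $\alpha_F\le\beta_{F,T}$. To see this, fix $y\in{\mathbf B}_1$ with $\|y\|=1$. Since $T$ is bounded below, $Ty\neq 0$, so the unit vector $z:=Ty/\|Ty\|_{{\mathbf B}_2}$ is a legitimate competitor in the infimum defining $\alpha_F$ in \eqref{alphaF.def}; it yields
$\inf_{\|z\|=1}\sup_{x\in{\mathbf B}_1}\big\|F'(x)y/\|F'(x)y\|-z\big\|\le \sup_{x\in{\mathbf B}_1}\big\|F'(x)y/\|F'(x)y\|-Ty/\|Ty\|\big\|\le\beta_{F,T}$, by the definition \eqref{linearapproximation.center}. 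Taking the supremum over $\|y\|=1$ gives $\alpha_F\le\beta_{F,T}$. Hence hypothesis \eqref{betaft.condition} forces $\alpha_F\le\beta_{F,T}<1$, which is exactly the condition \eqref{ballcondition}, and Theorem~\ref{frame.sufficientcondition} applies to conclude that $F$ is a bi-Lipschitz map.

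Alternatively, and this has the advantage of producing explicit constants, I would repeat the argument proving Theorem~\ref{frame.sufficientcondition} with the floating unit vector $z$ replaced throughout by the fixed unit vector $Ty/\|Ty\|_{{\mathbf B}_2}$. Writing $F(x+y)-F(x)=\int_0^1 F'(x+ty)y\,dt$ and decomposing it as $\big(\int_0^1\|F'(x+ty)y\|\,dt\big)\,Ty/\|Ty\|+\int_0^1\|F'(x+ty)y\|\big(F'(x+ty)y/\|F'(x+ty)y\|-Ty/\|Ty\|\big)dt$, the norm of the second term is at most $\beta_{F,T}\int_0^1\|F'(x+ty)y\|\,dt$ by \eqref{linearapproximation.center}, so the triangle inequality gives $\|F(x+y)-F(x)\|\ge(1-\beta_{F,T})\int_0^1\|F'(x+ty)y\|\,dt\ge(1-\beta_{F,T})A\|y\|$ by the lower uniform-stability bound in \eqref{localstability.thm.eq1}, while $\|F(x+y)-F(x)\|\le\int_0^1\|F'(x+ty)y\|\,dt\le B\|y\|$ is immediate. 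This establishes \eqref{bilipschitzmap.def} with constants $A(1-\beta_{F,T})$ and $B$.

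There is no serious obstacle here; the substantive work is already contained in the proof of Theorem~\ref{frame.sufficientcondition}, and the present statement merely records that the more convenient linear-operator-based quantity $\beta_{F,T}$ dominates $\alpha_F$. The only points needing care are: (i) observing that $Ty\neq 0$ whenever $y\neq 0$, so that $Ty/\|Ty\|$ is well defined --- this is precisely the hypothesis that $T$ is bounded below; and (ii) using the continuity of $F'$ to justify the representation $F(x+y)-F(x)=\int_0^1 F'(x+ty)y\,dt$ (it makes $t\mapsto F'(x+ty)y$ continuous, hence Bochner-integrable, on $[0,1]$). I expect the referee-level check of (ii) to be the most delicate, though entirely routine, step.
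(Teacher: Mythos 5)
Your proposal is correct and follows exactly the paper's route: the paper also derives Theorem~\ref{frame.sufficientcondition.linear} from Theorem~\ref{frame.sufficientcondition} by observing that $Ty/\|Ty\|_{{\mathbf B}_2}$ is an admissible center in the infimum defining $\alpha_F$, whence $\alpha_F\le\beta_{F,T}<1$. Your second, direct argument with explicit constants $A(1-\beta_{F,T})$ and $B$ is a valid bonus but not needed.
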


We may use the following quantity to measure the distance between differentiable map $F$ and bounded below linear operator $T$,
\begin{equation}\label{gammaft.def}
\delta_{F, T}:=\sup_{0\ne y\in {\mathbf B}_1} \sup_{x\in {\mathbf B}_1}\frac{\| F(x+y)-F(x)-Ty\|}{\|Ty\|}=
\sup_{0\ne y\in {\mathbf B}_1} \sup_{z\in {\mathbf B}_1}\frac{\| F'(z)y-Ty\|}{\|Ty\|}.
\end{equation}
By direct computation,
\begin{equation*} 
\beta_{F, T} 
\le   \sup_{\|y\|=1} \sup_{x\in {\mathbf B}_1} \Big(\frac{\|F'(x)y-Ty\|}{\|F'(x)y\|}+\frac{\big|\|F'(x)y\|-\|Ty\|\big|}{\|F'(x)y\|}\Big)\le \frac{2\delta_{F,T}}{1-\delta_{F,T}}.
\end{equation*}
Thus the geometric condition \eqref{betaft.condition} in
Theorem \ref{frame.sufficientcondition.linear} can be replaced by the condition
$\delta_{F,T}<1/3$. 

\begin{cor}\label{linearapproximationcondition.cor}
Let  ${\mathbf B}_1, {\mathbf B}_2, F$ and $T$ be as in Theorem \ref{frame.sufficientcondition.linear}. If
$\delta_{F,T}<1/3$, 
then $F$ is a bi-Lipschitz map.
\end{cor}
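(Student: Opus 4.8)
The plan is to derive Corollary~\ref{linearapproximationcondition.cor} from Theorem~\ref{frame.sufficientcondition.linear} by showing that the quantitative hypothesis $\delta_{F,T}<1/3$ on the linear-approximation distance implies the geometric hypothesis $\beta_{F,T}<1$ of that theorem. All the analytic content is already available: the excerpt records, immediately before the corollary, the estimate $\beta_{F,T}\le 2\delta_{F,T}/(1-\delta_{F,T})$. I would recall its short derivation, splitting the unit-vector difference $F'(x)y/\|F'(x)y\|-Ty/\|Ty\|$ into a ``numerator'' term $\|F'(x)y-Ty\|/\|F'(x)y\|$ and a ``renormalization'' term $\big|\|F'(x)y\|-\|Ty\|\big|/\|F'(x)y\|$, using the triangle inequality $\|F'(x)y\|\ge(1-\delta_{F,T})\|Ty\|$ to bound each of the two terms by $\delta_{F,T}/(1-\delta_{F,T})$.

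With that estimate in hand, the remaining step is purely elementary: the map $t\mapsto 2t/(1-t)$ is strictly increasing on $[0,1)$ and takes the value $1$ at $t=1/3$, so $\delta_{F,T}<1/3$ gives $\beta_{F,T}\le 2\delta_{F,T}/(1-\delta_{F,T})<1$, i.e.\ condition \eqref{betaft.condition}. Since the hypotheses on $F$ (continuously differentiable with uniformly stable derivative) and on $T$ (bounded below) in the corollary are exactly those of Theorem~\ref{frame.sufficientcondition.linear}, that theorem applies verbatim and yields the bi-Lipschitz property \eqref{bilipschitzmap.def} for $F$.

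I do not expect any real obstacle here: the corollary is essentially a repackaging of the bound established just before its statement. The only points meriting a word of care are bookkeeping ones — noting that $\delta_{F,T}<1/3$ in particular forces $\delta_{F,T}<\infty$, so the lower bound $\|F'(x)y\|\ge(1-\delta_{F,T})\|Ty\|$ is nontrivial and the ratio $2\delta_{F,T}/(1-\delta_{F,T})$ is well defined and strictly below $1$, and recalling that $T$ being bounded below ensures $\|Ty\|>0$ for every $y\neq 0$, so that all the normalizations appearing in $\beta_{F,T}$ and $\delta_{F,T}$ are legitimate.
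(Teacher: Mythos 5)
Your proposal is correct and follows exactly the paper's own route: the paper establishes $\beta_{F,T}\le 2\delta_{F,T}/(1-\delta_{F,T})$ by the same two-term splitting immediately before the corollary, and then notes that $\delta_{F,T}<1/3$ forces $\beta_{F,T}<1$ so that Theorem~\ref{frame.sufficientcondition.linear} applies. No gaps.
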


\smallskip
 The geometric condition \eqref{betaft.condition}  to guarantee the bi-Lipschitz property for the map $F$ is optimal in general Banach space setting, as
$\beta_{E_{\infty, 0}, T_1}=1$
for the U-shaped map $E_{\infty, 0}$ in Example \ref{epepsilon.example} 
 and the linear operator $T_1t:=(t,0), t\in \RR$.
But in Hilbert space setting, as shown in the next theorem, the  geometric condition \eqref{betaft.condition}
 could be relaxed to $\beta_{F,T}<\sqrt{2}$.

\begin{thm} \label{frame.sufficientcondition.hilbert}
Let  ${\mathbf H}_1$ and ${\mathbf H}_2$ be Hilbert spaces, and
let
$F:{\mathbf H}_1\to {\mathbf H}_2$
 be a continuously differentiable map with its derivative having the uniform stability property
\eqref {localstability.thm.eq1}.
If there exists a linear operator $T\in {\mathcal B}({\mathbf H}_1, {\mathbf H}_2)$
satisfying \eqref{Tstable.condition}  and \eqref{frame.sufficientcondition.hilbert.eq1},
 then $F$ is a bi-Lipschitz map.
\end{thm}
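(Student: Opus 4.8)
The plan is to exploit the Hilbert-space geometry to convert the hypothesis $\beta_{F,T}<\sqrt2$ into a quantitative lower bound on the inner products $\langle F'(x)y,Ty\rangle$, and then to integrate. First I would note that the uniform stability \eqref{localstability.thm.eq1} forces $F'(x)y\neq0$ whenever $y\neq0$, and that $T$ bounded below (condition \eqref{Tstable.condition}) forces $Ty\neq0$ whenever $y\neq0$, so the normalized vectors appearing in the definition \eqref{linearapproximation.center} of $\beta_{F,T}$ are legitimate. The polarization identity for unit vectors then gives, exactly as in the proof of Theorem~\ref{hilbertalgorithm.thm},
\[
\Big\|\frac{F'(x)y}{\|F'(x)y\|}-\frac{Ty}{\|Ty\|}\Big\|^2=2-2\,\frac{\langle F'(x)y,Ty\rangle}{\|F'(x)y\|\,\|Ty\|},
\]
so that \eqref{frame.sufficientcondition.hilbert.eq1} yields
\[
\langle F'(x)y,Ty\rangle\;\ge\;\frac{2-\beta_{F,T}^2}{2}\,\|F'(x)y\|\,\|Ty\|\;>\;0\qquad\text{for all }x\in{\mathbf H}_1,\ 0\neq y\in{\mathbf H}_1 .
\]

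Next I would write the increment of $F$ along a segment using continuity of $F'$, namely $F(x+y)-F(x)=\int_0^1F'(x+ty)y\,dt$, and pair it with the unit vector $Ty/\|Ty\|$. Cauchy--Schwarz, the pointwise inequality just established, and the lower stability bound $\|F'(x+ty)y\|\ge A\|y\|$ from \eqref{localstability.thm.eq1} give
\[
\|F(x+y)-F(x)\|\;\ge\;\Big\langle F(x+y)-F(x),\,\frac{Ty}{\|Ty\|}\Big\rangle\;=\;\int_0^1\Big\langle F'(x+ty)y,\,\frac{Ty}{\|Ty\|}\Big\rangle dt\;\ge\;\frac{(2-\beta_{F,T}^2)A}{2}\,\|y\| .
\]
The matching upper bound $\|F(x+y)-F(x)\|\le\int_0^1\|F'(x+ty)y\|\,dt\le B\|y\|$ is immediate from \eqref{localstability.thm.eq1}. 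Applying both bounds to the increment from $y$ to $y+(x-y)=x$, for arbitrary $x,y\in{\mathbf H}_1$, yields the bi-Lipschitz property \eqref{bilipschitzmap.def} with lower constant $A(2-\beta_{F,T}^2)/2>0$ and upper constant $B$.

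I do not expect a genuine obstacle: the content of the theorem is precisely that in a Hilbert space a ball of radius strictly less than $\sqrt2$ about a unit vector consists of vectors making an angle strictly less than $\pi/2$ with that vector, with a uniform gap, so the directional derivatives never point sideways or backward relative to $Ty$. The only points that need care are the legitimacy of the normalizations (covered above, via $A>0$ and $T$ bounded below) and the sharpness of the constant: radius exactly $\sqrt2$ would permit $F'(x)y$ orthogonal to $Ty$ and destroy the lower bound, which is consistent with the optimality remark for $E_{\infty,0}$ in the Banach setting and with Theorem~\ref{frame.sufficientcondition.hilberts2}.
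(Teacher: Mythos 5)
Your proof is correct and follows essentially the same route as the paper's: expand $\bigl\|\tfrac{F'(x)y}{\|F'(x)y\|}-\tfrac{Ty}{\|Ty\|}\bigr\|^2$ to get $\langle F'(x)y,Ty\rangle\ge\tfrac{2-\beta_{F,T}^2}{2}\|F'(x)y\|\,\|Ty\|$, integrate along the segment, pair with $Ty/\|Ty\|$ and apply Cauchy--Schwarz, arriving at the same lower constant $A(2-\beta_{F,T}^2)/2$ and upper constant $B$. No gaps.
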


\begin{proof} Take $u, v\in {\mathbf H}_1$ with $v\ne 0$.
Then
\begin{equation} \label{frame.sufficientcondition.hilbert.pf.eq1}
\|F(u+v)-F(u)\|\le \int_0^1 \|F'(u+tv) v\|dt\le B\|v\|, 
\end{equation}
where $B$ is the upper stability bound in \eqref{localstability.thm.eq1}.
Observe that
\begin{equation*}
\langle F'(u)v, Tv\rangle= \|F'(u) v\|\|Tv\|
\Big( 1-\frac{1}{2} \Big\|\frac{F'(u)v}{\|F'(u)v\|}-\frac{Tv}{\|Tv\|}\Big\|^2\Big).
\end{equation*}
Then
\begin{equation*} \label{frame.sufficientcondition.hilbert.pf.eq2-}
\langle F'(u)v, Tv\rangle\ge \frac{2-(\beta_{F,T})^2}{2} \|F'(u) v\|\|Tv\|,  
\end{equation*}
which implies that
\begin{eqnarray} \label{frame.sufficientcondition.hilbert.pf.eq2}
\langle F(u+v)-F(u), Tv\rangle & = & \int_0^1 \langle F'(u+tv)v, Tv\rangle  dt\nonumber \\
& \ge &  \frac{2-(\beta_{F,T})^2}{2} \Big(\int_0^1 \|F'(u+tv) v\| dt\Big)\|Tv\| \nonumber\\
& \ge &  \frac{2-(\beta_{F,T})^2}{2}A \|Tv\| \|v\|,
\end{eqnarray}
where $A$ is the lower stability bound in \eqref{localstability.thm.eq1}.
Hence
\begin{equation} \label{frame.sufficientcondition.hilbert.pf.eq3}
\|F(u+v)-F(u)\|\ge \frac{\langle F(u+v)-F(u), Tv\rangle}{\|Tv\|}
\ge \frac{2-( \beta_{F,T})^2}{2}A \|v\|.
\end{equation}
Combining
\eqref{frame.sufficientcondition.hilbert.pf.eq1} and \eqref{frame.sufficientcondition.hilbert.pf.eq3}
proves the bi-Lipschitz   property for  $F$.
\end{proof}

\begin{rem}{\rm
The geometric condition \eqref{frame.sufficientcondition.hilbert.eq1} is optimal as for the U-shaped map $E_{p, \epsilon}$
in Example \ref{epepsilon.example} with $p=2$ and $\epsilon=0$,
\begin{equation}
 \beta_{E_{2, 0}, T_1}= \sup_{\tilde t\ne 0} \sup_{|t|\le \pi/2}\Big\|\frac{(\tilde t\cos t, \tilde t\sin t)}
 {(\cos^2 t+\sin^2 t)^{1/2} |\tilde t|}-\frac{(\tilde t,0)}{|\tilde t|}\Big\|_2=\sqrt{2}
\end{equation}
where $T_1\tilde t=(\tilde t, 0), \tilde t\in \RR$.
}
\end{rem}

Define
 $$\theta_{F,T}=\sup_{u\in {\mathbf H}_1, v\ne 0}
\arccos \Big(\frac{\langle F'(u) v, Tv\rangle}{\|F'(u)v\| \|Tv\|}\Big),$$
the maximal angle between vectors
 $F'(u)v$ and $Tv$ in the Hilbert space ${\mathbf H}_2$.
  Then
 $$\beta_{F, T}
 =2\sin \frac{\theta_{F,T}}{2}.$$
 So the geometric condition \eqref{frame.sufficientcondition.hilbert.eq1}
 can be interpreted as that the angles between
 $F'(u)v$ and $Tv$ are less than or equal to   $\theta_{F,T}\in [0, \pi/2)$ for all $u, v\in {\mathbf H}_1$.
The above equivalence between
the geometric condition \eqref{frame.sufficientcondition.hilbert.eq1} and the angle condition $\theta_{F, T}<\pi/2$, together with
 \eqref{localstability.thm.eq1} and \eqref{Tstable.condition}, implies the existence of positive constants $A_1, B_1$ such that
\begin{equation}\label{positivity.condition}
A_1 \|Tv\|^2 \le  \langle F'(u)v, Tv\rangle\le  B_1 \|Tv\|^2, \ u,v\in {\mathbf H}_1.
\end{equation}
The converse can be proved to be true too. Thus $\beta_{F, T}<\sqrt{2}$ if and only if $S:=T^*F$ is strictly monotonic.
Here  a bounded map $S$ on a Hilbert space ${\mathbf H}$ is said to be {\em strictly monotonic} \cite{nonlinearbook} if
 there exist positive constants $m$ and $M$ such that
  \begin{equation*}
  m \|u-v\|^2\le \langle u-v, S(u)-S(v)\rangle \le M \|u-v\|^2 \ {\rm for \ all} \ u,v\in {\mathbf H}.
  \end{equation*}
  As an application of the above equivalence,  
  Theorem
\ref{frame.sufficientcondition.hilbert} can  be reformulated as follows.

\begin{thm} \label{frame.sufficientcondition.hilberts2}
Let  ${\mathbf H}_1$ and ${\mathbf H}_2$ be Hilbert spaces, and
let
$F:{\mathbf H}_1\to {\mathbf H}_2$
 be a continuously differentiable map with its derivative   having the uniform stability property
\eqref {localstability.thm.eq1}.
If there exists a linear operator $T\in {\mathcal B}({\mathbf H}_1, {\mathbf H}_2)$ satisfying \eqref{Tstable.condition} and
\eqref{positivity.condition},
then $F$ is a bi-Lipschitz map.
\end{thm}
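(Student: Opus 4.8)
The plan is to reduce this statement to Theorem~\ref{frame.sufficientcondition.hilbert}, by showing that the positivity hypothesis \eqref{positivity.condition}, together with \eqref{localstability.thm.eq1} and \eqref{Tstable.condition}, already forces the geometric condition $\beta_{F,T}<\sqrt 2$; the bi-Lipschitz property of $F$ then follows immediately.

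First I would invoke the Hilbert-space identity used repeatedly in this appendix: for $u\in{\mathbf H}_1$ and $v\neq 0$ the vectors $a=F'(u)v$ and $b=Tv$ are both nonzero (by the lower bound in \eqref{localstability.thm.eq1} and by \eqref{Tstable.condition} respectively), and
\[
\Big\|\frac{a}{\|a\|}-\frac{b}{\|b\|}\Big\|^2=2\Big(1-\frac{\langle a,b\rangle}{\|a\|\,\|b\|}\Big).
\]
Taking the supremum over all such $u$ and $v$ gives $\beta_{F,T}^2=2(1-\kappa)$, where $\kappa:=\inf_{u,\,v\ne 0}\langle F'(u)v,Tv\rangle/(\|F'(u)v\|\,\|Tv\|)$, so the whole matter reduces to proving $\kappa>0$.

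Next I would estimate $\kappa$ from below by combining the three hypotheses. By \eqref{positivity.condition}, $\langle F'(u)v,Tv\rangle\ge A_1\|Tv\|^2$; by the upper bound in \eqref{localstability.thm.eq1}, $\|F'(u)v\|\le B\|v\|$; and by \eqref{Tstable.condition}, $\|Tv\|\ge c\|v\|$ with $c:=\inf_{0\ne y}\|Ty\|/\|y\|>0$. Hence
\[
\frac{\langle F'(u)v,Tv\rangle}{\|F'(u)v\|\,\|Tv\|}\ \ge\ \frac{A_1\|Tv\|}{B\|v\|}\ \ge\ \frac{A_1 c}{B}\ >\ 0,
\]
so $\kappa\ge A_1 c/B>0$ and therefore $\beta_{F,T}^2=2(1-\kappa)<2$, which is precisely \eqref{frame.sufficientcondition.hilbert.eq1}. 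Theorem~\ref{frame.sufficientcondition.hilbert} then gives that $F$ is bi-Lipschitz. (Alternatively, one can bypass $\beta_{F,T}$ and mimic the proof of Theorem~\ref{frame.sufficientcondition.hilbert} directly: from $\langle F(u+v)-F(u),Tv\rangle=\int_0^1\langle F'(u+tv)v,Tv\rangle\,dt\ge A_1\|Tv\|^2$ one gets $\|F(u+v)-F(u)\|\ge A_1\|Tv\|\ge A_1 c\|v\|$, while $\|F(u+v)-F(u)\|\le\int_0^1\|F'(u+tv)v\|\,dt\le B\|v\|$.)

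I do not anticipate a serious obstacle here; the only point requiring care is bookkeeping, namely that \eqref{positivity.condition} controls $\langle F'(u)v,Tv\rangle$ against $\|Tv\|^2$, whereas $\beta_{F,T}$ is normalized by $\|F'(u)v\|\,\|Tv\|$, so one must bring in the uniform upper bound $\sup_x\|F'(x)\|\le B$ from \eqref{localstability.thm.eq1} and the lower bound on $T$ from \eqref{Tstable.condition} in order to pass between the two normalizations. It is also worth remarking that \eqref{positivity.condition} is exactly the assertion that $S=T^{*}F$ is strictly monotonic, so that this theorem is a verbatim reformulation of the preceding one via the equivalence recorded in the remark above.
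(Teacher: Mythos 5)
Your proposal is correct and follows essentially the same route as the paper, which obtains this theorem as a reformulation of Theorem~\ref{frame.sufficientcondition.hilbert} via the equivalence between \eqref{positivity.condition} and $\beta_{F,T}<\sqrt{2}$ (the paper only asserts the converse direction of that equivalence, and your computation $\beta_{F,T}^2=2(1-\kappa)$ with $\kappa\ge A_1c/B>0$ supplies exactly the missing detail). Your parenthetical direct argument is likewise sound and mirrors the paper's proof of Theorem~\ref{frame.sufficientcondition.hilbert} with the constant $\tfrac{2-\beta_{F,T}^2}{2}A$ replaced by $A_1c$.
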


From Theorem \ref{frame.sufficientcondition.hilberts2} we obtain the following result similar to
 the one in  Corollary \ref{linearapproximationcondition.cor}.

\begin{cor} \label{linearapproximationcondition.cor2}
Let  ${\mathbf H}_1, {\mathbf H}_2$
and
$F$ be as in Theorem \ref{frame.sufficientcondition.hilberts2}.
If there exists a  bounded below linear operator $T\in {\mathcal B}({\mathbf H}_1, {\mathbf H}_2)$
with $\delta_{F,T}<\sqrt{2}-1$,
then $F$ is a bi-Lipschitz map.
\end{cor}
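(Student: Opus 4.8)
The plan is to derive the statement from Theorem~\ref{frame.sufficientcondition.hilbert} (equivalently Theorem~\ref{frame.sufficientcondition.hilberts2}) by showing that the smallness hypothesis $\delta_{F,T}<\sqrt{2}-1$ forces the geometric condition $\beta_{F,T}<\sqrt{2}$ appearing in \eqref{frame.sufficientcondition.hilbert.eq1}. Since $T$ is bounded below, $Ty\ne 0$ whenever $y\ne 0$, so the quantities $\delta_{F,T}$ in \eqref{gammaft.def} and $\beta_{F,T}$ in \eqref{linearapproximation.center} are meaningful; moreover, from $\delta_{F,T}<\sqrt{2}-1<1$ together with the second expression for $\delta_{F,T}$ in \eqref{gammaft.def} we get $\|F'(x)y-Ty\|\le \delta_{F,T}\|Ty\|$ and hence $\|F'(x)y\|\ge (1-\delta_{F,T})\|Ty\|>0$ for all $x\in {\mathbf H}_1$ and $y\ne 0$.

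First I would recall the elementary estimate already indicated in the discussion preceding Corollary~\ref{linearapproximationcondition.cor}. Writing, for $x\in{\mathbf H}_1$ and $y\ne 0$,
\[
\frac{F'(x)y}{\|F'(x)y\|}-\frac{Ty}{\|Ty\|}
=\frac{F'(x)y-Ty}{\|F'(x)y\|}+\frac{Ty}{\|Ty\|}\cdot\frac{\|Ty\|-\|F'(x)y\|}{\|F'(x)y\|},
\]
the triangle inequality and the reverse triangle inequality give
\[
\Big\|\frac{F'(x)y}{\|F'(x)y\|}-\frac{Ty}{\|Ty\|}\Big\|
\le \frac{2\|F'(x)y-Ty\|}{\|F'(x)y\|}
\le \frac{2\delta_{F,T}\|Ty\|}{(1-\delta_{F,T})\|Ty\|}
=\frac{2\delta_{F,T}}{1-\delta_{F,T}},
\]
using the bounds from the previous paragraph. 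Taking the supremum over $x\in{\mathbf H}_1$ and over $y\ne 0$ yields $\beta_{F,T}\le 2\delta_{F,T}/(1-\delta_{F,T})$.

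Next I would observe that $t\mapsto 2t/(1-t)$ is strictly increasing on $[0,1)$ and that its value at $t=\sqrt{2}-1$ is $\frac{2(\sqrt{2}-1)}{2-\sqrt{2}}=\frac{2(\sqrt{2}-1)}{\sqrt{2}(\sqrt{2}-1)}=\sqrt{2}$. Hence $\delta_{F,T}<\sqrt{2}-1$ implies $\beta_{F,T}\le 2\delta_{F,T}/(1-\delta_{F,T})<\sqrt{2}$, which is precisely the condition \eqref{frame.sufficientcondition.hilbert.eq1}. Since $F$ is continuously differentiable with $F'$ uniformly stable and $T\in{\mathcal B}({\mathbf H}_1,{\mathbf H}_2)$ satisfies \eqref{Tstable.condition}, Theorem~\ref{frame.sufficientcondition.hilbert} then applies and shows that $F$ is a bi-Lipschitz map.

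There is no genuine obstacle here: the argument is a short chain of inequalities together with the arithmetic verification that $\sqrt{2}-1$ is exactly the value of $\delta$ at which $2\delta/(1-\delta)$ reaches $\sqrt{2}$. The only point that requires a little care is the bookkeeping when passing from $\delta_{F,T}$, which compares the error vector $F'(x)y-Ty$ with $\|Ty\|$, to $\beta_{F,T}$, which compares the \emph{normalized} directions; in particular one must keep the denominator $\|F'(x)y\|$ bounded below by $(1-\delta_{F,T})\|Ty\|$ so the estimate does not degenerate.
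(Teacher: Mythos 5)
Your argument is correct and is essentially the proof the paper intends: the key bound $\beta_{F,T}\le 2\delta_{F,T}/(1-\delta_{F,T})$ is exactly the displayed inequality the paper establishes just before Corollary~\ref{linearapproximationcondition.cor}, and the threshold $\sqrt{2}-1$ is precisely the value at which this bound reaches the Hilbert-space constant $\sqrt{2}$ of \eqref{frame.sufficientcondition.hilbert.eq1}, so Theorem~\ref{frame.sufficientcondition.hilbert} (equivalently Theorem~\ref{frame.sufficientcondition.hilberts2}) applies. Your care in bounding $\|F'(x)y\|\ge(1-\delta_{F,T})\|Ty\|$ before normalizing is exactly the point that makes the estimate go through.
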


Given a differentiable map $F$, it is quite technical in general to  construct linear  operator $T$ satisfying \eqref{Tstable.condition} and
\eqref{betaft.condition} in Banach space setting (respectively \eqref{Tstable.condition} and \eqref{positivity.condition} in Hilbert space setting). A conventional selection is that $T=F'(x_0)$ for some $x_0\in {\mathbf B}_1$, but such a selection is not always
favorable.
Let $\Phi=(\phi_\lambda)_{\lambda\in \Lambda}$
be impulse response vector with its entry $\phi_\lambda$ being the impulse response
of the signal generating device at the innovation position $\lambda\in \Lambda$, and
$\Psi=(\psi_\gamma)_{\gamma\in \Gamma}$
be sampling  functional vector with entry $\psi_\gamma$  reflecting the characteristics of the acquisition device
at the sampling position $\gamma\in \Gamma$.
In order to consider bi-Lipschitz property of the nonlinear sampling map
\begin{equation*}
S_{f, \Phi, \Psi}: \ell^2(\Lambda)\ni x \longmapsto x^T\Phi \overset{\rm companding } \longmapsto
f(x^T\Phi)
\overset{\rm sampling} \longmapsto
\langle f(x^T\Phi), \Psi\rangle \in \ell^2(\Gamma)
\end{equation*}
related to instantaneous
companding $h(t)\longmapsto f(h(t))$,  a linear operator
 $$T:= A_{\Phi, \Phi} (A_{\Phi, \Psi} (A_{\Psi, \Psi})^{-1} A_{\Psi, \Phi})^{-1} A_{\Phi, \Psi} (A_{\Psi, \Psi})^{-1}$$
satisfying \eqref{Tstable.condition} and
\eqref{positivity.condition} is implicitly introduced in \cite{sunaicm13},
 where $$A_{\Phi, \Psi}=(\langle \phi_\lambda, \psi_\gamma\rangle)_{\lambda\in \Lambda, \gamma\in \Gamma}$$
is the inter-correction matrix between $\Phi$ and $\Psi$.

\section{Sparse approximation triple} 
\label{sat.appendix}

In this appendix, we prove Propositions   \ref{greedy.prop} and
\ref{aA.prop}, and conclude it with a remark on the greedy algorithm \eqref{greedyalgorithm.m}.

\begin{proof}[Proof of Proposition \ref{greedy.prop}]
The convergence of $x_{\mathbf A, \mathbf M}^k, k\ge 0$, follows from
$$\sum_{k=0}^K\|x_{\mathbf A, \mathbf M}^{k+1}-x_{\mathbf A, \mathbf M}^k\|_{\mathbf M}=
\|x\|_{\mathbf M}-\|x-x_{\mathbf A, \mathbf M}^{K+1}\|_{\mathbf M}\le \|x\|_{\mathbf M}, \ K\ge 0,$$
 by the norm splitting property \eqref{optimization.thm1.eq3}.
Denote by $x_{\mathbf A, \mathbf M}^\infty\in {\mathbf M}$ the limit  of $x_{\mathbf A, \mathbf M}^{k}, k\ge 0$. Then
the limit
$x_{\mathbf A, \mathbf M}^\infty$ satisfies the following consistency condition:
\begin{equation}\label{consistentcondition}
\langle x_{\mathbf A, \mathbf M}^\infty, y\rangle=\langle x, y\rangle
\end{equation}
 for all $y\in {\mathbf A}$.
The above consistency condition holds as
 $0={\rm argmin}_{\hat x\in {\mathbf A}} \|x-x_{\mathbf A, \mathbf M}^\infty-\hat x\|_{\mathbf M}$, which together with
 the norm-splitting property \eqref{bestapproximation.hm} in ${\mathbf H}_1$
 implies that the projection of $x-x_{\mathbf A, \mathbf M}^\infty$ onto ${\mathbf A}_i$ are zero for all $i\in I$.
 From the consistency condition \eqref{consistentcondition}, we conclude that
 \eqref{consistentcondition} hold for all $y\in k{\mathbf A}, k\ge 0$, and hence for
 all $y$ in
 the closure of  $\cup_{k\ge 0} k{\mathbf A}$.
 This  together with the sparse density property
 of the sparse approximation triple
 $({\bf A}, {\bf M}, {\bf H}_1)$
 proves the convergence of
 $x_{\mathbf A, \mathbf M}^k, k\ge 0$, to $x\in {\mathbf M}$.
\end{proof}

\begin{proof}[Proof of Proposition \ref{aA.prop}]
Take $0\ne x\in {\mathbf M}$ and let $x_{\mathbf A, \mathbf M}^k, k\ge 0$,
be as in the greedy algorithm \eqref{greedyalgorithm.m}.
Write $u_k=x_{\mathbf A, \mathbf M}^{k+1}-x_{\mathbf A, \mathbf M}^{k}, k\ge 0$.
Thus
$$u_k={\rm argmin}_{\hat x\in {\mathbf A}} \|x-x_{{\mathbf A},{\mathbf M}}^k-\hat x\|_{\mathbf M}=
{\rm argmin}_{\hat x\in {\mathbf A}} \|(x-x_{{\mathbf A},{\mathbf M}}^{k-1})-u_{k-1}-\hat x\|_{\mathbf M}
\in {\mathbf A},$$
 and
$x-x_{\mathbf A, \mathbf M}=\sum_{k\ge 1} u_k$ by Proposition \ref{greedy.prop}.
This together with \eqref{optimization.thm1.eq3} and \eqref{secondapproximationestimate} implies that
\begin{equation*}
\|x-x_{\mathbf A, \mathbf M}\|_{{\mathbf H}_1}\le \sum_{k\ge 1} \|u_k\|_{{\mathbf H}_1}\le
\sqrt{a_{\mathbf A}} \sum_{k\ge 1} \|u_{k-1}\|_{\mathbf M}= \sqrt{a_{\mathbf A}}\|x\|_{\mathbf M}.
\end{equation*}
This completes the proof.
\end{proof}

Given a sparse approximation triple $({\bf A}, {\bf M}, {\bf H}_1)$, we say that
  $x\in {\mathbf M}$ is {\em compressible}  (\cite{bdieee11, 
  emieee09,  foucartbook, rauhut15}) if
 $\{\sigma_{k{\mathbf A}, {\bf M}}(x)\}_{k=1}^\infty$ having rapid decay,
such as
 $$\sigma_{k{\mathbf A}, {\bf M}}(x)\le C k^{-\alpha}\ {\rm for \ some} \ C,\ \alpha>0,$$
where
 $\sigma_{k{\mathbf A}, {\mathbf M}}(x)$
is  the best approximation error of $x$ from $k{\mathbf A}$,
\begin{equation}\label{sigmaam.def}
\sigma_{k{\mathbf A}, {\mathbf M}}(x)  
:=\inf_{\hat x\in k{\mathbf A}}\|\hat x-x\|_{\mathbf M}, \ k\ge 1.
\end{equation}

For the sequence $x_{\mathbf A, \mathbf M}^k, k\ge 0$, in the greedy algorithm
\eqref{greedyalgorithm.m},
 we have
\begin{equation}\label{greedyalgorithm.errorm}
\| x_{\mathbf A, \mathbf M}^k-x\|_{\mathbf M}\ge \sigma_{k{\mathbf A}, \mathbf M}(x),  \end{equation}
as $ x_{\mathbf A, \mathbf M}^k\in k{\mathbf A}, k\ge 1.$
The  above inequality  becomes an  equality in the classical sparse recovery setting.
We do not know  whether and when the greedy algorithm \eqref{greedyalgorithm.m} is suboptimal,
i.e., there exists a  positive constant $C$ such that
\begin{equation}
\| x_{\mathbf A, \mathbf M}^k-x\|_{\mathbf M} \le C \sigma_{k{\mathbf A}, \mathbf M}(x),\  x\in M,
\end{equation}
even for compressible signals.
The reader may refer to  \cite{greedybook} for the study of various greedy algorithms.

\end{appendix}

\bibliographystyle{siam}

\begin{thebibliography}{999}

\bibitem{ast01}    A. Aldroubi, Q. Sun and W.-S. Tang,
{{$p$}-frames and shift invariant subspaces of {$L^p$}},
{\em J. Fourier Anal. Appl.}, {\bf 7}(2001), 1--21.


\bibitem{balan09} R. Balan, B. Bodmann, P. Casazza and D. Edidin,
{Painless reconstruction from magnitudes of frame coefficients},
{\em J. Fourier Anal. Appl.},
{\bf 15}(2009),
488--501.

\bibitem{bce06} {R. Balan, P. G. Casazza and D. Edidin},
{On signal reconstruction without noisy phase},
{\em Appl. Comput. Harmon. Anal.},
{\bf 20}(2006),
345--356.

\bibitem{bandeira.acha14}
{A. S. Bandeira, J. Cahill, D. G. Mixon  and A. A. Nelson},
{Saving phase: injectivity and stability for phase retrieval},
{\em Appl. Comput. Harmon. Anal.},
{\bf 37}(2014),
106--125.


\bibitem{beck2013} {A. Beck and Y. C. Eldar},
{Sparsity constrained nonlinear optimization: optimality conditions and algorithms},
{\em SIAM J. Optim.},
{\bf 23}(2013),
1480--1509.

\bibitem{blackadarcuntz91}
{B. Blackadar and J. Cuntz},
{Differential {B}anach algebra norms and smooth subalgebras of ${C}^*$-algebras},
{\em J. Operator Theory},
{\bf 26}(1991),
255--282.


\bibitem{bdieee09}
{T. Blumensath and M. E. Davies},
{Sampling theorems for signals from the union of finite-dimensional linear subspaces},
{\em IEEE Trans. Inform. Theory},
{\bf 55}(2009),
{1872--1882}.

\bibitem{bdieee11}
{T. Blumensath and M. E. Davies},
{Sampling  and reconstructing signals from a union of linear subspaces},
{\em IEEE Trans. Inform. Theory},
{\bf 57}(2011),
4660--4671.

\bibitem{bdieee13}
{T. Blumensath},
{Compressed sensing with nonlinear observations and related nonlinear optimization problems},
{\em IEEE Trans. Inform. Theory},
{\bf 59}(2013),
3466--3474.

\bibitem{borwein89}
{J. M. Borwein and S. Fitzpatrick},
{Existence of nearest points in {B}anach spaces},
{\em Can. J. Math.},
{\bf 61}(1989),
{702--720}.

\bibitem{czacha13}
{T.  Cai and A. Zhang},
{Sharp {RIP} bound for sparse signal and low-rank matrix recovery},
{\em Appl. Comput. Harmon. Anal.},
{\bf 35}(2013), 74--93.

 \bibitem{crt06}
{E. J. Candes, J. K. Romberg and T. Tao},
 {Stable signal recovery from incomplete and inaccurate measurements},
 {\em Comm. Pure Appl. Math.},
 {\bf 59}(2006), 1207--1223.
 
 \bibitem{csvappear}
{E. J. Candes, T. Strohmer and V. Voroninski},
{PhaseLift: Exact and stable signal recovery from magnitude measurements via convex programming},
{\em Comm. Pure Appl. Math.},
{\bf 66}(2013),
1241--1274.

\bibitem{ct05}
{E. J. Candes and T. Tao},
{Decoding by linear programming},
{\em IEEE Trans. Inform. Theory},
{\bf 51}(2005),
4203--4215.

\bibitem{chl99}
{P. G. Casazza, D. Han and D. R. Larson}, {Frames for {B}anach spaces},
In {\em The Functional and Harmonic Analysis of Wavelets and Frames
              ({S}an {A}ntonio, {TX}, 1999)},
{Contemp. Math.},
{\bf 247},
 {Amer. Math. Soc.},
 {Providence, RI},
     {1999}.  pp. 149--182.

\bibitem{ckl08}
{P. G. Casazza and G. Kutyniok and S. Li},
{Fusion frames and distributed processing},
{\em Appl. Comput. Harmon. Anal.},
{\bf 25}(2008),
{114--132}.

\bibitem{Christensen03}
O. Christensen
{\em An {I}ntroduction to {F}rames and {R}iesz {B}ases},
{Birkh\"auser Boston Inc.}, {2003}.

\bibitem{christensenstoeva03}
{O. Christensen and D. T. Stoeva},
{$p$-frames in separable {B}anach spaces},
{\em Adv. Comput. Math.},
{\bf 18}(2003), {117--126}.

\bibitem{christensen05}
{O. Christensen and T. Strohmer},
{The finite section method and problems in frame theory},
{\em J. Approx. Theory},
{\bf 133}(2005),
{221--237}.

\bibitem{dieu69}
J. Dieudonn\'e
{\em Foundations of Modern Analysis},
Academic Press, 1969.

\bibitem{dem08}
{T. G. Dvorkind, Y. C. Eldar and E. Matusiak},
{Nonlinear and nonideal sampling: theory
and methods},
{\em IEEE Trans. Signal Process.},
{\bf 56}(2008),
{5874--5890}.

\bibitem{ehler14}
{M. Ehler, M. Fornasier and J. Sigl},
{Quasi-linear compressed sensing},
{\em Multiscale Model. Simul.},
{\bf 12}(2014), {725--754}.

\bibitem{eldarbook2012}
{Y. C. Eldar and G. Kutyniok},
{\em Compressed Sensing: Theory and Applications},
{Cambridge University Press},
2012.


\bibitem{eldar.acha14}
{Y. C. Eldar and S. Mendelson},
{Phase retrieval: stability and recovery guarantees},
{\em Appl. Comput.  Harmon. Anal},
{\bf 36}(2014),
473--494.

\bibitem{emieee09}
{Y. C. Eldar and M. Mishali},
{Robust recovery of signals from a structured union of subspaces},
{\em IEEE Trans. Inform. Theory},
{\bf 55}(2009),
5302--5316.


\bibitem{foucartbook}
{S. Foucart and H. Rauhut},
{\em A Mathematical Introduction to Compressive Sensing},
{Springer},
2013.

\bibitem{foucart.acha09}
{S. Foucart and M.-J. Lai},
{Sparsest solutions of underdetermined linear systems via $\ell_q$-minimization for
$0< q\le 1$},
{\em Appl. Comput. Harmon. Anal.},
{\bf 26}(2009),
395--407.


\bibitem{gevirtz82}
{J. Gevirtz},
{Injectivity in {B}anach spaces and the {M}azur-{U}lam theorem on isometries},
{\em Trans. Amer. Math. Soc.},
{\bf 274}(1982),
307--318.


\bibitem{grochenigsurvey}
{K. Gr\"ochenig},
 {Wiener's lemma: theme and variations, an introduction to spectral invariance and its applications},
In {\em Four Short Courses on Harmonic Analysis:
Wavelets, Frames, Time-Frequency Methods, and Applications to Signal and Image Analysis}
edited by 
{P. Massopust and B. Forster}, {Birkh\"auser}, {2010}.
pp. 175--234.



\bibitem{grochenigklotz10}
{K. Gr\"ochenig and A. Klotz},
{Noncommutative approximation: inverse-closed subalgebras and off-diagonal decay of matrices},
{\em Constr. Approx.},
{\bf 32}(2010),
429--466.

\bibitem{gltams06}
{K. Gr\"ochenig and M. Leinert},
{Symmetry of matrix
algebras and symbolic calculus for infinite matrices},
{\em Trans. Amer. Math. Soc.},
{\bf 358}(2006),
2695--2711.

\bibitem{halljin10}
{P. Hall and J. Jin},
{Innovated higher criticism for detecting sparse signals in correlated noise},
{\em Ann. Statist.},
{\bf 38}(2010),
{1686--1732}.


\bibitem{jaffard90}
{S. Jaffard},
{Properi\'et\'es des matrices bien
localis\'ees pr\'es de leur diagonale et quelques applications},
{\em Ann. Inst. Henri Poincar\'e},
{\bf 7}(1990),
{461--476}.

\bibitem{junpark96}
{K.-L. Jun and D.-W. Park},
{Almost Linearity of $\epsilon$-Bi-{L}ipschitz Maps between Real {B}anach Spaces},
{\em Proc. Amer. Math. Soc.},
{\bf 124}({1996}),
{217--225}.

\bibitem{kissin94}
{E. Kissin and V. S. Shulman},
{Differential properties of some dense subalgebras of ${C}^*$-algebras},
{\em Proc. Edinburgh Math. Soc.},
{\bf 37}({1994}), {399--422}.


\bibitem{Krishtal11}
{I. Krishtal},
{Wiener's lemma: pictures at exhibition},
{\em Revista Union Matematica Argentina},
{\bf 52}({2011}),
{61--79}.

\bibitem{kst95}
{M. V. Klibanov, P.E. Sacks and A.V. Tikhonravov},
{The phase retrieval problem},
{\em Inverse problems},
{\bf 11}(1995),
{1--28}.

\bibitem{lau79}
{K.-S. Lau},
{On a sufficient condition for proximity},
{\em Trans. Amer. Math. Soc.},
{\bf 251}({1979}),
{343--356}.


\bibitem{ldieee08}
{Y. M. Lu and M. N. Do},
{A theory for sampling signals from a union of subspaces},
{\em IEEE Trans. Signal Processing},
{\bf 56}(2008),
{2334--2345}.

\bibitem{mjieee08}
{N. Motee and A. Jadbabaie},
{Optimal control of spatially distributed systems},
{\em IEEE Trans.  Automatic Control},
{\bf 53}({2008}),
{1616--1629}.

\bibitem{Nashed10}
{M. Z. Nashed and Q. Sun},
{Sampling and reconstruction of signals in a reproducing kernel subspace of $L^p(R^d)$},
{\em J. Funct. Anal.},
{\bf 258}({2010}),
{2422--2452}.

\bibitem{rauhut15}
{H. Rauhut and R. Ward},
{Interpolation via weighted l1 minimization},
{\em Appl. Comp. Harmon.  Anal.},
{accepted}.


\bibitem{rieffel10} M. A. Rieffel,
{Leibniz seminorms for ``matrix algebras converge to the sphere"},
In {\em Quanta of Maths, Volume 11 of Clay Math. Proc.},
  { Amer. Math. Soc.},
   {2010}, pp. 543--578.

\bibitem{shincjfa09}
{C. E. Shin and Q. Sun},
{Stability of localized operators},
{\em J. Funct. Anal.},
{\bf 256}({2009}),
{2417--2439}.

\bibitem{suncasp05}
{Q. Sun},
{Wiener's lemma for infinite matrices with polynomial off-diagonal decay},
{\em C. Acad. Sci. Paris Ser I},
{\bf 340}({2005}),
{567--570}.

\bibitem{sunsiam06}
{Q. Sun},
{Non-uniform sampling and reconstruction  for signals with  finite rate of
innovations},
{\em SIAM J. Math. Anal.},
{\bf 38}({2006/07}),
{1389--1422}.

\bibitem{suntams07}
{Q. Sun},
{Wiener's lemma for infinite matrices},
{\em Trans. Amer. Math. Soc.},
{\bf 359}({2007}),
{3099--3123}.



\bibitem{sunaicm13}
{Q. Sun},
{Localized nonlinear functional equations
and two  sampling problems in signal processing},
{\em Adv. Comput. Math.},
{\bf 40}({2014}),
{415--458}.


\bibitem{sunacha08}
{Q. Sun},
{ Wiener's lemma for localized integral operators},
{\em Appl. Comput. Harmon. Anal.},
{\bf 25}({2008}),
{148--167}.

\bibitem{sunca11}
{Q. Sun},
{Wiener's lemma for infinite matrices {II}},
{\em Constr. Approx.},
{\bf 34}({2011}),
{209--235}.

\bibitem{sunieee11}
{Q. Sun},
 {Sparse approximation property and stable recovery of sparse signals from noisy measurements},
{\em IEEE Trans. Signal Processing},
{\bf 10}({2011}),
{5086--5090}.

\bibitem{sunacha12}
{Q. Sun},
{Recovery of sparsest signals via $\ell^q$-minimization},
{\em Appl. Comput. Harmon. Anal.},
{\bf 32}({2012}),
{329--341}.

\bibitem{wsun06}
{W. Sun},
{G-frames and g-{R}iesz bases},
{\em J. Math. Anal. Appl.},
{\bf 322}({2006}),
{437--452}.

\bibitem{greedybook}
{V. Temlyakov},
{\em Greedy Approximation},
{Cambridge University Press},
{2011}.



\bibitem{nonlinearbook}
{E. Zeidler and P. R. Wadsack},
{\em Nonlinear Functional Analysis and its Applications, Vol. 1},
{Springer-Verlag},
{1998}.

\end{thebibliography}

\end{document}